\def\llncs{0}
\def\fullpage{1}
\def\anonymous{0}
\def\authnote{1}
\def\notxfont{0}
\def\submission{0}

\ifnum\submission=1
\def\llncs{1}
\fi

\ifnum\llncs=1 	\documentclass[envcountsect,a4paper,runningheads,10pt]{llncs}
\else
	\documentclass[letterpaper,hmargin=1.0in,vmargin=1.0in,11pt]{article}
			\ifnum\fullpage=1
		\usepackage{fullpage}
		\fi
\fi

\usepackage[%
  colorlinks=true,
  citecolor=blue,
  pagebackref=true
]{hyperref}

\usepackage{amsmath, amsfonts, amssymb, mathtools,amscd}

\usepackage{amsthm}

\usepackage{lmodern}
\usepackage[T1]{fontenc}
\usepackage[utf8]{inputenc}

\usepackage{arydshln} 
\usepackage{url}
\usepackage{ifthen}
\usepackage{bm}
\usepackage{multirow}
\usepackage[dvips]{graphicx}
\usepackage[usenames]{color}
\usepackage{xcolor,colortbl} 
\usepackage{threeparttable}
\usepackage{comment}
\usepackage{paralist,verbatim}
\usepackage{cases}
\usepackage{booktabs}
\usepackage{braket}
\usepackage{cancel} 
\usepackage{ascmac} 
\usepackage{framed}
\usepackage{authblk}
\usepackage{pifont}
\usepackage{qcircuit}
\usepackage{tikz}
\usetikzlibrary{arrows.meta}
\definecolor{darkblue}{rgb}{0,0,0.6}
\definecolor{darkgreen}{rgb}{0,0.5,0}
\definecolor{maroon}{rgb}{0.5,0.1,0.1}
\definecolor{dpurple}{rgb}{0.2,0,0.65}

\usepackage[capitalise,noabbrev]{cleveref}
\usepackage{aliascnt}
\usepackage[absolute]{textpos}
\usepackage[final]{microtype}
\usepackage[absolute]{textpos}
\usepackage{everypage}
\DeclareMathAlphabet{\mathpzc}{OT1}{pzc}{m}{it}

\usepackage{algorithmic}
\usepackage{algorithm}
\usepackage{here}

\newtheoremstyle{thicktheorem}%
{\topsep}
{\topsep}
{\itshape}{}%
{\bfseries}%
{.}
{ }%
{\thmname{#1}\thmnumber{ #2}%
		\thmnote{ (#3)}%
}

\newtheoremstyle{remark}
{\topsep}
{\topsep}
	{}
	{}
	{}
	{.}
	{ }
	{\textit{\thmname{#1}}\thmnumber{ #2}
			\thmnote{ (#3)}%
	}

\ifnum\llncs=0
	\theoremstyle{thicktheorem}
	\newtheorem{theorem}{Theorem}[section]
	\newtheorem{lemma}[theorem]{Lemma}
	\newtheorem{corollary}[theorem]{Corollary}

	\newaliascnt{definition}{theorem}
    \newtheorem{definition}[definition]{Definition}
    \aliascntresetthe{definition}

	\theoremstyle{remark}
	
	\newtheorem{remark}[theorem]{Remark}

\else
\fi

\Crefname{MyClaim}{Claim}{Claims}

	\crefname{theorem}{Theorem}{Theorems}
	\crefname{assumption}{Assumption}{Assumptions}
	\crefname{construction}{Construction}{Constructions}
	\crefname{corollary}{Corollary}{Corollaries}
	\crefname{conjecture}{Conjecture}{Conjectures}
	\crefname{definition}{Definition}{Definitions}
	\crefname{example}{Example}{Examples}
	\crefname{experiment}{Experiment}{Experiments}
	\crefname{counterexample}{Counterexample}{Counterexamples}
	\crefname{lemma}{Lemma}{Lemmata}
	\crefname{observation}{Observation}{Observations}
	\crefname{proposition}{Proposition}{Propositions}
	\crefname{remark}{Remark}{Remarks}
	\crefname{claim}{Claim}{Claims}
	\crefname{fact}{Fact}{Facts}
	\crefname{note}{Note}{Notes}

\ifnum\llncs=1
 \crefname{appendix}{App.}{Appendices}
 \crefname{section}{Sec.}{Sections}
\else
\fi

\ifnum\llncs=1
\renewcommand*{\backref}[1]{}
\else
	\renewcommand*{\backref}[1]{(Cited on page~#1.)}
	\ifnum\notxfont=1
	\else
		\usepackage{newtxtext}
	\fi
\fi
\ifnum\authnote=0  
\newcommand{\mor}[1]{}
\newcommand{\minki}[1]{}
\newcommand{\takashi}[1]{}

\else
\newcommand{\shira}[1]{$\ll$\textsf{\color{purple} Yuki: { #1}}$\gg$}
\newcommand{\taiga}[1]{$\ll$\textsf{\color{red} Taiga: { #1}}$\gg$}

\fi

\newcommand{\Tr}{\mathrm{Tr}}

\newcommand{\SD}{\mathsf{SD}} 

\newcommand{\StateGen}{\mathsf{StateGen}}

\newcommand{\puzz}{\mathsf{puzz}}
\newcommand{\ans}{\mathsf{ans}}

\newcommand{\Samp}{\algo{Samp}}

\newcommand{\abs}[1]{|#1|}

\newcommand{\cA}{\mathcal{A}}

\newcommand{\cC}{\mathcal{C}}
\newcommand{\cD}{\mathcal{D}}
\newcommand{\cE}{\mathcal{E}}
\newcommand{\cF}{\mathcal{F}}
\newcommand{\cG}{\mathcal{G}}

\newcommand{\cI}{\mathcal{I}}

\newcommand{\cL}{\mathcal{L}}
\newcommand{\cM}{\mathcal{M}}
\newcommand{\cN}{\mathcal{N}}

\newcommand{\cP}{\mathcal{P}}
\newcommand{\cQ}{\mathcal{Q}}
\newcommand{\cR}{\mathcal{R}}

\def\makeuppercase#1{
\expandafter\newcommand\csname tl#1\endcsname{\widetilde{#1}}
}

\def\makelowercase#1{
\expandafter\newcommand\csname tl#1\endcsname{\widetilde{#1}}
}

\newcommand{\N}{\mathbb{N}}

\newcommand{\regA}{\textcolor{gray}{\mathsf{A}}}
\newcommand{\regB}{\textcolor{gray}{\mathsf{B}}}

\newcommand{\regR}{\textcolor{gray}{\mathsf{R}}}

\newcommand{\regX}{\textcolor{gray}{\mathsf{X}}}

\newcommand{\secp}{\lambda}

\newcommand{\param}{\mathsf{param}}

\newcommand*{\algo}[1]{\ensuremath{\mathsf{#1}}}

\newenvironment{boxfig}[2]{\begin{figure}[#1]\fbox{\begin{minipage}{0.97\linewidth}
                        \vspace{0.2em}
                        \makebox[0.025\linewidth]{}
                        \begin{minipage}{0.95\linewidth}
            {{
                        #2 }}
                        \end{minipage}
                        \vspace{0.2em}
                        \end{minipage}}}{\end{figure}}

\newcommand{\bit}{\{0,1\}}

\newcommand{\KeyGen}{\algo{KeyGen}}

\newcommand{\Ver}{\algo{Ver}}

\newcommand{\TD}{\algo{TD}}

\newcommand{\negl}{{\mathsf{negl}}}

\newcommand{\poly}{{\mathrm{poly}}}

\makeatletter
\DeclareRobustCommand
  \myvdots{\vbox{\baselineskip4\p@ \lineskiplimit\z@
    \hbox{.}\hbox{.}\hbox{.}}}
\makeatother
 
\title{Universal Inductive Inference of Quantum States}

\ifnum\anonymous=1
\ifnum\llncs=1
\author{\empty}\institute{\empty}
\else
\author{}
\fi
\else
\ifnum\llncs=1
\author{
Taiga Hiroka\inst{1} \and Min-Hsiu Hsieh\inst{1} \and Yuki Shirakawa\inst{2}
}
\institute{
 Foxconn, Taipei, Taiwan \and Yukawa Institute for Theoretical Physics, Kyoto University, Kyoto, Japan
}
\else
\author[1]{Taiga Hiroka%
\thanks{Email: \texttt{taiga.hirooka@foxconn.com}}}

\author[1]{Min-Hsiu Hsieh%
\thanks{Email: \texttt{min-hsiu.hsieh@foxconn.com}}}
\author[2,1]{Yuki Shirakawa%
\thanks{Email:
\texttt{yuki.shirakawa@yukawa.kyoto-u.ac.jp}}}
\affil[1]{{\small Hon Hai (Foxconn) Research Institute, Taipei, Taiwan}}
\affil[2]{{\small Yukawa Institute for Theoretical Physics, Kyoto University, Kyoto, Japan}}
\fi 
\fi

\date{}

\begin{document}

\maketitle

\begin{abstract}
Solomonoff's universal inductive inference [Inf. Control. 1964] is one of the most general frameworks for learning from sequential data and predicting future observations.
It provides prediction guarantees for any computable stochastic source. We ask whether an analogous universal theory of induction can be developed for quantum systems. 
To this end, we introduce \emph{universal quantum inductive inference}, a framework for learning and predicting quantum sources that may exhibit arbitrary correlations and entanglement across time. 
Given the outcomes of past measurements and the corresponding post-measurement quantum systems, the learner produces a joint state that predicts the next measurement outcome and post-measurement system while preserving correlations with the past systems.
We model a quantum source as a multipartite quantum state whose description is generated by an unknown $s$-bit program within a given time bound. 
The learner is required to produce a state that is $\epsilon$-close in trace distance to the target with probability at least $1-\delta$.
We establish an information-theoretic inference algorithm with round complexity $O(s\epsilon^{-2}\delta^{-1})$, with no additional dependence on the number of qubits received in each round or the time required to generate the source. 
We further prove a matching lower bound in the relevant parameter regime, even for classical sources.
As a second result, we construct a non-i.i.d. state tomography algorithm that outputs a classical description of the conditional state of the next system given past measurement outcomes. 
Our information-theoretic tomography algorithm achieves round complexity $O(s\min\{2^s,2^n\}\epsilon^{-2}\delta^{-1})$, where $n$ is the number of qubits received in each round. 
Previous works [Fawzi, Kueng, Markham, and Oufkir, Nat. Commun. 2024; Zambrano, 2026] have also established tomography guarantees in non-i.i.d. settings, but their target states are defined through averaging and therefore do not, in general, provide guarantees for predicting the next system from the preceding systems in their temporal order. 
Our results provide such prediction guarantees even in the presence of arbitrary correlations and entanglement across time.
Finally, we investigate the computational hardness of universal quantum inductive inference under quantum cryptographic assumptions. 
We show that the existence of (infinitely-often) one-way puzzles [Khurana and Tomer, STOC 2024] implies average-case hardness of quantum inductive inference with polynomial round complexity. 
Moreover, for a restricted variant in which the learner observes only classical measurement outcomes and predicts the conditional distribution of the next outcome, we obtain a complete characterization: average-case hardness is equivalent to the existence of infinitely-often one-way puzzles.
\end{abstract}

\newpage

 \setcounter{tocdepth}{2}
 \tableofcontents
 \newpage

\section{Introduction}\label{sec:introduction}

Learning from observations often involves inferring what comes next.
When observations are independent and identically distributed,
this reduces to learning a fixed distribution.
In many realistic settings, however, the source may change over time, and its outputs may depend on previous events.
A general theory of inference should therefore accommodate behavior whose structure is initially unknown.

In the classical setting, Solomonoff induction provides a general framework for this task~\cite{Sol60,Sol64,Sol64II}.
The learner receives a prefix of a sequence generated by an unknown source and infers the conditional distribution of the next symbol.
The source need not be i.i.d.\ or stationary.
For every computable source, Solomonoff induction eventually infers its future outputs accurately~\cite{Sol78}.
In particular, the learner need not know in advance whether the source is periodic or has some other structure.
This framework has in turn inspired computationally feasible approximations for general reinforcement learning~\cite{VNHUS11}. Solomonoff induction thus provides a general benchmark for inference
from classical sequential data.

We ask whether such inference is possible for quantum sources. Whereas a classical history can be read without being altered, obtaining classical outcomes from quantum systems can disturb their states and change their correlations with future systems.
Moreover, even after conditioning on these outcomes, the remaining quantum systems may still be entangled with future systems.
These features make quantum inductive inference fundamentally different from its classical counterpart: applying classical inductive inference only to measurement outcomes is insufficient to predict future quantum systems and their correlations with the past. 

One might expect existing results on non-i.i.d.\ quantum tomography
and certification to provide such an inference
guarantee~\cite{CR12,FKMO24,Zam26a,ZSK26,NZ26}.
However, these existing frameworks share a common limitation:
their guarantees concern targets defined through averaging, which
can erase temporal information needed to infer what comes next.
They do not require inferring the conditional state of the next system
from the preceding systems in their original temporal order.
Fawzi, Kueng, Markham, and Oufkir~\cite{FKMO24} use random
permutation together with a quantum de Finetti theorem to extend
i.i.d.\ learning guarantees to non-i.i.d.\ inputs.
Their guarantees concern a remaining test system after permutation
averaging, conditioned on the learner's output and calibration
information.
Zambrano~\cite{Zam26a} reconstructs the time-averaged state,
allowing each prepared state to depend on the preceding experimental
history.
Such averaging can erase information about temporal
correlations that is relevant to inferring the next system.
To see this common limitation, consider
two bit strings of the same even length, $010101\ldots01$ and
$101010\ldots10$, encoded in the computational basis.
These strings have opposite bits at every position, but their quantum
states become identical after permutation averaging.
Their time-averaged single-system states are also identical,
both equal to $I/2$.
Thus, neither averaged object distinguishes the two temporal patterns.
A learner that identifies the alternating pattern, however, can
infer the next bit exactly from the observed history.
This provides more information about the next output than the
estimate $I/2$.
More generally, we would like an inference guarantee that captures
such temporal regularities while also applying to sources whose
outputs are probabilistic.

We therefore study inference that retains the temporal correlations
of a quantum source.
The learner receives the classical outcomes of past measurements
together with the corresponding post-measurement quantum systems.
Its inferred state must reproduce the next measurement outcome and
the next post-measurement system, conditioned on the observed history.
We also require the inferred state to reproduce their correlations
with the past systems.
These correlations are necessary for inferring the statistics of
joint measurements involving both past and future systems.

It is not clear whether this task is possible even with unlimited
computation.
The learner receives only a single copy of past systems and no description of the target source. 
Our main question is the following:
\begin{quote}
\itshape
Can we formulate quantum inductive inference that is achievable
information-theoretically, with the required number of rounds
controlled by the description length of the source?
\end{quote}

In this work, we introduce such a formulation and give an affirmative
answer.
We consider quantum sources generated by an unknown process with
a bounded classical description length and a bounded preparation time.
We show that inference is possible even when the generated systems
are correlated or entangled.
Information-theoretic feasibility alone does not tell us whether
an efficient learner can infer every efficiently generatable source.
We therefore also study the computational hardness of this task
and its relation to quantum cryptography.
We describe the definition and our results below.

\subsection{Our Results}
We first formalize \emph{universal quantum inductive inference} (See also \cref{fig:quantum-induction}).
We model an unknown quantum source as a quantum state whose preparation circuit is generated by a resource-bounded Turing machine.
More precisely, a quantum source $\rho_{\regR_1...\regR_m}$ is an $mn$-qubit state on $m$ registers $(\regR_1,...,\regR_m)$, where each register $\regR_i$ consists of $n$ qubits, and the circuit description of $\rho_{\regR_1,...,\regR_m}$ is generated within time $t$ by a Turing machine $T$ with description length at most $s$.
We do not assume specific structures of $\rho_{\regR_1...\regR_m}$ and the registers may be arbitrarily entangled in general.

\begin{figure}[H]
  \centering
  \resizebox{0.85\linewidth}{!}{%
    \begin{tikzpicture}[
  x=1cm,y=1cm,font=\small,
  box/.style={draw=black!60,rounded corners=2pt,align=center,inner sep=8pt},
  flow/.style={-{Stealth[length=2mm]},semithick},
  note/.style={font=\footnotesize,text=black!65,align=center}
]
\node[font=\bfseries] at (2.3,3.15) {Inputs};
\node[box,text width=4.2cm,minimum height=1.15cm] (classical) at (2.3,1.65)
  {Past measurement results\\[3pt]$\mathcal I_1,x_1,\ldots,\mathcal I_{i-1},x_{i-1}$\\[6pt]
   Next measurement\\[3pt]$\cI_i$};
\node[box,text width=4.2cm,minimum height=1.1cm] (quantum) at (2.3,-0.9)
  {Post-measurement state\\on registers $(\regR_1,...,\regR_{i-1})$\\[4pt]$\rho_{\regR_1...\regR_{i-1}|h_{<i}}$};
\node[box,fill=black!6,text width=2cm,minimum height=1.25cm] (learner) at (7.0,0.3)
  {Learner $\mathcal L$};
\draw[flow] (classical.east) -- (5.25,1.65) |- (learner.160);
\draw[flow] (quantum.east) -- (5.25,-0.9) |- (learner.200);
\node[note] at (7.0,-0.8) {Parameters $n,s,t,\epsilon,\delta$};

\node[font=\bfseries] at (12.4,3.15) {Output};
\draw[rounded corners=2pt,black!60] (9.1,-1.25) rectangle (15.7,1.8);
\node at (12.4,1.2) {Joint state $\sigma_{\regR_1...\regR_i\regX_i}$};
\draw[black!25] (9.1,0.6) -- (15.7,0.6);
\draw[black!25] (11.3,-1.25) -- (11.3,0.6);
\draw[black!25] (13.5,-1.25) -- (13.5,0.6);
\node[align=center] at (10.2,0.1) {$\regR_1,...,\regR_{i-1}$};
\node[align=center] at (12.4,0.1) {$\regR_i$};
\node[align=center] at (14.6,0.1) {$\regX_i$};
\node[note] at (10.2,-0.65) {Given quantum\\registers};
\node[note] at (12.4,-0.65) {Next quantum\\register};
\node[note] at (14.6,-0.65) {Next classical\\outcome};
\draw[flow] (learner.east) -- (9.1,0.3);
\node[note] at (12.4,-1.8) {Conditioned on the observations $x_{<i}$};
\end{tikzpicture}%
  }
  \caption{
    Inputs and output of the learner $\mathcal{L}$ at round $i$.
    The instruments $\mathcal{I}_1,\ldots,\mathcal{I}_{i-1}$ and their outcomes $x_1,\ldots,x_{i-1}$ form the classical history $h_{<i}=(\cI_1,x_1,...,\cI_{i-1},x_{i-1})$.
    The learner receives $h_{<i}$, the description of the target measurement $\mathcal{I}_i$, and a single copy of the conditional post-measurement state $\rho_{\regR_1...\regR_{i-1}\mid h_{<i}}$.
    It outputs a joint state $\sigma$ on $\regR_1...\regR_i\regX_i$, where $\regX_i$ stores the next measurement outcome and $\regR_i$ is the next post-measurement quantum system.
    The output must approximate the corresponding joint state of the source conditioned on $h_{<i}$, including its correlations with the past systems.
  }
  \label{fig:quantum-induction}
\end{figure}

At round $i$, the preceding registers $\regR_1,\ldots,\regR_{i-1}$ have been measured using quantum instruments $\mathcal{I}_1,\ldots,\mathcal{I}_{i-1}$, producing classical outcomes $x_1,\ldots,x_{i-1}$.
Each instrument specifies both the classical outcome statistics and the corresponding post-measurement states.
We denote the resulting classical history by $h_{<i}:=(\mathcal{I}_1,x_1,\ldots,\mathcal{I}_{i-1},x_{i-1})$
and the remaining post-measurement state on $(\regR_1,...,\regR_{i-1})$, conditioned on this history, by
$\rho_{\regR_1...\regR_{i-1}\mid h_{<i}}$.
The learner $\mathcal{L}$ receives $h_{<i}$, the description of the next instrument $\mathcal{I}_i$, and a single copy of $\rho_{\regR_1...\regR_{i-1}\mid h_{<i}}$, together with the parameters $n,s,t,\epsilon,\delta$.

The learner outputs a joint state $\sigma$ on $\regR_1...\regR_i\regX_i$, aiming to reproduce the next classical outcome in $\regX_i$, the next post-measurement system $\regR_i$, and their correlations with the past systems $\regR_1...\regR_{i-1}$.
The target $\rho_{\regR_1...\regR_i\regX_i\mid h_{<i}}$ is the joint state obtained by applying $\mathcal{I}_i$ to the actual next register of the source, conditioned on $h_{<i}$.
We require $\TD(\sigma,\rho_{\regR_1...\regR_i\regX_i\mid h_{<i}})\leq\epsilon$ with probability at least $1-\delta$ over a uniformly chosen $i\in[m]$ and the preceding measurement outcomes, where $\TD$ denotes trace distance.
We emphasize that, except for the parameters, the learner $\cL$ is given no description of the quantum source $\rho_{\regR_1...\regR_m}$ or of the Turing machine that generates it. 
Moreover, we require the existence of a single universal learner that makes accurate predictions for any such quantum source and any sequence of measurements.

Our formulation accommodates a broad range of models for quantum sequential inference.
For example, one may describe a source as generating one register at a time while retaining an internal quantum memory.
Each subsequent register is generated together with an updated memory, possibly using classical outcomes obtained during earlier steps of the source's computation.
Such a sequential process can be represented by a single circuit generating its joint output state and is included whenever this circuit can be generated within time $t$ by a Turing machine of description length at most $s$.
Moreover, when the source is classical and all instruments measure in the computational basis, the task reduces to sampling the next symbol from its conditional distribution given the observed history.
Thus, although there are several possible ways to formulate quantum inductive inference, our definition accommodates quantum memory and correlations across time while recovering the sampling formulation of classical Solomonoff induction~\cite{Sol60,Sol64,Sol64II} as a special case.

Our first result shows that universal quantum inductive inference is information-theoretically possible as follows:
\begin{theorem}[Universal quantum inductive inference, Restated in \cref{thm:info_alg}]
\label{thm:main}
There exists an algorithm that solves universal quantum
inductive inference with round complexity
\[
    r(s,\epsilon^{-1},\delta^{-1})
    = O\!\left(\frac{s}{\epsilon^2\delta}\right).
\]
\end{theorem}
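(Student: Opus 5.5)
We will prove \cref{thm:main} with a quantum analogue of Solomonoff's Bayesian mixture, obtaining the error bound from a chain rule for a divergence.

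\textbf{The mixture.} Enumerate all Turing machines $T$ of description length at most $s$. Run each for $t$ steps; whenever it outputs a valid circuit on $mn$ qubits, let $\rho^{T}$ denote the state that circuit prepares. There are at most $2^{s+1}$ such machines. We therefore form the mixture
\[
\xi = \sum_T w_T\,\rho^{T}, \qquad w_T = 2^{-(s+1)}
\]
(machines that do not halt contribute a fixed dummy state). Because $\xi$ is a mixture, it satisfies the operator domination $\xi \ge 2^{-(s+1)}\rho$, where $\rho$ is the true source.

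\textbf{The learner.} At round $i$, the learner must output a state close to $\rho_{\regR_1\ldots\regR_i\regX_i\mid h_{<i}}$ while holding only a single copy of $\rho_{\regR_1\ldots\regR_{i-1}\mid h_{<i}}$. The natural choice is to keep the given registers untouched and append $\regR_i\regX_i$ through a recovery channel: the Petz map of the mixture's conditional $\xi_{\regR_1\ldots\regR_i\regX_i\mid h_{<i}}$ relative to its marginal on $\regR_1\ldots\regR_{i-1}$.

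A simpler alternative avoids Petz maps. Since the adversary's instruments are applied sequentially, the learner can purify. It then treats the hypothesis index $T$ as a classical register that is jointly prepared with the source. Concretely, it samples a posterior on $T$ by conditioning on the observed history $h_{<i}$. In parallel, it uses the received quantum state by realizing the source as $\rho = \sum_T p_T \rho^T$ with a hidden classical label. The design goal is that the learner's output equals the conditional of $\xi$ whenever $\rho = \xi$. Universality then comes from the domination $\rho \le 2^{s+1}\xi$.

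\textbf{The error bound.} We bound the total error via the max-relative entropy, which gives $D_{\max}(\rho\|\xi) \le s+1$, and then pass to measured or sandwiched relative entropy. The chain-rule step runs as follows. Each instrument is a CPTP map followed by classical conditioning, and the data-processing inequality applies to it. Hence the expected conditional divergence at round $i$, averaged over outcomes, telescopes. Summing over $i \in [m]$ then bounds the total by $D(\rho_{\mathrm{final}}\|\xi_{\mathrm{final}}) \le s+1$.

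Pinsker's inequality converts this into
\[
\frac{1}{m}\sum_i \mathbb{E}\!\left[\TD^2\right] \le \frac{s+1}{2m},
\]
where we must ensure the learner's output is the $\xi$-conditional on the true post-measurement registers. Applying Jensen and then Markov, the probability that $\TD > \epsilon$ is at most $(s+1)/(2m\epsilon^2)$. This is at most $\delta$ once $m = O(s\epsilon^{-2}\delta^{-1})$.

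\textbf{Main obstacle.} The hard step is the quantum chain rule: showing that the per-round error of the learner's channel, applied to the true past registers, sums to at most the divergence of the whole state, even though the past registers are entangled with the future and only one copy is available. I would first construct the learner as the Petz recovery map for $\xi$. I would then prove that the relevant divergence decomposes additively under sequential instruments, using the fact that $\xi \ge 2^{-s-1}\rho$ yields a pointwise operator bound that survives conditioning. Failing that, I would fall back on a fidelity-based (Bhattacharyya or Rényi-$\tfrac12$) telescoping argument to recover the $1/\epsilon^2$ dependence.
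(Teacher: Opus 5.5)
Your architecture matches the paper's: a universal mixture as reference state, domination giving $D(\rho\|\xi)\le s+1$, a Petz-type recovery channel as the learner, telescoping, then Markov. However, the step that carries the proof is missing, and you have placed the difficulty in the wrong spot.

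\textbf{The decomposition is the easy part.} Keep the classical outcome registers, i.e.\ set $\omega^\rho_j:=(\cM_{\cI_1}\otimes\cdots\otimes\cM_{\cI_j})(\rho_{\le j})$ and define $\omega^\xi_j$ likewise. The classical-quantum chain rule then shows that $D(\omega^\rho_j\|\omega^\xi_j)-D(\omega^\rho_{j-1}\|\omega^\xi_{j-1})$ equals the outcome-average of $D(\tau_{\le j\mid h_{<j}}\|\xi_{\le j\mid h_{<j}})-D(\rho_{<j\mid h_{<j}}\|\xi_{<j\mid h_{<j}})$. These increments sum to $D(\omega^\rho_m\|\omega^\xi_m)\le D(\rho\|\xi)$ by data processing.

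\textbf{The real gap is the per-round lower bound.} You need each increment to dominate the learner's squared error, and your ``Pinsker's inequality converts this'' step cannot deliver that. The increment is the loss of relative entropy under $\Tr_{\regR_i\regX_i}$, a difference of two divergences. It is not a divergence between the learner's output and $\tau_{\le i\mid h_{<i}}$, so Pinsker has nothing to act on. Classically the increment is a conditional KL divergence and Pinsker suffices; quantumly it is not, because the retained registers are entangled with the new one.

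\textbf{The missing ingredient} is the strengthened data-processing inequality with a universal recovery map, \cref{lem:Petz_recover} (Junge--Renner--Sutter--Wilde--Winter). Apply it with reference $\xi_{\le i\mid h_{<i}}$, channel $\cN_i=\Tr_{\regR_i\regX_i}$, and state $\tau_{\le i\mid h_{<i}}$, whose image under $\cN_i$ is exactly the learner's input $\rho_{<i\mid h_{<i}}$. It shows the increment is at least $-2\ln F(\tau_{\le i\mid h_{<i}},\cR_{\xi_{\le i\mid h_{<i}},\cN_i}(\rho_{<i\mid h_{<i}}))\ge 1-F^2\ge \TD^2$, the last step by Fuchs--van de Graaf.

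The learner must be this specific map: the rotated Petz maps $\cR^{t/2}$ averaged against $\frac{\pi}{2}(\cosh(\pi t)+1)^{-1}dt$. For the plain Petz map $\cP_{\sigma,\cN}$ that you propose, no dimension-free fidelity bound of this form is known. So ``construct the learner as the Petz recovery map'' leaves the key inequality unproved.

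\textbf{Your fallbacks do not close this.} A R\'enyi-$\tfrac12$ or fidelity telescoping would still need a recovery statement of exactly this kind. The posterior-over-$T$ alternative is not a well-defined channel on the given registers. Conditioning on classical outcomes ignores the quantum side information, and learning $T$ from the registers would disturb them. It also gives no way to produce $\regR_i\regX_i$ correlated with the retained $\regR_1\ldots\regR_{i-1}$, as the target requires.

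With the JRSWW map as the learner, your mixture $\xi$ works in place of $\mu^{U,p(s,t)}$ and the rest of your bound goes through. You should define $\xi$ with the paper's truncation/padding convention so that $\xi_{\le i}$ does not depend on the unknown $m$.
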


Our round complexity matches the classical bound $O(s/(\epsilon^2\delta))$ of Solomonoff induction~\cite{Sol78},
as formulated for time-bounded sources by Hirahara and Nanashima~\cite{HN26}.
In their setting, the learner receives a prefix generated by an unknown randomized Turing machine of description length at most $s$ and running time at most $t$, and must sample from a distribution approximating the conditional distribution of the next symbol.
Our theorem recovers this round complexity without additional dependence on the register size $n$ or the preparation time $t$.
Hirahara and Nanashima~\cite[Proposition~B.1]{HN26} also prove a classical lower bound of $\Omega(s/\epsilon^2+s/\delta)$.
In \cref{sec:lower_bound}, we improve this lower bound to $\Omega(s/(\epsilon^2\delta))$, establishing optimality up to constant factors for both the classical and quantum tasks.

\paragraph{Non-i.i.d.\ state tomography.}
Our algorithm that solves universal inductive inference in \cref{thm:main} produces a single-copy of an inferred quantum state, but does not provide a classical description of it.
We next ask whether one can obtain a classical description of the target state from a non-i.i.d. quantum source.
We call this task \emph{non-i.i.d. state tomography} and formulate it as follows.
A quantum source is an $mn$-qubit state $\rho_{\regR_1...\regR_m}$ whose circuit description is generated by a resource-bounded Turing machine as above.
Again, the description of $\rho_{\regR_1...\regR_m}$ or of the underlying Turing machine is unknown.
For a randomly chosen round $i\in[m]$, a learner $\cL$ receives a prefix state $\rho_{\regR_1...\regR_{i-1}}$ on the first $i-1$ registers, measures it by a POVM $M$, and obtains a classical outcome $x$.
The task of $\cL$ is to produce a classical description of an $n$-qubit state that is $\epsilon$-close to the true $i$th state $\rho_{\regR_i|x}$ conditioned on the measurement outcome $x$, with probability at least $1-\delta$.
We allow $\cL$ to choose its own measurement $M$ on the given registers.
Even with this freedom of the measurement, information-theoretic feasibility is not immediate: the learner must obtain a classical description of the target state $\rho_{\regR_i|x}$ from a single-copy of the prefix state $\rho_{\regR_1...\regR_{i-1}}$.
Our second result establishes the feasibility of non-i.i.d. state tomography.

\begin{theorem}[Non-i.i.d. state tomography, Restated in \cref{thm:state_tomography}]
\label{thm:main_tom}
There exists an algorithm that solves non-i.i.d.\ state tomography with round complexity
\[
    r(n,s,\epsilon^{-1},\delta^{-1})
    =
    O\!\left(
        \frac{s\min\{2^s,2^n\}}{\epsilon^2\delta}
    \right).
\]
The learner chooses and performs a measurement on the preceding registers $\regR_1,\ldots,\regR_{i-1}$ and outputs a classical description of a state approximating the conditional state of $\regR_i$.
The trace-distance error is at most $\epsilon$ with probability at least $1-\delta$ over a uniformly chosen round and the learner's measurement outcomes.
\end{theorem}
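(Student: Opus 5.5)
The plan is to reduce non-i.i.d.\ tomography to the universal inductive inference guarantee of \cref{thm:main}, applied to a classical source obtained by measuring every register. The learner fixes a tomographically complete measurement on each register and measures all of $\regR_1,\ldots,\regR_{i-1}$. The outcome string $x=(x_1,\ldots,x_{i-1})$ is then a prefix of a classical sequence. This sequence is sampled by a machine of description length $s+O(1)$ that runs the source circuit and measures each register.

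To reconstruct the next state from its measurement statistics, I would use a random-Pauli-type scheme, or equivalently a family of informationally complete POVMs indexed by a seed. At round $i$ the conditional state $\rho_{\regR_i|x}$ is obtained from the conditional outcome distribution of $\regR_i$ given $x$ by a fixed linear inversion map $\Phi$. Its operator norm controls how statistical error turns into trace-distance error, with
\[
\TD(\Phi(p),\Phi(q)) \le \kappa\,\|p-q\|_1 ,
\]
where naively $\kappa$ is about $2^{n}$ (up to polynomial factors). The key difference from \cref{thm:main} is that we need an explicit description rather than a sample. So instead of sampling I would run the information-theoretic Bayesian mixture behind \cref{thm:main}: the universal prior over programs of length at most $s$, restricted to time $t$. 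From it the learner computes the predictive distribution $\xi(\cdot\mid x)$ exactly and outputs $\Phi(\xi(\cdot\mid x))$, projected onto density matrices, which at most doubles the error. The standard dominance argument bounds the expected summed KL divergence between true and predicted conditionals by $O(s)$. Pinsker plus Markov over a uniform $i\in[m]$ then gives $\|p_i-\xi_i\|_1\le\epsilon'$ with probability $1-\delta$ once $m=O(s/(\epsilon'^2\delta))$. Setting $\epsilon'=\epsilon/\kappa$ yields $O(s\,2^{2n}\ldots)$, which is too lossy.

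To obtain the factor $\min\{2^s,2^n\}$ rather than $4^n$, I would treat the two branches separately. For the $2^n$ branch, I would avoid the crude $\ell_1$-to-trace conversion and instead measure each register with a random basis from a 2-design, with the basis choice recorded in the history. Trace distance is then controlled through the Hilbert--Schmidt/frame bound $\|\Phi(p)-\Phi(q)\|_1\le\sqrt{2^n}\,\|\Phi(p)-\Phi(q)\|_2$ together with the $O(2^n)$-bounded frame norm. The aim is a squared error of $2^n\cdot\mathrm{KL}$, giving a factor $2^n$. For the $2^s$ branch, I would note that the source has at most $2^{s}$ candidate programs, so the set of possible conditional states $\{\rho^{(T)}_{\regR_i|x}\}_T$ has size at most $2^s$. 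I would then output the posterior mixture state directly, $\sum_T w_T(x)\rho^{(T)}_{\regR_i|x}$, where the posterior is computed from the measurement outcomes. The analysis would be a distinguishing argument: whenever the mixture is $\epsilon$-far from the truth, some two-outcome measurement separates candidates, and mixing over the at most $2^s$ candidate test measurements on each register lets each such event reduce the posterior weight of wrong hypotheses. This costs a factor $2^s$ in the rate through the uniform choice among tests, while the total log-loss budget remains $O(s)$.

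The main obstacle is the $2^n$ branch. There the conversion from the KL budget on classical outcomes to trace distance must lose only a factor $2^n$ and not $4^n$, and this requires choosing the measurement ensemble so that the expected classical distance dominates $\|\cdot\|_1^2/2^n$. My first attempt would be uniformly random Clifford or Haar bases, combined with the known bound that the measured-distance norm for a 4-design satisfies $\|\Delta\|_{\mathcal M}\gtrsim\|\Delta\|_2\ge\|\Delta\|_1/\sqrt{2^n}$. If that fails, I would fall back on the same random-test argument as in the $2^s$ branch, using a net of Helstrom measurements.
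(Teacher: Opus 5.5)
Your final two-branch plan is essentially the paper's proof. The posterior mixture $\sum_T w_T(x)\rho^{(T)}_{\regR_i|x}$ you output is exactly the conditional universal state $\mu_{\regR_i|h_{<i}}$; your prior over $s$-bit, $t$-step programs replaces the paper's time-bounded universal density matrix, which changes nothing. The $O(s)$ budget comes from domination plus the KL chain rule on the classical outcomes. The two conversions to $\TD^2$ are (i) a uniformly chosen Helstrom test between a candidate's conditional state and the current mixture, which is what your ``candidate test measurements'' must be (factor $2^s$), and (ii) an approximate $4$-design POVM with $\SD\gtrsim\TD/\sqrt{2^n}$ (factor $2^n$). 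The linear-inversion, projection and $2$-design/frame-bound detours are unnecessary: by linearity, the mixture's predictive distribution is already the measurement distribution of the state $\mu_{\regR_i|h_{<i}}$, so the $4$-design bound applies to it directly.
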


The inference target of this theorem differs from those considered in existing results on non-i.i.d.\ tomography~\cite{FKMO24,Zam26a}.
The work of~\cite{FKMO24} studies learning a test system after permutation averaging, while that of~\cite{Zam26a} studies reconstructing a time-averaged state.
As explained in the introduction, such averaging can erase information about correlations between successive systems, and accurately learning the resulting average does not in general suffice to infer the state of the next system from the observed history.
Our theorem instead infers the conditional state of the next system from the preceding systems, preserving their order.
To the best of our knowledge, this is the first general inference guarantee of this kind for quantum sources whose outputs may be arbitrarily correlated or entangled across time.

The round complexity depends on the source description length $s$ and the register size $n$ through the factor $s\min\{2^s,2^n\}$, with no dependence on the preparation time $t$.
For fixed $n$, the bound grows linearly with $s$; when $s\leq n$, it is independent of $n$.
A linear dependence on $s$ is necessary: even for single-qubit classical sources and fixed constants $0<\epsilon,\delta<1/2$, the round complexity must be $\Omega(s)$.
However, we do not know whether the exponential factor $\min\{2^s,2^n\}$ is necessary.
It remains open whether one can achieve round complexity polynomial in $s$, $\epsilon^{-1}$, and $\delta^{-1}$, independently of $n$ and $t$.

\paragraph{Computational Complexity.}
Our results so far establish information-theoretic inference guarantees,
but do not guarantee computational efficiency.
It is natural to ask whether we can obtain an efficient learner
inferring the outputs of every efficiently generatable source.
In the classical setting, the existence of one-way functions rules out
such a learner, even with polynomially many observations and constant
error and failure parameters~\cite{NR06,HN23}.
We ask whether computational hardness of quantum inference follows
from more general quantum cryptographic assumptions.

We show that infinitely-often one-way puzzles
(OWPuzzs)~\cite{STOC:KhuTom24} suffice.
Informally, these puzzles have an efficient quantum sampler that
generates a classical puzzle together with a valid answer.
For every efficient quantum algorithm, there are infinitely many
security parameters on which its probability of producing a valid
answer given only the puzzle is negligible.
The verification procedure need not be efficient.
Under this assumption, we show that no efficient learner can achieve
the information-theoretic round complexity established above,
even in the average-case setting.

\begin{theorem}[Hardness of universal quantum inductive inference, Restated in \cref{thm:owpuzz_to_hard}]
\label{thm:main_hard}
Suppose that infinitely-often OWPuzzs exist.
Then no quantum polynomial-time algorithm solves average-case
universal quantum inductive inference with polynomial round complexity.
\end{theorem}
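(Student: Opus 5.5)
We argue by contradiction. Suppose a QPT learner $\cL$ solves average-case universal quantum inductive inference with polynomial round complexity $r=r(s,\epsilon^{-1},\delta^{-1})$. We use it to break an arbitrary infinitely-often OWPuzz $(\Samp,\Ver)$ on all sufficiently large security parameters.

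The plan follows the classical template of \cite{NR06,HN23}: an efficient universal predictor inverts any efficient sampler. The first step is to turn the puzzle into a quantum source of small description length.

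\textbf{Step 1: the source.} Fix $\secp$. Let $\Samp(1^\secp)$ output $(\puzz,\ans)$, each padded to $n=\poly(\secp)$ bits. Define a classical source $\rho_{\regR_1\ldots\regR_m}$ as follows.
\begin{itemize}
\item Sample $(\puzz,\ans)\leftarrow\Samp(1^\secp)$ once.
\item Output $\ket{\puzz}\ket{\ans}$ repeatedly, alternating registers: odd registers hold $\puzz$ and even registers hold $\ans$. Equivalently, use $q$ i.i.d.\ samples if the "average-case" formulation requires fresh randomness; then even registers carry the answer matching the preceding puzzle.
\end{itemize}
The source is a measured (hence classical) mixture generated by a circuit of size $\poly(\secp)$. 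That circuit is written by a Turing machine of constant description length plus $O(\log\secp)$ bits for $\secp$, so $s=O(\log\secp)$ and $t=\poly(\secp)$. All instruments are computational-basis measurements. By the classical special case in the formulation, the learner's task is then to sample $x_i$ from the conditional distribution given $x_{<i}$.

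I would use the i.i.d.-pairs variant, $(\puzz_1,\ans_1,\puzz_2,\ans_2,\ldots)$ with $m=2q$. It makes the target conditional at an even round $i=2j$ exactly the distribution of $\ans$ given $\puzz_j$, since earlier pairs are independent.

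\textbf{Step 2: the inverter.} Choose constants $\epsilon=\delta=1/10$ and $m=2q$ with $q\ge r(s,10,10)=\poly(\secp)$, large enough for the learner's guarantee to apply. The inverter $\cA$ receives a challenge $\puzz^*$ and proceeds as follows.
\begin{itemize}
\item Pick a uniform $j\in[q]$.
\item Generate $j-1$ fresh pairs by running $\Samp$ itself.
\item Place $\puzz^*$ at position $2j-1$.
\item Feed this history, with the corresponding classical registers, to $\cL$ at round $i=2j$ with a computational-basis instrument, and output the sampled $x_i$ as the answer.
\end{itemize}
The simulated history is distributed exactly as the true source prefix conditioned on $\puzz_j=\puzz^*$, so the learner's view is identical to the honest experiment restricted to even rounds. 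This restriction costs a factor $2$ in $\delta$, since $i$ is uniform over $[m]$; we set $\delta$ accordingly.

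\textbf{Step 3: success analysis.} Consider the event that $i$ is even, which has probability $1/2$. With probability at least $1-2\delta$ over $j$ and the history, the output is $\epsilon$-close to the true conditional distribution of $\ans$ given $\puzz^*$. Under the true conditional, the sampled $\ans$ verifies with probability $1$, or with probability at least $1-\negl$ for puzzles with imperfect correctness, which we average over. Hence
\[
\Pr\bigl[\Ver(\puzz^*,\cA(\puzz^*))=1\bigr]\ge 1-2\delta-\epsilon-\negl(\secp)\ge 1/2
\]
for every large $\secp$. This contradicts infinitely-often security, which demands negligible success on infinitely many $\secp$.

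The main obstacle is matching the quantifiers and definitions precisely. Three points need care:
\begin{itemize}
\item "Average-case" hardness and "infinitely-often" OWPuzzs pair correctly only if the learner is assumed to succeed on all large $\secp$ for this specific samplable source family.
\item The description length $s$ must grow only logarithmically in $\secp$, so that $r$ remains polynomial.
\item The learner receives the physical post-measurement registers. For a classical source these are just basis states that $\cA$ can prepare itself.
\end{itemize}
I would first check the exact average-case definition in the paper, then adjust the source, for instance by hardwiring the sampler into the circuit generated by a fixed machine, so that the reduction's simulated inputs are distributed identically.
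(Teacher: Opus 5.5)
Your proposal is correct and is essentially the paper's proof: the paper also uses the source $\omega_\lambda^{\otimes m/2}$ of i.i.d.\ puzzle--answer pairs, measured with computational-basis instruments and generated by an $O(\log\lambda)$-bit machine. It likewise plants the challenge puzzle at a uniformly random odd position after self-sampled pairs, and converts the failure guarantee (with $p=2$, $q=4$) into a bound on the expected trace distance over even rounds, obtaining success at least $1/8-\negl(\lambda)$ on all large $\lambda$ where your constants give $1/2$. The one thing to discard is your remark that a single repeated pair is ``equivalent'': with one pair only round $2$ is hard, a $1/m$ fraction of rounds the learner may fail on, so the i.i.d.\ variant you ultimately chose is necessary.
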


The theorem above shows that the existence of infinitely-often
OWPuzzs is a sufficient condition for average-case hardness of
universal quantum inductive inference at this round complexity.
A natural question is whether this condition is also necessary.
We leave this converse direction as an open problem.

Although the converse remains open for general quantum inference,
we obtain a complete cryptographic characterization when the learner
observes and infers only classical measurement outcomes.
Consider a quantum source whose successive output registers are measured
using specified POVMs.
The learner receives the descriptions of the past measurements and
the next measurement, together with the past measurement outcomes.
It must sample from a distribution approximating the conditional
distribution of the next outcome given the observed history.
The learner may perform quantum computation, but has no access to
the source's quantum registers.
We call this task \emph{POVM-based universal quantum inductive inference}.

In the average-case formulation, the source and measurement descriptions
are sampled by a quantum polynomial-time algorithm.
As before, we require error at most $\epsilon$ with probability
at least $1-\delta$ over the sampled instance, a uniformly chosen round,
and the preceding measurement outcomes.
Here, error is measured in statistical distance between the learner's
output distribution and the true conditional distribution.
We show that average-case computational hardness of this task is equivalent to
the existence of infinitely-often OWPuzzs.

\begin{theorem}[Hardness of POVM-based universal quantum inductive inference, Restated in \cref{thm:owpuzz_povmhard}]
\label{thm:main_povm}
There exists no quantum polynomial-time algorithm that solves
average-case POVM-based universal quantum inductive inference
with round complexity
\[
    O\left(
        \frac{s(\lambda)+g(\secp)}
        {\epsilon(\lambda)^2\delta(\lambda)}
    \right)
\]
for any superconstant function $g(\secp)=\omega(1)$, if and only if infinitely-often OWPuzzs exist.
\end{theorem}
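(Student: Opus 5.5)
The proof splits into the two directions of the equivalence.

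\emph{(OWPuzzs $\Rightarrow$ hardness.)} I would take the standard route from classical Solomonoff-style hardness under one-way functions and replace the one-way function by a one-way puzzle. Given an infinitely-often OWPuzz $(\Samp,\Ver)$, build an average-case instance distribution. A QPT sampler picks random coins, runs $\Samp(1^\secp)\to(\puzz,\ans)$, and outputs a short-description source that emits, over many rounds, the bits of $\puzz$, then the bits of $\ans$, all measured in the computational basis. Padding the sequence with repeated copies makes the puzzle-answer block occupy a noticeable fraction of the rounds.

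The source's description length is $s(\secp)$ plus the $O(1)$ bits hard-coding the sampler. Averaged over the sampler's randomness, the instance is then a fixed efficient source, so the $g(\secp)$ slack in the round bound absorbs its description.

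Suppose a QPT learner achieves the stated round complexity. Then with noticeable probability the learner predicts the conditional distribution of the answer bits given the puzzle bits to within $\epsilon$. Sampling the answer bit by bit from these predictions gives an inverter. It outputs a valid answer with probability noticeably related to that of a true conditional sample, which is at least non-negligible because $\Samp$'s own answer is valid with probability $1-\negl$. This contradicts security on all but finitely many $\secp$.

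The main obstacle in this direction is the error accumulation over the $|\ans|$ answer bits. I would handle it with the usual hybrid over positions plus the randomized choice of round $i$: per-bit statistical distance summed over a block of length $\ell$ is controlled because a random round lands in a block where most positions are good. To keep the total error small, the block length and $\epsilon$ should be chosen as inverse polynomials.

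\emph{(Hardness $\Rightarrow$ OWPuzzs.)} I would prove the contrapositive: if infinitely-often OWPuzzs do not exist, build an efficient learner. The information-theoretic algorithm (\cref{thm:main}, in its classical POVM form) is a Bayesian mixture over programs. The key is that sampling from the posterior is a search problem of the form ``given the prefix, sample a program-plus-randomness consistent with it.''

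Define a puzzle sampler that picks a random $s$-bit program, the instrument descriptions, and a round $i$, and runs the quantum source with the measurements. It outputs $\puzz=$ (measurement descriptions, history $h_{<i}$) and $\ans=$ the program together with its continuation.

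Nonexistence of OWPuzzs, by the known equivalence to distributional inversion (the Khurana--Tomer / Chung--Goldin--Gray style characterization), gives a QPT algorithm. For all sufficiently large $\secp$, it samples an answer close to the true conditional distribution given the puzzle. Running the sampled program to produce the next outcome then yields the prediction.

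The remaining step is the analysis. Here I would reuse the potential-function argument from \cref{thm:info_alg}: the KL divergence between the universal mixture and the truth summed over rounds is at most $s+O(1)$, and the distributional-inversion error contributes an extra additive term. That extra term is exactly why the bound carries $s+g(\secp)$ with superconstant $g$, giving round complexity $O((s+g)/(\epsilon^2\delta))$.

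I expect the main obstacle to be converting the average-case, inverse-polynomial-error distributional inverter into per-instance guarantees with the right $\epsilon,\delta$ dependence. In particular, the inverter's success holds only on average over the sampled program, while the learner must succeed for the sampled instance. I would handle this by Markov's inequality over instances, charging the failure to $\delta$.
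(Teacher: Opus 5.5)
Your hardness direction is essentially the paper's, provided the fresh puzzle--answer pairs are produced inside the source state. In the paper, $\cG$ deterministically outputs an $O(\log\secp)$-size machine preparing $\omega_\secp^{\otimes m/2}$: independent pairs from $\Samp$, measured in the computational basis. The inverter then simulates the earlier pairs by running $\Samp$ itself. If instead $\cG$ samples one pair and hard-codes it, later copies can be predicted by copying. The hard rounds are then too few to contradict the learner's guarantee.

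Since a register has $n(\secp)$ qubits, $\puzz$ and $\ans$ can each occupy a single register. One good prediction at an even round then already inverts with constant probability. Your bit-by-bit encoding with a hybrid over positions also works, but it is unnecessary. The slack $g$ plays no role in this direction, since any polynomial round complexity is refuted.

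The genuine gap is in the converse. The learner must be a single QPT algorithm that works for every $t'$-time generator $\cG$, and $\cG$ chooses both the source and the (possibly correlated) POVM sequence. A distributional inverter is accurate only on average over the puzzle distribution of the sampler it inverts. That distribution must therefore dominate, up to a bounded factor, the distribution of $(h_{<i},M_i)$ induced by the unknown $\cG$.

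Your sampler draws a uniformly random $s$-bit source and instruments from an unspecified distribution. The inverter then says nothing about the puzzles that actually arise. Markov's inequality over instances cannot repair a mismatch between distributions. Running $\cG$ inside the sampler would make the learner depend on $\cG$, which the quantifier order forbids.

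The missing idea is the paper's universal sampler $\cQ$. Encode $(\secp,n,s,t,t',p,q)$ into one security parameter $N$. Pick a random self-delimiting generator program $\Gamma$, run $\cG_\Gamma(1^\secp)$ for $t'$ steps, and pad to $t'$ rounds. Output $X=x_i$ and $Y=(h_{<i},M_i)$. The true $\cG$ is then selected with probability $2^{-|\widehat\cG|}$. This transfers the inverter's $1/N^4$ error to $\cG$'s instances at the cost of the constant factor $2^{|\widehat\cG|}$. It also gives $\Pr_{\cQ}[(M,x)]\ge 2^{-|\widehat\cG|}\Pr_{\cG}[T]\,\Pr[(M,x)\mid T]$, and hence $\sum_i\mathbb{E}[\mathrm{KL}]\le s+1+|\widehat\cG|$ by the chain rule.

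This unknown constant $|\widehat\cG|$ is what $g=\omega(1)$ absorbs. It is not the inversion error: that error is measured in statistical distance and cannot be added to a KL potential. The paper instead uses $\SD(\cL,\cD_i)\le\SD(\mathsf{Inv},\cQ_i)+\SD(\cQ_i,\cD_i)$ and bounds the two bad events separately: inverter plus Markov for the first, Pinsker plus Markov for the second.

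Finally, the answer should be the outcome $x_i$ itself. For a quantum source, re-running a sampled program conditioned on the observed outcomes would require post-selection.
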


\subsection{Related Works}

\paragraph{Non-i.i.d.\ quantum state learning.}
Several works study quantum state learning without the i.i.d.\ assumption~\cite{CR12,BH17,FQR24,FKMO24,Zam26a,Zam26b}.
Among these, the framework of Fawzi, Kueng, Markham, and Oufkir~\cite{FKMO24} and the two works of Zambrano~\cite{Zam26a,Zam26b} address tasks particularly close to our tomography task, and we discuss them in more detail below.
Fawzi, Kueng, Markham, and Oufkir introduce a general framework for extending learning algorithms designed for i.i.d.\ inputs to arbitrary, possibly entangled inputs.
In their framework, the input registers are randomly permuted, and the learner uses the training registers to infer properties of a remaining test register.
The learner's output is evaluated against the test register's state conditioned on that output and calibration information.
Zambrano studies a different setting in which each prepared state may depend on the previous experimental history.
He gives algorithms for reconstructing the time-averaged state~\cite{Zam26a} and estimating its observable expectation values~\cite{Zam26b}, with sample complexities comparable to the corresponding i.i.d.\ guarantees.
These results do not require a bound on the source description length, but their inference targets differ from ours.
In our setting, the learner uses only the preceding registers to infer the conditional state of the next register in the original sequence.
We exploit a bound on the source description length to obtain this guarantee.
Fanizza, Quek, and Rosati~\cite{FQR24} also consider a related learning task, but assume that the input registers are independent, although not necessarily identically distributed.
Given a collection of candidate state sequences, their learner selects a candidate whose average per-register trace distance from the input states is approximately optimal.
Their learner may measure the entire input sequence, whereas ours infers the state of an unobserved next register using only its predecessors and allows arbitrary correlations and entanglement.

\paragraph{Non-i.i.d.\ quantum state verification.}
Quantum state verification and certification without the i.i.d.\ assumption have been studied extensively~\cite{CR12,HM15,MTH17,TM18,TMMMF19,ZH19a,ZH19b,ZH19c,MK20,LZH23,FKMO24,AGMOC25}.
For broad classes of pure targets, polynomially many samples suffice even when the supplied registers are arbitrarily correlated or entangled~\cite{MTH17,TM18,TMMMF19}.
These protocols certify high fidelity with a known pure state.
These verification protocols rely on the purity of the target and do not apply to general mixed targets.

For mixed targets, Fawzi, Kueng, Markham, and Oufkir~\cite{FKMO24} give a test of trace-distance proximity to the maximally mixed state within their framework for correlated inputs.
Their soundness guarantee concerns a remaining test register, conditioned on the learner's output and calibration information, after randomly permuting the input registers.
Their sample bound is polynomial in the Hilbert-space dimension and hence exponential in the number of qubits.
De Palma, Fanizza, Mowry, and O'Donnell~\cite{PFMO25} instead study independent but not necessarily identically distributed inputs.
Given the product state $\rho_1\otimes\cdots\otimes\rho_N$, they test whether the average state equals a known mixed target or is more than $\epsilon$ away in trace distance, using $O(2^n/\epsilon^2)$ samples for $n$-qubit states and constant success probability.
Their result retains the independence assumption and therefore does not cover general correlated or entangled inputs.

Cavalar et al.~\cite{CGGHHM25} obtain polynomial sample complexity for mixed-state verification against time-bounded uniform quantum adversaries supplying arbitrarily correlated registers.
They verify whether the average marginal state is close to a known,
efficiently generatable target.
Our task instead infers an unknown next state conditioned on the observed history, including its correlations with retained past systems.
This conditioning is essential, and it is observed that learning unconditional marginals from arbitrarily correlated samples is information-theoretically impossible in general, even for classical sources~\cite{CGGHHM25}.

\paragraph{Computational complexity of learning and inference.}
The relationship between learning and cryptography has been studied extensively, with cryptographic assumptions providing barriers to efficient learning~\cite{Val84} and the nonexistence of one-way functions enabling learning algorithms~\cite{IL90,BFKL93,NR06}.
Naor and Rothblum~\cite{NR06} study the learning of adaptively changing distributions and characterize efficient learnability in their model by the nonexistence of one-way functions (OWFs).
Hirahara and Nanashima~\cite{HN23} develop a unified framework based on inductive inference for obtaining learning algorithms under the nonexistence of OWFs.
Their results include a characterization of OWFs by the average-case hardness of improper distribution learning.
More recently, Hirahara and Nanashima~\cite{HN26} give an efficient universal inductive inference algorithm under an average-case tractability assumption on time-bounded Kolmogorov complexity.
Their algorithm infers future observations from a sequence generated by an unknown time-bounded randomized machine.
This provides a computational counterpart to classical universal inductive inference and is closely related to our goal of inferring the states of quantum systems from past observations.

In the quantum setting, Hiroka, Hsieh, and Morimae~\cite{HHM25} characterize one-way puzzles by the average-case hardness of proper learning of quantumly samplable classical distributions.
Analogous connections between quantum state learning and one-way state generators have been established for pure states~\cite{HH24,FGSY25}, and for mixed states using inefficiently verifiable one-way state generators~\cite{CL26}.
These learning tasks use i.i.d.\ samples of a fixed unknown distribution or state.
Related works connect quantum cryptography to concrete learning and decoding assumptions~\cite{BHHP24,PQS25}, and to the average-case hardness of probability estimation and Kolmogorov complexity~\cite{CGGH25,HM25}.

Recent works~\cite{QRZ25,STOC:KhuTom25} study classical-to-quantum extrapolation: given the outcome of measuring one register of a known bipartite pure-state preparation, the goal is to prepare the conditional state of the other register.
They show that hardness of this task implies quantum commitments.
They also introduce a fully quantum extrapolation task that follows from quantum commitments.
Their tasks concern a known preparation procedure, whereas our learner performs inference from a sequence generated by an unknown source.

Qian and Zhandry~\cite{QZ26} study a related setting in which an eavesdropper observes classical communication between quantum parties.
They show that, after sufficiently many rounds, the eavesdropper can prepare a quantum state that enables impersonation in a subsequent interaction.
Their task can be viewed as a form of inductive inference from classical transcripts to quantum states.
In contrast, our learner receives the retained post-measurement quantum systems, together with the classical measurement history, and predicts the next outcome and post-measurement state, including correlations with the past systems.
Thus, the two works consider different models of quantum inductive inference.

Cavalar et al.~\cite{CGGHHM25} give sufficient conditions for efficient verification of quantumly samplable distributions and efficiently generatable mixed states under the nonexistence of infinitely-often one-way puzzles and weak non-uniform infinitely-often EFI pairs, respectively.
Their protocols allow correlated inputs from time-bounded adversaries, but verify proximity to a specified target.
Our task instead requires inferring an unknown next state from the preceding registers.

\paragraph{Quantum algorithmic information theory.}
Kolmogorov complexity measures the length of a shortest program describing an object~\cite{Sol64,Kol65,Cha66}.
Universal algorithmic probability gives a probabilistic counterpart to this notion: it combines the contributions of programs, giving greater weight to shorter programs.
Solomonoff induction uses a related universal prior to infer future observations, with guarantees governed by the description length of the underlying computable source~\cite{Sol64,Sol78}.
Quantum extensions of algorithmic information theory have been studied since the work of Svozil~\cite{Svo96}, with notions of description complexity based on classical descriptions~\cite{Vit01,MB05} and quantum programs~\cite{BDL01}.
Gács~\cite{Gac01} develops a quantum analogue of universal probability and uses it to define quantum algorithmic entropy. Cavalar et al.~\cite{CCCGHJL26} characterize the existence of EFI pairs by the average-case hardness of estimating a smoothed version of Gács's algorithmic entropy from a single copy of a pure state drawn from a non-uniformly samplable distribution. Perrier~\cite{Perrier26} proposes a quantum analogue of Solomonoff induction based on a description-length-weighted mixture of quantum environments for predicting measurement outcomes. The paper states a convergence bound for its posterior environment state, but explicitly leaves a complete proof open, and their formulation is different from ours.
Our inference algorithm uses a time-bounded variant of Gács's universal construction to predict a conditional joint quantum state.

\section{Technical Overview}
\label{sec:technical_overview}

\subsection{Universal Quantum Inductive Inference}
We first describe an information-theoretic algorithm that achieves the upper bound in \cref{thm:main}.
In this overview, for simplicity, we consider the special case of universal quantum inductive inference in which the target instrument is $\cI_i=\mathrm{id}_{\regR_i}$ for all $i\in[m]$, where $m\in\N$ denotes the total number of rounds.
In this case, the task reduces to the following: for a randomly chosen $i\in[m]$, given the reduced state $\rho_{\regR_1\cdots\regR_{i-1}}$ on the first $i-1$ registers $(\regR_1,...,\regR_{i-1})$, generate a state that is $\epsilon$-close to $\rho_{\regR_1...\regR_i}$ in trace distance.
Here, the circuit description of the target state $\rho_{\regR_1...\regR_m}$ is generated by a Turing machine $T$ described by $s$ bits within $t$ steps.

The starting point of our construction is the Petz recovery map \cite{Petz86,Petz88}.
Because our task is to construct the state $\rho_{\regR_1...\regR_i}$ from its reduced state $\rho_{\regR_1...\regR_{i-1}}$ on the first $i-1$ registers, it can be viewed as a state-recovery problem for the channel $\Tr_{\regR_i}$ that traces out the $i$th register.
Indeed, if sufficient information about the target state $\rho_{\regR_1...\regR_i}$ is available, the Petz recovery map can recover it exactly.
More precisely, if we have a reference state $\sigma_{\regR_1...\regR_i}$ that satisfies $D(\rho_{\regR_1...\regR_i}\|\sigma_{\regR_1...\regR_i})=D(\rho_{\regR_1...\regR_{i-1}}\|\sigma_{\regR_1...\regR_{i-1}})$\footnote{
For quantum states $\rho$ and $\sigma$, $D(\rho\|\sigma)$ is the relative entropy defined by $D(\rho\|\sigma)\coloneqq \Tr \rho(\log\rho-\log\sigma)$.
}, then the Petz recovery map $\cP_{\sigma,\Tr_{\regR_i}}$ with respect to $\sigma_{\regR_1...\regR_i}$ satisfies $\cP_{\sigma,\Tr_{\regR_i}}(\rho_{\regR_1...\regR_{i-1}})=\rho_{\regR_1...\regR_i}$.
However, the main obstacle in our setting is that the source program $T$ generating the target state is unknown to the learner. 
It is therefore highly nontrivial to prepare a reference state for which the corresponding Petz recovery map successfully recovers the target state.

In this work, we overcome this obstacle and establish the achievability of universal quantum inductive inference using the notion of a \emph{time-bounded universal density matrix}\footnote{
    Gács~\cite{Gac01} introduced the universal (semi)density matrix. Our notion can be viewed as its time-bounded variant.
} and its domination property.
Roughly speaking, a $t$-time-bounded universal density matrix $\mu^t$ is the uniform mixture of quantum states whose preparation circuits are generated by a universal Turing machine from $t$-bit programs within $t$ steps.
This is a quantum-state analogue of the (time-bounded) universal distribution well known in classical algorithmic information theory and meta-complexity~\cite{Sol64,Kol65,Cha66}.
Analogously to the universal distribution, it satisfies the following domination property: there exists a polynomial $p$ such that, for any state $\rho_{\regR_1...\regR_m}$ whose preparation circuit is generated by a Turing machine described by $s$ bits within $t$ steps,
\begin{align}
    \mu^{p(s,t)}_{\regR_1...\regR_m} \ge 2^{-O(s)} \rho_{\regR_1...\regR_m}.
    \label{eq:overview_domination}
\end{align}
In particular, this matrix inequality holds for our target state and implies that $D(\rho_{\regR_1...\regR_m}\|\mu^{p(s,t)}_{\regR_1...\regR_m})\le O(s)$.
We use the universal density matrix $\mu^{p(s,t)}_{\regR_1...\regR_m}$ as the reference state and apply the rotated Petz recovery map $\cR_{\mu,\Tr_{\regR_i}}$ \cite{Wil15} to the given prefix state $\rho_{\regR_1...\regR_{i-1}}$.
By prior work \cite{JRSWW18}, we observe that $\cR_{\mu,\Tr_{\regR_i}}$ satisfies
\begin{align}
    D(\rho_{\regR_1...\regR_i}\|\mu_{\regR_1..\regR_i})- D(\rho_{\regR_1...\regR_{i-1}}\|\mu_{\regR_1..\regR_{i-1}}) \ge -2\log F\left(\rho_{\regR_1...\regR_i}, \cR_{\mu,\Tr_{\regR_i}}(\rho_{\regR_1...\regR_{i-1}})\right),
\end{align}
and therefore, we obtain
\begin{align}
    \sum_{i\in[m]} \TD \left( \rho_{\regR_1...\regR_i}, \cR_{\mu,\Tr_{\regR_i}}(\rho_{\regR_1...\regR_{i-1}}) \right)^2 \le O(s) 
\end{align}
by using the Fuchs-van de Graaf inequality together with the domination property of $\mu^{p(s,t)}_{\regR_1...\regR_m}$.
Thus, for a uniformly random round $i\gets[m]$, 
\begin{align}
    \mathbb{E}_i\left[ \TD \left( \rho_{\regR_1...\regR_i}, \cR_{\mu,\Tr_{\regR_i}}(\rho_{\regR_1...\regR_{i-1}}) \right)^2 \right] \le \frac{O(s)}{m}. 
\end{align}
Requiring the learner to recover the target state within trace distance $\epsilon$ with probability at least $1-\delta$ gives $m\ge Cs\epsilon^{-2}\delta^{-1}$ for some constant $C$ and this establishes the upper bound on the round complexity in \cref{thm:main}.

For a general sequence of instruments $(\cI_1,...,\cI_m)$, we incorporate the effects of the measurements into the above strategy and construct an algorithm that achieves the same upper bound for any sequence of quantum instruments.

\subsection{Non-i.i.d. State Tomography}
Next, we explain the information-theoretic upper bound for non-i.i.d. state tomography in \cref{thm:main_tom}.
Recall that state tomography requires the learner to output a classical description of the target unknown quantum state.
Since the inductive inference algorithm described above, based on the (rotated) Petz recovery map, outputs only a single copy of a quantum state, it can not be directly applied to tomographic tasks.
Interestingly, however, the time-bounded universal density matrix still plays a central role in our tomography algorithm.

Let us first define the setting of non-i.i.d. state tomography.
The target state $\rho_{\regR_1...\regR_m}$ is an $mn$-qubit state over registers $(\regR_1,...,\regR_m)$, where each register $\regR_i$ consists of $n$ qubits.
As before, the preparation circuit of $\rho_{\regR_1...\regR_m}$ is generated by a Turing machine $T$ described by $s$ bits within $t$ steps.
Note that the target state is unknown, in particular, the source program $T$ is not given to the learner.
At round $i\in[m]$, the learner is given the prefix state $\rho_{\regR_1...\regR_{i-1}}$ on the first $i-1$ registers.
The learner applies a measurement to this prefix state, obtains a classical outcome $x$, and outputs a classical description of the target state $\rho_{\regR_i| x}$ conditioned on the measurement result $x$.
We require that, for a uniformly random round $i\in[m]$, the learner outputs a classical description of a state that is $\epsilon$-close to the target state $\rho_{\regR_i| x}$ with probability at least $1-\delta$.

Our algorithm proceeds as follows.
For each $j\in[i-1]$, the learner applies a POVM $M_j$, defined below, to the register $\regR_j$ and obtains a classical outcome $x_j$.
It then outputs a classical description of $\mu^{p(s,t)}_{\regR_i|x_1,...,x_{i-1}}$, where $\mu^{p(s,t)}$ is a $p(s,t)$-time-bounded universal density matrix that satisfies the domination property in \cref{eq:overview_domination}.
We consider two measurement strategies for the learner.

\paragraph{Random Helstrom measurements.}
At each step $j\in[i-1]$, the learner chooses a Turing machine $T$ with an $s$-bit description uniformly at random.
Let $H_j^T$ denote the Helstrom measurement distinguishing
$\sigma^T_{\regR_j| x_1,...,x_{j-1}}$
from
$\mu^{p(s,t)}_{\regR_j| x_1,...,x_{j-1}}$,
where $\sigma^T_{\regR_1...\regR_m}$ is the state whose preparation circuit is generated by $T$ within $t$ steps, and $(x_1,...,x_{j-1})$ denotes the collection of measurement outcomes obtained in the previous rounds.
Then, with probability $1/2^s$, the chosen machine corresponds to the true target program, namely $\sigma^T_{\regR_1...\regR_m}=\rho_{\regR_1,...,\regR_m}$.
Conditioned on this event, the statistical distance between the outcome distributions obtained by applying $H_j^T$ to the target state $\rho_{\regR_j|x_1,...,x_{j-1}}$ and to the learner's prediction $\mu^{p(s,t)}_{\regR_j| x_1,...,x_{j-1}}$ is equal to
\begin{align}
    \TD\left(\rho_{\regR_j|x_1,...,x_{j-1}}, \mu^{p(s,t)}_{\regR_j| x_1,...,x_{j-1}} \right)
\end{align}
by the optimality of the Helstrom measurement.
Therefore, by Pinsker's inequality, the KL-divergence between the corresponding measurement distributions is lower bounded by
\begin{align}
    \frac{1}{2^s} \TD\left(\rho_{\regR_j| x_1,...,x_{j-1}}, \mu^{p(s,t)}_{\regR_j| x_1,...,x_{j-1}}\right)^2.
\end{align}

\paragraph{Complex projective 4-design.}
Alternatively, the learner can adopt the following measurement strategy.
At each round $j\in[i-1]$, the learner performs a projective measurement associated with a complex projective $4$-design on the register $\regR_j$.
Then, the KL-divergence between the measurement distributions induced by the target state $\rho_{\regR_j| x_1,...,x_{j-1}}$ and the learner's prediction $\mu^{p(s,t)}_{\regR_j| x_1,...,x_{j-1}}$ is lower bounded by
\begin{align}
    \frac{1}{C2^n} \TD\left(\rho_{\regR_j|x_1,...,x_{j-1}}, \mu^{p(s,t)}_{\regR_j|x_1,...,x_{j-1}} \right)^2,
\end{align}
for some constant $C>0$.

\paragraph{Putting it all together.}
Finally, we combine the two measurement strategies described above.
By using the chain rule for KL-divergence together with the domination property of the time-bounded universal density matrix (\cref{eq:overview_domination}), we obtain
\begin{align}
    \sum_{i=1}^m \mathbb{E}_{x_1,...,x_{i-1}} \left[\TD\left( \rho_{\regR_i| x_1,...,x_{i-1}}, \mu^{p(s,t)}_{\regR_i| x_1,...,x_{i-1}} \right)^2 \right]
    \le O\left(s\min\{2^s,2^n\}\right).
\end{align}
Therefore, for a uniformly random round $i\in[m]$,
\begin{align}
    \mathbb{E}_{i,x_1,...,x_{i-1}} \left[ \TD\left( \rho_{\regR_i|x_1,...,x_{i-1}}, \mu^{p(s,t)}_{\regR_i| x_1,...,x_{i-1}} \right)^2 \right]
    \le O\left( \frac{s\min\{2^s,2^n\}}{m} \right).
\end{align}
Requiring that the learner's output $\mu^{p(s,t)}_{\regR_i| x_1,...,x_{i-1}}$ is $\epsilon$-close to the target state $\rho_{\regR_i|x_1,...,x_{i-1}}$ with probability at least $1-\delta$ gives the round complexity
\begin{align}
    O\left( \frac{s\min\{2^s,2^n\}}{\epsilon^2\delta}\right)
\end{align}
is sufficient and we obtain \cref{thm:main_tom}.

\subsection{Average-Case Computational Hardness}
We finally explain the computational hardness of universal quantum inductive inference.
In this overview, we only sketch the proof of \cref{thm:main_povm}, from which \cref{thm:main_hard} follows directly.

We first define a special case of universal quantum inductive inference, which we call \emph{POVM-based universal quantum inductive inference}.
In this variant, the sequence of measurements $(M_1,...,M_m)$ is formalized as a sequence of POVMs.
At round $i$, the learner's task is to predict the classical distribution of outcomes obtained by applying $M_i$ to the target state $\rho_{\regR_i| x_1,...,x_{i-1}}$ that is conditioned on the previous measurement outcomes $(x_1,\ldots,x_{i-1})$.
The learner is given the past measurement results $(M_1,x_1,...,M_{i-1},x_{i-1})$ together with the target measurement $M_i$, but is not given any post-measurement quantum states.
Moreover, we consider the average-case version of this task.
Namely, there is a quantum polynomial-time (QPT) algorithm $\cG$ that samples a description of a Turing machine $T$ together with a sequence of measurements $M=(M_1,\ldots,M_m)$.
Let $\rho^T_{\regR_1\cdots\regR_m}$ denote the quantum state whose preparation circuit is generated by $T$ within $t$ steps.
For a randomly chosen round $i\in[m]$, after observing the outcomes $(x_1,...,x_{i-1})$ obtained by applying the first $i-1$ measurements $(M_1,...,M_{i-1})$ to $\rho^T_{\regR_1...\regR_m}$, the learner is required to predict, within statistical distance error $\epsilon$, the distribution of the outcome of the $i$th measurement $M_i$ applied to $\rho^T_{\regR_i| x_1,...,x_{i-1}}$, with probability at least $1-\delta$.

We prove that the hardness of this task is equivalent to the existence of one-way puzzles (OWPuzzs)\footnote{
    Strictly speaking, we prove an equivalence between the existence of infinitely-often OWPuzzs and the average-case hardness of POVM-based universal quantum inductive inference.
    An infinitely-often OWPuzz is defined in the same way as a standard OWPuzz, except that security is required to hold for infinitely many security parameters $\secp$, rather than for all sufficiently large $\secp$.
    This distinction arises because we require the learner for the inductive inference task to succeed for all sufficiently large parameter regimes.
    Since this point is not important for understanding the high-level proof idea, we use the standard notion of OWPuzz throughout this overview.
}.
An OWPuzz consists of a pair of algorithms $(\Samp,\Ver)$.
The sampler $\Samp$ is a QPT algorithm that outputs a pair of classical strings $(\puzz,\ans)$, while $\Ver$ is a computationally unbounded algorithm that checks the correctness of a candidate pair $(\puzz,\ans')$.
Security requires that no QPT adversary, given only $\puzz$, can output an $\ans'$ such that $(\puzz,\ans')$ is accepted by $\Ver$, except with negligible probability.
We establish the equivalence in two steps.
First, assuming the existence of an OWPuzz $(\Samp,\Ver)$, we construct a QPT sampler $\cG$ that generates instances of POVM-based universal quantum inductive inference that are hard on average.
Second, assuming that OWPuzzs do not exist, we construct a QPT learner that solves average-case POVM-based inductive inference.
We explain these two directions below.

\paragraph{The existence of OWPuzzs implies the hardness of inductive inference.}
To prove this direction, we construct a sampler $\cG$ that outputs a Turing machine $T$ together with a sequence of computational basis measurements $(M_1,...,M_m)$.
The machine $T$ generates a circuit description of a quantum state $\rho_{\regR_1...\regR_m}$ such that, for each $i\in[m/2]$, measuring the pair of registers $(\regR_{2i-1},\regR_{2i})$ in the computational basis yields an independent sample $(\puzz,\ans)\gets\Samp(1^\secp)$.

Thus, at every even round $2i$, the learner is given the puzzle $\puzz$ obtained in the preceding round and is required to predict the conditional distribution of $\ans$ given $\puzz$.
If an efficient learner could approximate this conditional distribution, then by running such learner, we could efficiently produce a valid answer $\ans$ for a given puzzle $\puzz$ with non-negligible probability.
This contradicts the security of the OWPuzz $(\Samp,\Ver)$.
Therefore, POVM-based universal quantum inductive inference is hard on average.

\paragraph{The non-existence of OWPuzzs implies the easiness of inductive inference.}
Next, we explain the basic idea behind the construction of an efficient learner for average-case POVM-based universal quantum inductive inference under the assumption that OWPuzzs do not exist.
By the equivalence between OWPuzzs and their distributional variants \cite{C:ChuGolGra24}, we can assume that distributional OWPuzzs do not exist.
Consequently, for every QPT sampler $(\puzz,\ans)\gets\Samp(1^\secp)$, there exists a QPT algorithm $\mathsf{Inv}$ such that
\begin{align}
    \SD\left( {(\puzz,\ans)}, {(\puzz,\mathsf{Inv}(\puzz))} \right)
    \le \frac{1}{\poly(\secp)}.
\end{align}
To obtain a learner independent of the instance generator, we apply this inversion to a QPT sampler $\cQ$ that is a mixture of instance generators and behaves as follows:
It chooses a generator $\cG$ with probability $2^{-|\cG|}$, runs $(T,M)\gets\cG(1^\secp)$, prepares a state $\rho^T$ whose description is generated by a Turing machine $T$, and applies a measurement $M=(M_1,...,M_m)$ to $\rho^T$ obtaining $(x_1,...,x_m)$.
$\cQ$ outputs $(M_1,x_1,...,M_m,x_m)$.
Then, there exists a QPT algorithm that takes $(M_1,x_1,...,M_{i})$ as input and samples from the conditional distribution over $x_i$.
Thus, it suffices to show that the output distribution of $\cQ$ is close to the target distribution.
In the sampling procedure $\cQ$, it chooses the true generator $\cG$ with probability $2^{-|\cG|}$.
Conditioned on this event and the event that $\cG$ generates the target Turing machine $T$, the output distribution of $\cQ$ is identical to the target distribution.
By using this fact, we can show that the KL-divergence between $\cQ$ and the target distribution is upper bounded by $O(s(\secp)+g(\secp))$ for any superconstant function $g(\secp)=\omega(1)$.
Then, by Pinsker's inequality and Markov's inequality, we obtain the upper bound on the round complexity $O((s(\secp)+g(\secp))\epsilon^{-2}\delta^{-1})$ if we require that the learner's output distribution is $\epsilon$-close to the target distribution with probability at least $1-\delta$.

\section{Preliminaries}\label{sec:preliminaries}
\subsection{Basic Notations} 
We use standard notations of quantum information, quantum computation and cryptography.
For a bit string $x$, $|x|$ is its length.
For bit strings $x$ and $y$, $x\|y$ is their concatenation.
$[n]$ means the set $\{1,2,...,n\}$.
For a finite set $S$, $x\gets S$ means that an element $x$ is sampled uniformly at random from $S$.
For an algorithm $\cA$, $y\gets \cA(x)$ means that the algorithm $\cA$ outputs $y$ on input $x$.
$\negl$ is a negligible function, and $\poly$ is a polynomial.
PPT stands for (classical) probabilistic polynomial-time and QPT stands for quantum polynomial-time. 
For two quantum states $\rho$ and $\sigma$, $\TD(\rho,\sigma)\coloneqq\frac{1}{2}\|\rho-\sigma\|_1$ means their trace distance,  
where $\|X\|_1\coloneqq\Tr\sqrt{X^\dagger X}$ is the trace norm.
The relative entropy between two states $\rho$ and $\sigma$ is defined by $D(\rho\|\sigma)\coloneq \Tr \rho(\log\rho-\log\sigma)$, where $D(\rho\|\sigma)=\infty$ if $\mathrm{supp}(\rho)\nsubseteq\mathrm{supp}(\sigma)$.
Throughout this paper, we fix an arbitrary finite universal quantum gate set $G$ and a binary encoding of quantum circuits over $G$.
$\log x$ means $\log_2 x$ and $\ln x$ means $\log_e x$.
For a Turing machine $T$ and a time bound $t\in\N$, $T^t$ denotes the $t$-time execution of $T$.

\subsection{Quantum Information}

\begin{definition}[Quantum Instrument]
    Let $X$ be a finite set.
    A quantum instrument $\cI=\{\cE_x\}_{x\in X}$ is a collection of completely-positive maps such that $\sum_{x\in X} \cE_x$ is trace-preserving.
    A measurement channel $\cM_\cI$ with respect to the instrument $\cI$ is defined by
    \begin{align}
        \cM_\cI(\rho) \coloneq \sum_{x\in X} |x\rangle\langle x| \otimes \cE_x(\rho),
    \end{align}
    and a post-measurement state $\rho_{|\cI,x*}$ conditioned on the measurement outcome $x^*\in X$ is 
    \begin{align}
        \rho_{|\cI,x^*} \coloneq \frac{\cE_{x^*}(\rho)}{\Tr[\cE_{x^*}(\rho)]}
    \end{align}
\end{definition}

\begin{definition}[(Rotated) Petz Recovery Map \cite{Petz86,Petz88,Wil15}]
    Let $\cN$ be a quantum channel from a register $\regA$ to a register $\regB$.
    The Petz recovery map $\cP_{\sigma_{\regA},\cN}$ with respect to a reference state $\sigma_{\regA}$ and a quantum channel $\cN$ is defined by
    \begin{align}
        \cP_{\sigma_{\regA},\cN}(X_{\regB}) \coloneqq \sigma_{\regA}^{\frac{1}{2}} \cN^\dagger\left( \cN(\sigma_{\regA})^{-\frac{1}{2}} X_{\regB} \cN(\sigma_{\regA})^{-\frac{1}{2}} \right) \sigma_{\regA}^{\frac{1}{2}}.
    \end{align}
    Moreover, for all $t\in\mathbb{R}$, the rotated Petz recovery map $\cR^t_{\sigma_{\regA},\cN}$ is defined by
    \begin{align}
        \cR^t_{\sigma_{\regA},\cN}(X_{\regB}) \coloneqq \sigma_{\regA}^{-it} \cP_{\sigma_{\regA},\cN}\left( \cN(\sigma_{\regA})^{it} X_{\regB} \cN(\sigma_{\regA})^{-it} \right) \sigma_{\regA}^{it}.
    \end{align}
\end{definition}

\begin{lemma}[\cite{JRSWW18}]
\label{lem:Petz_recover}
    For any quantum channel $\cN$ from $\regA$ to $\regB$ and any quantum states $\rho_{\regA}$ and $\sigma_{\regA}$ satisfying $\mathrm{supp}(\rho_{\regA})\subseteq\mathrm{supp}(\sigma_{\regA})$, 
    \begin{align}
        D(\rho_{\regA}\|\sigma_{\regA}) - D(\cN(\rho_{\regA})\|\cN(\sigma_{\regA})) \ge -2 \log F \left(\rho_{\regA},\cR_{\sigma_{\regA},\cN}\circ\cN(\rho_{\regA}) \right),
    \end{align}
    where
    \begin{align}
        \cR_{\sigma_{\regA},\cN}(\cdot) \coloneqq \frac{\pi}{2}\int_{\mathbb{R}} dt \left( \cosh(\pi t)+1 \right)^{-1} \cR^{\frac{t}{2}}_{\sigma_{\regA},\cN}(\cdot).
    \end{align}
\end{lemma}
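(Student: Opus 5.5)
Since the statement is the main result of \cite{JRSWW18}, which the paper imports without proof, I would reproduce its proof by complex interpolation and not attempt a new argument. Throughout I assume $\mathrm{supp}(\rho_{\regA})\subseteq\mathrm{supp}(\sigma_{\regA})$, and I restrict $\cN(\sigma_{\regA})$ to its support so that negative and imaginary powers are defined. Fix a Stinespring isometry $U:\regA\to\regB\otimes\textsf{E}$ with $\cN(X)=\Tr_{\textsf{E}}(UXU^\dagger)$; note $\cN^\dagger(Y)=U^\dagger(Y\otimes I_{\textsf{E}})U$.

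\emph{The interpolating function.} On the strip $S=\{z:0\le \mathrm{Re}\,z\le 1\}$ I would set
\begin{align}
G(z)\coloneqq\left(\cN(\rho_{\regA})^{-z/2}\cN(\sigma_{\regA})^{z/2}\otimes I_{\textsf{E}}\right)U\,\sigma_{\regA}^{-z/2}\rho_{\regA}^{1/2}.
\end{align}
This is holomorphic in the interior of $S$, continuous and bounded on $S$, and the family $\theta\mapsto\|G(\theta)\|_{2/\theta}$ is set up for Hirschman's refinement of the Stein--Hirschman interpolation theorem. For $0<\theta<1$ and $p_\theta=2/\theta$ that theorem gives
\begin{align}
\log\|G(\theta)\|_{p_\theta}\le\int_{\mathbb{R}}dt\,\Bigl(\beta_{1-\theta}(t)\log\|G(it)\|_{\infty}+\beta_\theta(t)\log\|G(1+it)\|_{2}\Bigr),
\end{align}
where $\beta_\theta(t)=\frac{\sin(\pi\theta)}{2\theta(\cosh(\pi t)+\cos(\pi\theta))}$.

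\emph{Boundary values.}
\begin{itemize}
\item On $\mathrm{Re}\,z=0$, all the imaginary powers are partial unitaries and $U$ is an isometry, so $\|G(it)\|_\infty\le\|\rho_{\regA}^{1/2}\|_\infty\le1$. The first term of the bound is therefore nonpositive and can be dropped.
\item On $\mathrm{Re}\,z=1$, I would expand $\|G(1+it)\|_2^2$ using $\cN^\dagger(Y)=U^\dagger(Y\otimes I)U$. The resulting expression is $\Tr\bigl[\rho_{\regA}^{1/2}\,\cdots\,\rho_{\regA}^{1/2}\bigr]$, with the conjugations of $\cR^{t/2}_{\sigma_{\regA},\cN}$ applied to $\cN(\rho_{\regA})^{1/2}$ in between.
\item I would then rewrite this as $\Tr\bigl[\rho_{\regA}^{1/2}\,\cR^{t/2}_{\sigma_{\regA},\cN}\circ\cN(\rho_{\regA})^{1/2}\bigr]$-type overlaps and bound it by $F\bigl(\rho_{\regA},\cR^{t/2}_{\sigma_{\regA},\cN}\circ\cN(\rho_{\regA})\bigr)$ via Cauchy--Schwarz and the variational characterization $F(\rho,\omega)=\max_V|\Tr(\rho^{1/2}V\omega^{1/2})|$.
\end{itemize}
Getting exactly the rotated Petz map here, with the right sign and factor $t/2$ in the rotation, is the bookkeeping step I expect to need the most care. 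If the factors do not line up directly, I would first try adjusting the placement of $\rho_{\regA}^{1/2}$ versus $\cN(\rho_{\regA})^{1/2}$ in $G$.

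\emph{Limit $\theta\to0$.} At $\theta=0$ we have $\|G(0)\|_\infty=\|\rho_{\regA}^{1/2}\|_\infty\le 1$, while the right-hand side tends to its limiting form. Dividing by $\theta$ and letting $\theta\to0^+$, I would:
\begin{itemize}
\item compute the left-hand side as $\frac{d}{d\theta}\big|_{\theta=0}\log\|G(\theta)\|_{2/\theta}$, which after a standard differentiation of Schatten norms equals $-\frac12\bigl(D(\rho_{\regA}\|\sigma_{\regA})-D(\cN(\rho_{\regA})\|\cN(\sigma_{\regA}))\bigr)$;
\item use $\beta_\theta(t)\to\beta_0(t)=\frac{\pi}{2}(\cosh(\pi t)+1)^{-1}$, which is a probability density on $\mathbb{R}$;
\item handle the fact that the bound degenerates at $\theta=0$ (it is tight there) by comparing the $\theta$-derivatives of the two sides rather than the values.
\end{itemize}
This differentiation, and justifying the exchange of limit and integral by dominated convergence using the boundedness of $G$, is the main analytic obstacle. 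This yields
\begin{align}
D(\rho_{\regA}\|\sigma_{\regA})-D(\cN(\rho_{\regA})\|\cN(\sigma_{\regA}))\ge-2\int dt\,\beta_0(t)\log F\bigl(\rho_{\regA},\cR^{t/2}_{\sigma_{\regA},\cN}\circ\cN(\rho_{\regA})\bigr).
\end{align}

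\emph{Conclusion.} Finally, I would move the average inside the logarithm and the fidelity. By Jensen's inequality, $\int\beta_0\log F\le\log\int\beta_0 F$, and concavity of fidelity in its second argument gives $\int\beta_0 F(\rho,\omega_t)\le F\bigl(\rho,\int\beta_0\,\omega_t\bigr)$. By the definition of $\cR_{\sigma_{\regA},\cN}$ in the statement, $\int\beta_0(t)\,\cR^{t/2}_{\sigma_{\regA},\cN}\,dt=\cR_{\sigma_{\regA},\cN}$. Substituting this into the previous display gives exactly the inequality of the lemma.
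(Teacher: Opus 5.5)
The paper does not prove this lemma; it imports it from \cite{JRSWW18}. Your skeleton matches that proof: Hirschman's strengthening of Stein--Hirschman interpolation, drop the $\mathrm{Re}\,z=0$ term, divide by $\theta$, let $\theta\to0^+$, then apply Jensen and concavity of the root fidelity. Your final Jensen/concavity step is correct.

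\textbf{The gap.} The interpolating family you chose cannot work, and the problem is structural rather than bookkeeping: both endpoint claims fail.
\begin{itemize}
\item \emph{The $\theta=0$ endpoint.} With $G(z)=(\cN(\rho)^{-z/2}\cN(\sigma)^{z/2}\otimes I)U\sigma^{-z/2}\rho^{1/2}$ and $p_\theta=2/\theta$, the point $\theta=0$ carries the operator norm, and $\|G(0)\|_\infty=\|\rho^{1/2}\|_\infty=\lambda_{\max}(\rho)^{1/2}<1$ for mixed $\rho$. Hence $\theta^{-1}\log\|G(\theta)\|_{2/\theta}\to-\infty$. No derivative identity produces $D(\rho\|\sigma)-D(\cN(\rho)\|\cN(\sigma))$ here: near $p=\infty$, $\log\|X\|_p$ only sees the largest singular value and its multiplicity.
\item \emph{Where the entropy difference actually lives.} It is the $\alpha\to1$ limit of $\widetilde\Delta_\alpha=\frac{2}{\alpha'}\log\bigl\|(\cN(\rho)^{-\alpha'/2}\cN(\sigma)^{\alpha'/2}\otimes I)U\sigma^{-\alpha'/2}\rho^{1/2}\bigr\|_{2\alpha}$ with $\alpha'=(\alpha-1)/\alpha$. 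So the endpoint where the exponents vanish must carry the Schatten-$2$ norm, where $\|U\rho^{1/2}\|_2=1$.
\item \emph{The $\mathrm{Re}\,z=1$ endpoint.} Your $\|G(1+it)\|_2^2$ contains $\cN(\rho)^{-1}$ and $\sigma^{-1/2}$, so it is an inverse, $\chi^2$-type quantity, not a fidelity. Already for $\cN=\mathrm{id}$ you get $G(z)=\rho^{(1-z)/2}$, so $\|G(1+it)\|_2=\sqrt{\mathrm{rank}\,\rho}>1$, while $F=1$ because the Petz map recovers perfectly. The bound $\|G(1+it)\|_2\le F$ you need is therefore false, and no Cauchy--Schwarz rearrangement rescues it.
\item \emph{Direction.} With $-z/2$ on $\cN(\rho)$ you are interpolating toward $\alpha>1$; the consistent pairing would be $p_\theta=2/(1-\theta)$ with $\|\cdot\|_\infty$ at $\mathrm{Re}\,z=1$. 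That can only give an \emph{upper} bound on the entropy difference. The fidelity sits at $\alpha=1/2$, i.e.\ $\alpha'=-1$ and the trace norm.
\item \emph{Hirschman weights.} Your Hirschman bound omits the weights $1-\theta$ and $\theta$ in front of the two integrals. As written it fails already for a constant scalar function $c\in(0,1)$. Without the factor $\theta$, dividing by $\theta$ also does not leave the average $\int\beta_\theta(t)\log F_t\,dt$.
\end{itemize}

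\textbf{The repair}, as in \cite{JRSWW18}, is to flip the signs and interpolate between $p_0=2$ and $p_1=1$:
\begin{align}
G(z)\coloneqq\left(\cN(\rho)^{z/2}\cN(\sigma)^{-z/2}\otimes I\right)U\sigma^{z/2}\rho^{1/2},\qquad p_\theta=\frac{2}{1+\theta},
\end{align}
\begin{align}
\log\|G(\theta)\|_{p_\theta}\le\int_{\mathbb{R}}dt\,\Bigl((1-\theta)\beta_{1-\theta}(t)\log\|G(it)\|_2+\theta\,\beta_\theta(t)\log\|G(1+it)\|_1\Bigr).
\end{align}
\begin{itemize}
\item On $\mathrm{Re}\,z=0$: $\|G(it)\|_2\le\|\rho^{1/2}\|_2=1$, so the first term is nonpositive.
\item On $\mathrm{Re}\,z=1$: write $G(1+it)=A_t\rho^{1/2}$. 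One checks $A_t^\dagger A_t=\cR^{t/2}_{\sigma,\cN}(\cN(\rho))$, so $\|G(1+it)\|_1=\|\sqrt{A_t^\dagger A_t}\sqrt{\rho}\|_1=F(\rho,\cR^{t/2}_{\sigma,\cN}\circ\cN(\rho))$ exactly. The rotation by $t/2$ appears automatically and no inequality is needed.
\item Dividing by $\theta$ gives $\widetilde\Delta_{1/(1+\theta)}=-\frac{2}{\theta}\log\|G(\theta)\|_{2/(1+\theta)}\ge-2\int\beta_\theta(t)\log F_t\,dt$.
\item The left side tends to $D(\rho\|\sigma)-D(\cN(\rho)\|\cN(\sigma))$ as $\theta\to0^+$. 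This is where the Schatten-norm differentiation belongs: at $p=2$, with $\|G(0)\|_2=1$.
\item For the right side, apply Fatou's lemma to the nonnegative integrand $-\beta_\theta(t)\log F_t$, using $\beta_\theta\to\beta_0$ pointwise. Dominated convergence via boundedness of $G$ is the wrong tool, since that bounds $\log F_t$ from above, not $-\log F_t$.
\end{itemize}
Your concluding Jensen/concavity step then applies verbatim.
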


\begin{definition}[$\gamma$-Approximate $t$-Design \cite{CCC:AE07}]
    An ensemble of $d$-dimensional quantum states $\{p_k,|\phi_k\rangle\}_{k\in[K]}$ is a $\gamma$-approximate $t$-design if
    \begin{align}
        (1-\gamma)\int (|\psi\rangle\langle\psi|)^{\otimes t} d\psi 
        \le \sum_{k\in[K]} p_k (|\phi_k\rangle\langle\phi_k|)^{\otimes t} 
        \le (1+\gamma) \int (|\psi\rangle\langle\psi|)^{\otimes t} d\psi,
    \end{align}
    and
    \begin{align}
        \sum_{k\in[K]} p_k |\phi_k\rangle\langle\phi_k| = \int |\psi\rangle\langle\psi| d\psi
    \end{align}
    where the integral is taken over the Haar measure on the unit sphere in $\mathbb{C}^d$.
\end{definition}
For any $\gamma$ and $t$, \cite{CCC:AE07} gave a construction of a $\gamma$-approximate $t$-design with $O(d^t(\log d/\gamma)^c)$ states for some constant $c$.
We will use the following useful property of approximate $4$-design.
\begin{lemma}[Derived from \cite{CCC:AE07}]
\label{lem:berger}
    Let $\{p_k,|\phi_k\rangle\}_{k\in[K]}$ be a $\gamma$-approximate $4$-design in dimension $d$, where $K$ denotes the number of states that form a $4$-design.
    There exist constants $c_1,c_2$ such that for any $d$-dimensional states $\rho$ and $\sigma$ and any $\gamma\le \frac{c_1}{d^2}\TD(\rho,\sigma)^4$,
    \begin{align}
        \frac{d}{2}\sum_{k\in[K]} p_k |\langle\phi_k|(\rho-\sigma)|\phi_k\rangle| \ge \frac{c_2}{\sqrt{d}} \TD(\rho,\sigma).
    \end{align}
\end{lemma}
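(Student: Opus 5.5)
The plan is to reduce the claim to a moment comparison (Berger's inequality), first for the exact Haar ensemble and then for the approximate design. Set $\Delta\coloneqq\rho-\sigma$, a traceless Hermitian operator with $\|\Delta\|_1=2\TD(\rho,\sigma)$. Let $X_k\coloneqq\langle\phi_k|\Delta|\phi_k\rangle$ be a random variable with distribution $p_k$. By Hölder's inequality (Berger's inequality), $\mathbb{E}[X^2]=\mathbb{E}[|X|^{1/2}|X|^{3/2}]\le \mathbb{E}[|X|]^{2/3}\mathbb{E}[X^4]^{1/3}$, so
\begin{align}
\mathbb{E}|X| \ge \frac{\mathbb{E}[X^2]^{3/2}}{\mathbb{E}[X^4]^{1/2}}.
\end{align}
It therefore suffices to lower bound the second moment and upper bound the fourth moment under the design. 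Both are of the form $\Tr[D_t\,\Delta^{\otimes t}]$ with $D_t\coloneqq\sum_k p_k(|\phi_k\rangle\langle\phi_k|)^{\otimes t}$ for $t=2,4$.

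\textbf{Haar moments.} Write $H_t\coloneqq\int(|\psi\rangle\langle\psi|)^{\otimes t}d\psi=\Pi^{(t)}_{\mathrm{sym}}/\binom{d+t-1}{t}$. Expanding $\Pi_{\mathrm{sym}}^{(t)}$ as an average over permutations and using $\Tr\Delta=0$ gives
\begin{align}
\Tr[H_2\Delta^{\otimes 2}]&=\frac{\|\Delta\|_2^2}{d(d+1)}, \\
\Tr[H_4\Delta^{\otimes 4}]&=\frac{3(\Tr\Delta^2)^2+6\Tr\Delta^4}{d(d+1)(d+2)(d+3)}\le\frac{9\|\Delta\|_2^4}{d(d+1)(d+2)(d+3)}.
\end{align}
In the fourth moment, only cycle types with no fixed points survive: three double transpositions and six $4$-cycles. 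Plugging these into Berger's inequality yields $\mathbb{E}_{\mathrm{Haar}}|X|\ge c\|\Delta\|_2/d$. Combining this with $\|\Delta\|_2\ge\|\Delta\|_1/\sqrt d=2\TD/\sqrt d$ gives $\frac d2\mathbb{E}|X|\ge c'\TD/\sqrt d$, which is the claimed form.

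\textbf{Passing to the approximate design.} This is the step needing care, since $\Delta^{\otimes t}$ is not positive and the operator inequalities cannot be applied to it directly. Take the $4$-design condition $(1-\gamma)H_4\le D_4\le(1+\gamma)H_4$ and take partial traces to get the same sandwich for $D_2$ versus $H_2$. Then $E_t\coloneqq D_t-H_t$ satisfies $-\gamma H_t\le E_t\le\gamma H_t$. Decompose $\Delta^{\otimes t}$ into its positive and negative parts, which are bounded by $|\Delta|^{\otimes t}$. Together with $H_t\le I/\binom{d+t-1}{t}$, this gives
\begin{align}
|\Tr[E_t\Delta^{\otimes t}]|\le\gamma\Tr[H_t|\Delta|^{\otimes t}]\le\frac{\gamma\, t!\,\|\Delta\|_1^t}{d^t}.
\end{align}
For $t=2$ the error is $O(\gamma\TD^2/d^2)$. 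Compare it with the Haar value $\ge 4\TD^2/(d^2(d+1))$: it is at most half of that once $\gamma\le c/d$. For $t=4$ the error is $O(\gamma\TD^4/d^4)$, which is at most a constant times the Haar scale $\|\Delta\|_2^4/d^4$. Since $\TD\le1$, the hypothesis $\gamma\le c_1\TD^4/d^2$ implies both requirements for small enough $c_1$. Hence the design moments are within constant factors of the Haar moments, and Berger's inequality transfers with a new constant $c_2$.

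The main obstacle is this last perturbation step. The naive bound $|\Tr[E_t\Delta^{\otimes t}]|\le 2\gamma\|\Delta\|_\infty^t$ would force $\gamma\lesssim d^{-3}$. It is essential to use $H_t\le I/\binom{d+t-1}{t}$ together with trace-norm control of $|\Delta|^{\otimes t}$ to get errors that scale like the Haar moments.
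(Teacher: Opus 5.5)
Your proof is correct and is essentially the Berger fourth-moment method (compare the design's second and fourth moments to the Haar moments) behind the Ambainis--Emerson result, which the paper cites in place of giving a proof. By keeping the factor $\|\Delta\|_1^t$ in the perturbation error you in fact only need $\gamma\le c/d^2$, so the stated hypothesis $\gamma\le c_1\TD(\rho,\sigma)^4/d^2$ is more than sufficient; one cosmetic slip is that the H\"older factorization giving $\mathbb{E}[X^2]\le\mathbb{E}[|X|]^{2/3}\mathbb{E}[X^4]^{1/3}$ is $|X|^{2/3}\cdot|X|^{4/3}$ with exponents $3/2$ and $3$, not $|X|^{1/2}\cdot|X|^{3/2}$.
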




\subsection{Quantum Algorithmic Information}
We introduce the notion of \emph{$t$-time-bounded universal density matrix}.
This is a time-bounded variant of universal (semi-)density matrix introduced by G\'acs~\cite{Gac01}.

\begin{definition}[$t$-Time-Bounded Universal Density Matrix]
    Let $U$ be a universal (classical) Turing machine and let $t\in\N$ be a time bound.
    The $n$-qubit $t$-time-bounded universal density matrix with respect to $U$ is
    \begin{align}
        \mu^{U,t}_n \coloneq 2^{-t}\sum_{\Pi\in\bit^t}\rho^{U,t}_n(\Pi),
    \end{align}
    where $\rho^{U,t}_n(\Pi)$ is an $n$-qubit state defined as follows:
    Let $\sigma(\Pi)$ be an $\ell_\Pi$-qubit state whose preparation circuit is generated by $U^t(\Pi)$. 
    Then,
    \begin{align}
        \rho^{U,t}_n(\Pi) \coloneqq 
        \begin{cases}
            \Tr_{\ell_\Pi-n} \sigma(\Pi) & \text{if $\ell_\Pi\ge n$} \\
            \sigma(\Pi)\otimes |0^{n-\ell_\Pi}\rangle\langle 0^{n-\ell_\Pi}| & \text{if $\ell_\Pi<n$},
        \end{cases}
        \label{eq:def_padding}
    \end{align}
    where $\Tr_{\ell_\Pi-n}$ denotes the partial trace over the last $\ell_\Pi-n$ qubits.
    Moreover, if $U^t(\Pi)$ does not output a valid circuit description, we set $\rho^{U,t}_n(\Pi)=|0^n\rangle\langle 0^n|$.
\end{definition}
Note that the partial trace of an $n$-qubit time-bounded universal density matrix over its last $k$ qubits is identical to the $(n-k)$-qubit time-bounded universal density matrix, namely $\mu^{U,t}_{n-k}=\Tr_k\mu^{U,t}_n$.
In the following, whenever we say that a Turing machine generates a circuit description of an $n$-qubit state, if the circuit does not output exactly $n$ qubits, we implicitly apply partial tracing or padding, as in \cref{eq:def_padding}, to adjust the number of output qubits to $n$.
Then, we prove the following domination property.
\begin{lemma}\label{lem:domination}
    There exists a universal Turing machine $U$, a polynomial $p:\N^2\to\N$, and a constant $c>0$ such that for all $n,t\in\mathbb N$, all Turing machines $T$, and all $\tau\ge p(|T|,t)$,
    \begin{align}
        \mu_n^{U,\tau} \ge 2^{-c|T|} \rho_n^{T,t},
    \end{align}
    where $|T|$ denotes the length of the binary description of $T$ and $\rho_n^{T,t}$ is an $n$-qubit quantum state whose preparation circuit is generated by $T^t$.
\end{lemma}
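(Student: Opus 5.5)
The plan is to pick $U$ as a universal machine with a prefix-free self-delimiting encoding of machines. On input $\Pi=\langle T\rangle\|w$, where $\langle T\rangle$ is a prefix-free code of $T$ of length at most $c|T|$ (for instance, doubling each bit and appending a terminator, so $c=2$ plus an additive constant absorbed into $c$), $U$ simulates $T$ on the empty input (or on a fixed input convention) for $t$ steps and outputs whatever $T^t$ outputs. The padding bits $w$ are ignored, and the time bound $t$ is also encoded in the prefix if needed. Standard universal simulation gives $p$ with $U$ simulating $t$ steps of $T$ in time $p(|T|,t)$ (e.g.\ $O(|T|\,t\log t)$ or any polynomial overhead). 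To avoid needing $t$ in the program, I would let $U$ simply run $T$ until it halts while the outer $\tau$-step cutoff enforces the bound. This step is delicate, since $T$ run longer than $t$ steps might output something different from $T^t$. So I would instead encode $t$ in binary into the prefix as well. This costs $O(\log t)$ extra bits, which is not $O(|T|)$. The alternative I prefer is to interpret ``$T^t$'' as a fixed machine $T'$ that already halts by time $t$, or to have $U$ run $T$ for exactly as long as its budget permits. I would fix the convention that $U$ receives $t$ via the requirement $\tau\ge p(|T|,t)$ only if $T$ halts within $t$ steps. Since the lemma is stated for all $T$, the cleanest route is to prepend a clock whose length is determined by $\tau$ itself: $U$ on a $\tau$-bit input can compute $t$ as the largest value with $p(|T|,t)\le\tau$. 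That only yields $T^{t'}$ for some $t'\ge t$, however, so I will restrict to the case where $T^t$ halts with output by step $t$, and then $T^{t'}$ gives the same output. This is the main obstacle, and I expect the intended reading to be that $T^t$ denotes a halting run within $t$ steps.

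Given this $U$, I count programs. Every $\Pi\in\{0,1\}^\tau$ whose first $|\langle T\rangle|\le c|T|$ bits equal $\langle T\rangle$ makes $U^\tau(\Pi)$ output the same circuit description as $T^t$. This holds because the simulation finishes within $\tau\ge p(|T|,t)$ steps and the remaining bits are ignored. Hence $\rho^{U,\tau}_n(\Pi)=\rho^{T,t}_n$ for each such $\Pi$, with the same truncation/padding convention of \cref{eq:def_padding}, including the invalid-circuit case $|0^n\rangle\langle0^n|$, which matches as long as both sides use the same convention. The number of such $\Pi$ is $2^{\tau-|\langle T\rangle|}\ge 2^{\tau-c|T|}$ (we need $\tau\ge c|T|$, ensured by taking $p(a,b)\ge ca$).

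Finally, I use that every summand $\rho^{U,\tau}_n(\Pi)$ is positive semidefinite. Dropping all other programs gives
\[
\mu^{U,\tau}_n \ge 2^{-\tau}\cdot 2^{\tau-c|T|}\rho^{T,t}_n = 2^{-c|T|}\rho^{T,t}_n .
\]
The routine parts are the prefix-free encoding and the polynomial simulation overhead. The real care is in making the output of $U^\tau$ coincide with that of $T^t$, which is the main obstacle identified above.
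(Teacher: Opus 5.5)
Your proposal is correct and follows essentially the same route as the paper. Both use a self-delimiting code $\Pi_T$ of length at most $c|T|$; the paper uses $1^{|k|}\|0\|k\|T$, where $k$ is the binary length of $T$. Both take a polynomial $p$ with $p(|T|,t)\ge|\Pi_T|$ that bounds the simulation time, and both note that the $2^{\tau-|\Pi_T|}$ inputs extending $\Pi_T$ each contribute $\rho^{T,t}_n$, while all other summands are positive semidefinite and can be dropped. The subtlety you flag, that $U$ does not know $t$, is real, and the paper passes over it: it simply asserts that $U^\tau(\Pi_T\|x)$ simulates $T^t$ for every $\tau\ge p(|T|,t)$, which implicitly uses your reading that $T$ halts with its circuit by step $t$. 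Under that reading your first idea already suffices: a plain run-until-halt $U$ with the outer $\tau$-step cutoff. You do not need the clock derived from $\tau$, which would not fit the budget anyway, since on a Turing machine merely measuring the input length takes about $\tau$ steps.
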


\begin{proof}[Proof of \cref{lem:domination}]
For a Turing machine $T$, let $\Pi_T=1^{|k|}\|0\|k\|T$ be its self-delimiting encoding, where $k$ is a binary representation of $|T|$.
Note that $|\Pi_T|\le |T|+2\log|T|+1\le c|T|$ for some constant $c$.
Then, there exists a time-efficient Turing machine $U$ and a polynomial $p:\N^2\to\N$ such that for any $t\in\N$, any Turing machine $T$, any $\tau\ge p(|T|,t)$, and any $x\in\bit^*$, $U^\tau(\Pi_T\|x)$ simulates $T^t$.
Without loss of generality, we can choose $p$ to satisfy $p(|T|,t)\ge |\Pi_T|$ for all $t$.
By the definition of the time-bounded universal density matrix,
\begin{align}
    \mu_n^{U,\tau}
    &=2^{-\tau} \sum_{\Pi\in\bit^{\tau}} \rho_n^{U,\tau}(\Pi) \\
    &\ge 2^{-\tau} \sum_{r\in\bit^{\tau-|\Pi_{T}|}} \rho_n^{U,\tau}(\Pi_T\|r) \\
    &= 2^{-\tau} \sum_{r\in\bit^{\tau-|\Pi_{T}|}} \rho_n^{T,t} \\
    &= 2^{-|\Pi_{T}|} \rho_n^{T,t}.
\end{align}
Using the fact that $|\Pi_{T}|\le c|T|$ for some constant $c$, we finally obtain
\begin{align}
    \mu_n^{U,\tau} \ge 2^{-c|T|} \rho_n^{T,t}.
\end{align}
\end{proof}

\subsection{Cryptography}

\begin{definition}[Infinitely-Often One-Way Puzzles \cite{STOC:KhuTom24}]
    An infinitely-often one-way puzzle (OWPuzz) is a pair $(\Samp, \Ver)$ of algorithms with the following syntax:
    \begin{itemize}
    \item 
    $\Samp(1^\secp)\to (\puzz,\ans):$
    A QPT algorithm that, on input the security parameter $\secp$, outputs a pair $(\puzz,\ans)$ of classical strings.
    \item
    $\Ver(\puzz,\ans')\to\top/\bot:$
    An unbounded algorithm that, on input $(\puzz,\ans')$, outputs $\top$ or $\bot$.
    \end{itemize}
    We require the following properties.
    \begin{itemize}
    \item \textbf{Correctness:} 
        \begin{align}
            \Pr[\top\gets\Ver(\puzz,\ans):(\puzz,\ans)\gets\Samp(1^\secp)] \ge 1-\negl(\secp).
        \end{align}
    \item \textbf{Security:} For any QPT adversary $\cA$ and any polynomial $p$,
        \begin{align}
            \Pr[\top\gets\Ver(\puzz,\cA(1^\secp,\puzz)):(\puzz,\ans)\gets \Samp(1^\secp)] \le \frac{1}{p(\secp)}
        \end{align}
        holds for infinitely many $\secp\in\N$.
    \end{itemize}
\end{definition}

We also review the definition of infinitely-often distributional one-way puzzles.

\begin{definition}[Infinitely-Often Distributional One-Way Puzzles \cite{C:ChuGolGra24}]
    A QPT algorithm $\Samp$ that takes the security parameter $1^\secp$ as input and outputs a pair $(\puzz,\ans)$ of classical bit strings
    is an infinitely-often distributional one-way puzzle
    if there exists a polynomial $p$ such that for any QPT adversary $\cA$, and for infinitely many $\secp\in\N$,
    \begin{align}
        \SD\left( \{\puzz,\ans\}_{(\puzz,\ans)\gets\Samp(1^\secp)} , \{\puzz,\cA(1^\secp,\puzz)\}_{(\puzz,\ans)\gets\Samp(1^\secp)} \right) \ge \frac{1}{p(\secp)}.
    \end{align}
\end{definition}

Clearly, if $(\Samp,\Ver)$ is an infinitely-often OWPuzz, then $\Samp$ is an infinitely-often distributional OWPuzz.
Chung, Goldin, and Gray \cite{C:ChuGolGra24} showed that (infinitely-often) distributional OWPuzzs imply (infinitely-often) OWPuzzs.
Combining them, the following equivalence is known.

\begin{lemma}[\cite{C:ChuGolGra24}]
    \label{lem:distowpuzz_owpuzz}
    Infinitely-often OWPuzzs exist if and only if infinitely-often distributional OWPuzzs exist.
\end{lemma}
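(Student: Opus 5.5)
The forward direction is a direct argument, and for the reverse direction I plan to invoke the construction of \cite{C:ChuGolGra24}, sketching only the parts the paper relies on. Throughout I track the infinitely-often quantifiers, since the two definitions quantify differently: for an io-OWPuzz, security holds for every $\cA$ and every $p$ on infinitely many $\secp$; for an io-distributional OWPuzz, a single $p$ works for every $\cA$ on infinitely many $\secp$.

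\emph{OWPuzz $\Rightarrow$ distributional OWPuzz.} Let $(\Samp,\Ver)$ be an infinitely-often OWPuzz. I will show that $\Samp$ is an infinitely-often distributional OWPuzz with the constant polynomial $p\equiv 4$. Suppose not. Then some QPT $\cA$ achieves $\SD\bigl((\puzz,\ans),(\puzz,\cA(1^\secp,\puzz))\bigr)<1/4$ for all sufficiently large $\secp$. Acceptance by $\Ver$ is an event on pairs, so its probability differs between the two distributions by at most their statistical distance. By correctness, $\cA$ is then accepted with probability at least $1-\negl(\secp)-1/4>1/2$ for all large $\secp$. This contradicts security for the polynomial $2$, which requires success at most $1/2$ for infinitely many $\secp$.

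\emph{Distributional OWPuzz $\Rightarrow$ OWPuzz.} Given a distributional puzzle $\Samp$, I would follow the hashing construction in the style of Impagliazzo--Luby. The new sampler $\Samp'$ runs $(\puzz,\ans)\gets\Samp(1^\secp)$, picks a random output length $k$ and a pairwise-independent hash $h$ with $k$-bit outputs, and outputs $\puzz'=(\puzz,k,h,h(\ans))$ and $\ans'=\ans$. The unbounded verifier accepts $(\puzz',a)$ iff $h(a)=h(\ans)$ and $a$ lies in the "heavy" part of the conditional support of $\Samp$ given $\puzz$. Heaviness is defined by a probability threshold tied to $k$, which $\Ver$ can compute exactly since it is unbounded. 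Correctness should follow because for a typical $\ans$ the threshold test passes with overwhelming probability. I expect this to need some slack, for example repeating the construction in parallel and accepting a $1-o(1)$ fraction of the coordinates.

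For security, suppose a QPT $\cB$ breaks $\Samp'$ for all large $\secp$ with probability $1/q(\secp)$. I would build a distributional inverter $\cA$ as follows. On input $\puzz$, it guesses $k$ close to $\log(1/\Pr[\ans\mid\puzz])$, which is roughly the conditional entropy. It then samples a random $h$ and a random target $y$ and runs $\cB(\puzz,k,h,y)$. Finally, it uses rejection sampling so that the returned $a$ is distributed nearly as the conditional distribution of $\ans$ given $\puzz$. The leftover hash lemma should give that $(h,h(\ans))$ is close to uniform, so a random $y$ imitates real instances. This would show $\SD<1/p(\secp)$ for every fixed $p$ and all large $\secp$, contradicting the distributional property.

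I expect this last step to be the main obstacle. Inverting the hashed puzzle with inverse-polynomial success must be converted into \emph{sampling} with inverse-polynomial statistical error. This requires balancing the hash length against the conditional min-entropy on a per-$\puzz$ basis, and the quantum sampler contributes no structure beyond its output distribution. The quantumness of $\Samp$ does not otherwise matter, since $\Ver$ is unbounded and every object here is classical. For these quantitative details I would defer to \cite{C:ChuGolGra24}.
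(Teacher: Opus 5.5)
Your forward direction is the argument behind the paper's ``clearly'': statistical distance bounds the change in $\Ver$'s acceptance probability, and pairing the constant polynomial $4$ with security for the polynomial $2$ handles the infinitely-often quantifiers correctly. For the converse the paper, like you, simply cites \cite{C:ChuGolGra24}, so your proposal is correct and matches the paper. One caution if your hashing sketch is meant to stand on its own: a single hashed copy is in general only a \emph{weak} OWPuzz. For example, if half of the puzzles have trivially sampleable answers, $\Samp'$ is broken with probability about $1/2$, yet $\Samp$ remains distributionally one-way. So a breaker with success $1/q(\secp)$ cannot directly give $\SD<1/p(\secp)$ for every $p$, and you would also need a weak-to-strong security amplification step for OWPuzzs.
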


\section{Universal Inductive Inference of Quantum States}
\label{sec:uqii}
In this section, we introduce \emph{universal quantum inductive inference} and establish both information-theoretic and computational results. 
We first define universal quantum inductive inference in \cref{sec:def_uqii}. 
Then, in \cref{sec:uqii_info}, we present an information-theoretic algorithm for solving universal quantum inductive inference. 
Finally, we prove its average-case computational hardness under a quantum cryptographic assumption (\cref{sec:uqii_comp}).

\subsection{Definition of Universal Quantum Inductive Inference}
\label{sec:def_uqii}
We formalize universal quantum inductive inference, in which a learner
observes the preceding registers of a quantum state and predicts the next
register conditioned on the observed data.
Let $n,m,s,t\in\N$ be parameters, and for each $i\in[m]$, let $\regR_i$ be an
$n$-qubit register.
We define the set
\begin{align}
    \cC_{n,m}[s,t]
    :=\left\{
        \rho_{\regR_1...\regR_m}:
        \begin{gathered}
            \text{There exists a classical Turing machine $T$ with $|T|\le s$} \\
            \text{ such that $T^t$ outputs a circuit preparing $\rho_{\regR_1...\regR_m}$}
        \end{gathered}
    \right\}
\end{align}
of $mn$-qubit states whose preparation circuit descriptions are generated
by Turing machines described by at most $s$ bits within $t$ steps.
To specify the observed data, fix a state
$\rho_{\regR_1...\regR_m}\in\mathcal C_{n,m}[s,t]$ and consider a sequence of quantum instruments $(\mathcal I_1,\ldots,\mathcal I_m)$, where each $\mathcal I_j:=\{\mathcal E_x^{(j)}\}_{x\in\mathcal X_j}$ acts on $\regR_j$.
Each instrument is specified by a circuit description implementing it.
We denote its classical outcome register by $\regX_j$ and its $n$-qubit
post-measurement register by $\regR_j$.
Let $\rho_{<i}$ and $\rho_{\le i}$ be the reduced states on registers $(\regR_1,...\regR_{i-1})$ and $(\regR_1,...,\regR_i)$.
For a fixed round $i$, let $(x_1,\ldots,x_{i-1})$ be the classical
outcomes obtained by applying $(\mathcal I_1,\ldots,\mathcal I_{i-1})$
to the first $i-1$ registers $(\regR_1,...,\regR_{i-1})$.
Let $h_{<i}:=(\mathcal I_1,x_1,\ldots,\mathcal I_{i-1},x_{i-1})$ and define
\begin{align}
    \mathcal A_{h_{<i}}
    :=\mathcal E_{x_1}^{(1)}\otimes\cdots\otimes
      \mathcal E_{x_{i-1}}^{(i-1)}.
\end{align}
The probability of obtaining $(x_1,...,x_{i-1})$ is
\begin{align}
    \cD^{\cI,\rho}_{<i}(x_1,...,x_{i-1}) \coloneqq \Tr\left[\cA_{h_{<i}}(\rho_{<i})\right].
\end{align}
The learner takes as input $h_{<i}$, the description of the next
instrument $\mathcal I_i$, and a single copy of the post-measurement state
\begin{align}
    \rho_{<i\mid h_{<i}}
    :=\frac{\mathcal A_{h_{<i}}(\rho_{<i})}{\cD^{\cI,\rho}_{<i}(x_1,...,x_{i-1})},
\end{align}
together with the parameters specified below.
Define the conditional state on $(\regR_1,...,\regR_i)$ before the $i$th measurement by
\begin{align}
    \rho_{\le i\mid h_{<i}}
    :=\frac{
        (\cA_{h_{<i}}\otimes\mathrm{id}_{\regR_i})(\rho_{\le i})
    }{\cD^{\cI,\rho}_{<i}(x_1,...,x_{i-1})}.
\end{align}
The prediction target $\tau_{\le i\mid h_{<i}}$ is the state on registers $(\regR_1,...,\regR_i,\regX_i)$ obtained by applying $\mathcal I_i$ to $\regR_i$:
\begin{align}
    \tau_{\le i\mid h_{<i}}
    :=(\mathrm{id}_{\regR_{<i}}\otimes\mathcal M_{\mathcal I_i})
      (\rho_{\le i\mid h_{<i}}),
\end{align}
where $\mathcal M_{\mathcal I_i}$ is the measurement channel associated
with $\mathcal I_i$.
Its marginal on $(\regR_1,...,\regR_{i-1})$ is exactly the learner's input state
$\rho_{<i\mid h_{<i}}$.

Our formal definition is as follows.
\begin{definition}[Universal Quantum Inductive Inference]
\label{def:uqii}
Let $r:\mathbb N^3\to\mathbb N$ be a function.
We say that a quantum algorithm $\mathcal L$ solves universal quantum
inductive inference with round complexity $r$ if, for all
$n,s,t,\epsilon^{-1},\delta^{-1}\in\mathbb N$,
all $m\ge r(s,\epsilon^{-1},\delta^{-1})$,
all $\rho_{\regR_1...\regR_m}\in\mathcal C_{n,m}[s,t]$, and all sequences of instruments $(\mathcal I_1,\ldots,\mathcal I_m)$,
the following holds:
\begin{align}
    \Pr_{\substack{i\leftarrow[m]\\x_{<i}\leftarrow \cD^{\cI,\rho}_{<i}}}
    \left[
        \TD\left(
            \cL(\mathsf{param},h_{<i},\mathcal I_i,
                       \rho_{<i\mid h_{<i}}),
            \tau_{\le i\mid h_{<i}}
        \right)\le\epsilon
    \right]
    \ge 1-\delta,
\end{align}
where
$\mathsf{param}:=(1^s,1^t,1^n,1^{\epsilon^{-1}},1^{\delta^{-1}})$
is the collection of known parameters.
The probability is over the uniform choice of $i\in[m]$ and the measurement outcomes
obtained by applying the first $i-1$ instruments to $\rho_{\regR_1...\regR_m}$.
\end{definition}

\subsection{An Information-Theoretic Algorithm for Universal Quantum Inductive Inference}
\label{sec:uqii_info}

\begin{theorem}\label{thm:info_alg}
There exists an algorithm that solves universal quantum inductive inference with round complexity $r(s,\epsilon^{-1},\delta^{-1}) = O(s\epsilon^{-2}\delta^{-1})$.
\end{theorem}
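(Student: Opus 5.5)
The learner will apply the rotated Petz recovery map of \cref{lem:Petz_recover}, built from a reference state derived from the time-bounded universal density matrix, and the analysis will telescope relative entropies along the measurement history.

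\textbf{Construction.} Set $\tau=p(s,t)$ as in \cref{lem:domination}, and let $\mu:=\mu^{U,\tau}_{mn}$ on $\regR_1\ldots\regR_m$. This is well defined because marginals of $\mu$ are again universal density matrices. Given $h_{<i}$ and $\cI_i$, the learner forms the conditional reference state
\[
\mu_{\le i\mid h_{<i}}:=\frac{(\cA_{h_{<i}}\otimes\mathrm{id}_{\regR_i})(\mu_{\le i})}{\Tr[\cA_{h_{<i}}(\mu_{<i})]}
\]
and sets $\nu_i:=(\mathrm{id}\otimes\cM_{\cI_i})(\mu_{\le i\mid h_{<i}})$ on $\regR_{\le i}\regX_i$. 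It then applies $\cR_{\nu_i,\Tr_{\regR_i\regX_i}}$ to its input $\rho_{<i\mid h_{<i}}$. The map is computed only information-theoretically, so no efficiency is needed.

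\textbf{Per-round bound.} Apply \cref{lem:Petz_recover} with $\cN=\Tr_{\regR_i\regX_i}$, $\rho=\tau_{\le i\mid h_{<i}}$ and $\sigma=\nu_i$. The support condition follows from domination. The output of the recovery map satisfies
\[
-2\log F\le D(\tau_{\le i\mid h_{<i}}\|\nu_i)-D(\rho_{<i\mid h_{<i}}\|\mu_{<i\mid h_{<i}}).
\]
Fuchs--van de Graaf gives $\TD^2\le 1-F^2\le -2\ln F$, so $\TD^2\le (\ln 2)\,\Delta_i(h_{<i})$, where $\Delta_i$ denotes this entropy difference.

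\textbf{Telescoping.} This step is the main obstacle. I must show
\[
\sum_{i=1}^m\mathbb E_{x_{<i}}[\Delta_i(h_{<i})]\le c s.
\]
Write $D_j(h_{\le j}):=D(\rho_{\le j\mid h_{\le j}}\|\mu_{\le j\mid h_{\le j}})$, where both states are conditioned on outcomes $x_1,\ldots,x_j$ and the instrument outputs on $\regR_{\le j}$ are kept. The key identity is a chain rule for a cq-state. For a classical register $\regX_i$,
\[
D(\tau_{\le i\mid h_{<i}}\|\nu_i)=D(p_{x_i}\|q_{x_i})+\mathbb E_{x_i}\bigl[D_i(h_{\le i})\bigr],
\]
where $p$ and $q$ are the outcome distributions of $x_i$ under $\rho$ and $\mu$. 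Hence $\mathbb E[\Delta_i]\ge\mathbb E[D_i]-\mathbb E[D_{i-1}]$, and in fact
\[
\mathbb E[\Delta_i]=\mathbb E[D_i]-\mathbb E[D_{i-1}]+\mathbb E[\mathrm{KL}_i],
\]
so I need to control the telescoping sum including the KL terms. I would instead bound the full sum by the joint quantity
\[
D\Bigl(\textstyle\sum_{x}|x\rangle\langle x|\otimes\cA_x(\rho_{\le m})\,\Big\|\,\sum_x|x\rangle\langle x|\otimes\cA_x(\mu_{\le m})\Bigr),
\]
which equals $\sum_i\mathbb E[\mathrm{KL}_i]+\mathbb E[D_m]$, since each $\Delta_i$ is a conditional increment of this chain. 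This quantity is at most $D(\rho_{\le m}\|\mu_{\le m})$ by data processing for the channel $\bigotimes_j\cM_{\cI_j}$. That in turn is at most $D_{\max}\le c s$ by \cref{lem:domination}, since $\mu\ge 2^{-cs}\rho$ gives $\log\mu\ge -cs+\log\rho$ by operator monotonicity of $\log$.

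The delicate point is that the instruments may be adaptively specified and the $\cI_j$ for $j>i$ are irrelevant to round $i$. I would handle this by fixing the whole instrument sequence, which the definition allows, and checking carefully that $\mathbb E[D_0]=0$ and that the conditional increments match $\Delta_i$ exactly.

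\textbf{Conclusion.} Averaging over $i\gets[m]$ gives $\mathbb E_{i,x_{<i}}[\TD^2]\le O(s)/m$. By Markov's inequality,
\[
\Pr[\TD>\epsilon]\le O(s)/(m\epsilon^2)\le\delta
\]
once $m\ge Cs\epsilon^{-2}\delta^{-1}$, which proves \cref{thm:info_alg}.
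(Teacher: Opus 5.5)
Your proposal is correct and is essentially the paper's proof. It uses the same learner (the rotated Petz map with reference $\nu_i$, which coincides with the paper's $\mu_{\le i\mid h_{<i}}$), the same per-round bound via \cref{lem:Petz_recover} and Fuchs--van de Graaf, and the same final bound $D(\omega^\rho_m\|\omega^\mu_m)\le D(\rho_{\le m}\|\mu_{\le m})\le cs$ followed by Markov. Your cq chain-rule bookkeeping (KL terms plus conditional $D_j$) is an unpacked form of the paper's observation that the classical KL contributions cancel in $\Delta_j=D(\omega^\rho_j\|\omega^\mu_j)-D(\omega^\rho_{j-1}\|\omega^\mu_{j-1})$, so the sum telescopes exactly.
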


\begin{proof}[Proof of \cref{thm:info_alg}]
We first outline the proof strategy.
The learner receives the quantum state left by the past measurements.
It uses this state to predict the next measurement outcome and
post-measurement state, including their correlations with the past.
We construct this prediction using a recovery map determined by
a universal reference state and the observed measurement history.
For the analysis, we compare the true and reference states on the
registers observed so far, keeping the classical measurement records
as well.
The recovery inequality relates the prediction error, averaged over
histories, to the increase in relative entropy as one more register
is observed.
Using the domination property, we bound the total increase by $O(s)$.
Markov's inequality then bounds the probability of an error exceeding
$\epsilon$ at a uniformly chosen round by $O(s/(m\epsilon^2))$,
which gives the claimed round complexity.
We now give the construction and its error analysis in detail.

Fix $n,s,t,m\in\N$, a sequence of instruments $(\cI_1,\ldots,\cI_m)$, and a state $\rho_{\regR_1,...,\regR_m}\in\cC_{n,m}[s,t]$.
For each $j\in[m]$, let
\begin{align}
    \cI_j = \left\{ \cE^{(j)}_x\right\}_{x\in X_j}.
\end{align}
By the definition of $\cC_{n,m}[s,t]$, there exists a Turing machine $T$ with $|T|\le s$ such that $T^t$ outputs the circuit description preparing $\rho_{\regR_1...\regR_m}$.
Then, \cref{lem:domination} implies that there exists a universal Turing machine $U$, a polynomial $p:\N^2\to\N$, and a constant $c$ such that
\begin{align}
    \mu_{\regR_1...\regR_m}^{U,p(s,t)} \ge 2^{-cs} \rho_{\regR_1...\regR_m}.
\end{align}
In the following, we fix such $U$ and $p$ and write $\mu_{\regR_1...\regR_m}=\mu_{\regR_1...\regR_m}^{U,p(s,t)}$ for notational simplicity.

We construct an inference algorithm $\cL$ as follows:
\begin{enumerate}
    \item Take the following as input:
    \begin{itemize}
        \item Parameters $\mathsf{param}=(1^s,1^t,1^n,1^{\epsilon^{-1}},1^{\delta^{-1}})$.
        \item The collection of measurement results $h_{<i}=(\cI_1,x_1,...,\cI_{i-1},x_{i-1})$.
        \item The description of the next measurement $\cI_i$.
        \item The post-measurement state $\rho_{<i|h_{<i}}$ over the registers $(\regR_1,...,\regR_{i-1})$ conditioned on $h_{<i}$.
    \end{itemize}
    \item Define
    \begin{align}
        \cA_{h_{<i}} \coloneq \cE^{(1)}_{x_1} \otimes...\otimes \cE^{(i-1)}_{x_{i-1}},
        \text{ and }
        \mu_{\le i| h_{<i}} \coloneq \frac{ \left( \cA_{h_{<i}}\otimes \cM_{\cI_i}\right)(\mu_{\le i}) }{ \Tr\left[ \left( \cA_{h_{<i}}\otimes \cM_{\cI_i}\right)(\mu_{\le i}) \right] }.
    \end{align}
    Note that $\mu_{\le i|h_{<i}}$ is a state over registers $(\regR_1,...,\regR_{i},\regX_i)$.
    \item Let $\cN_i \coloneqq \Tr_{\regR_i\regX_i}$ be a partial trace discarding the $i$th register $\regX_i\regR_i$ and set
    \begin{align}
        \mu_{<i\mid h_{<i}} \coloneq \cN_i\left(\mu_{\le i\mid h_{<i}}\right).
    \end{align}
    Let $\cR_{\mu_{\le i| h_{<i}},\cN_i}$ be the recovery map that satisfies \cref{lem:Petz_recover}.
    \item Return $\cR_{\mu_{\le i| h_{<i}},\cN_i} \left( \rho_{<i| h_{<i}}\right)$.
\end{enumerate}
We next analyze the performance of $\cL$.
For each $j\in[m]$, we define
\begin{align}
    \omega^\rho_{j} \coloneqq (\cM_{\cI_1}\otimes...\otimes\cM_{\cI_{j}}) \rho_{\le j}, 
    \quad
    \omega^\mu_{j} \coloneqq (\cM_{\cI_1}\otimes...\otimes\cM_{\cI_{j}}) \mu_{\le j}
\end{align}
and
\begin{align}
    \Delta_j 
    &\coloneq D(\omega^\rho_j\|\omega^\mu_j) - D(\omega^\rho_{j-1}\|\omega^\mu_{j-1}) \\
    &= \sum_{x_1,...,x_{j-1}} \cD^{\cI,\rho}_{<j}(x_1,...,x_{j-1}) \left[ D\left( \tau_{\le j| h_{<j}} \middle\| \mu_{\le j| h_{<j}} \right) - D\left( \rho_{<j| h_{<j}} \middle\| \mu_{<j| h_{<j}} \right) \right]
    \label{eq:delta_relative}, 
\end{align}
where
\begin{align}
    \cD^{\cI,\rho}_{<i}(x_1,...,x_{i-1})\coloneqq \Tr \left[ \cA_{h_{<i}}(\rho_{<i}) \right]
\end{align}
Since
\begin{align}
    \cN_i\left(
        \tau_{\le i\mid h_{<i}}
    \right)
    =
    \rho_{<i\mid h_{<i}}, \text{ and }
    \cN_i\left(
        \mu_{\le i\mid h_{<i}}
    \right)
    =
    \mu_{<i\mid h_{<i}},
\end{align}
we obtain
\begin{align}
    D\left( \tau_{\le i| h_{<i}} \middle\| \mu_{\le i\mid h_{<i}} \right) - D\left(
    \rho_{<i| h_{<i}} \middle\| \mu_{<i| h_{<i}} \right) 
    &\ge -2\ln F\left( \tau_{\le i\mid h_{<i}}, \cR_{ \mu_{\le i| h_{<i}}, \cN_i} \left(\rho_{<i| h_{<i}} \right)\right) \\
    &= -2\ln F\left( \tau_{\le i| h_{<i}}, \cL\left( \mathsf{param},h_{<i},\cI_i,\rho_{<i\mid h_{<i}}\right)\right).
\end{align}
By the Fuchs--van de Graaf inequality and $1-F(\rho,\sigma)^2\le -2\ln F(\rho,\sigma)$, it follows that
\begin{align}
    \label{eq:relative_TD}
    D\left( \tau_{\le i| h_{<i}} \middle\| \mu_{\le i| h_{<i}} \right) - D\left( \rho_{<i| h_{<i}} \middle\| \mu_{<i| h_{<i}} \right) 
    \ge \TD\left( \cL\left( \mathsf{param},h_{<i},\cI_i,\rho_{<i| h_{<i}}\right),\tau_{\le i| h_{<i}} \right)^2.
\end{align}
Combining \cref{eq:delta_relative} and \cref{eq:relative_TD}, we obtain
\begin{align}
    \Delta_i
    \ge \mathbb E_{x_{<i}} \left[ \TD\left(\cL\left(\mathsf{param},h_{<i},\cI_i,\rho_{<i| h_{<i}}\right), \tau_{\le i| h_{<i}} \right)^2 \right].
\end{align}
Then,
\begin{align}
    \sum_{i\in[m]} \mathbb{E}_{x_{<i}} \left[ \TD\left( \cL\left(\mathsf{param},h_{<i},\cI_i,\rho_{<i| h_{<i}}\right),
    \tau_{\le i| h_{<i}}\right)^2 \right]
    &\le \sum_{i\in[m]} \Delta_i\\
    &= D(\omega^\rho_m\|\omega^\mu_m) \\
    &\le D(\rho_{\le m}\|\mu_{\le m}) \\
    &\le cs.
\label{eq:total_error}
\end{align}
Finally, by Markov's inequality,
\begin{align}
    &\Pr_{i,x_{< i}} \left[ \TD\left( \cL\left( \mathsf{param},h_{<i},\cI_i,\rho_{<i| h_{<i}}\right),\tau_{\le i| h_{<i}}\right) > \epsilon \right] \\
    &\quad\le \frac{1}{m\epsilon^2} \sum_{i\in[m]} \mathbb{E}_{x_{<i}} \left[\TD\left( \cL\left(\mathsf{param},h_{<i},\cI_i,\rho_{<i| h_{<i}}\right),\tau_{\le i| h_{<i}}\right)^2 \right]\\
    &\quad\le \frac{cs}{m\epsilon^2}.
\end{align}
Therefore, $\cL$ solves universal quantum inductive inference with round complexity $O(s\epsilon^{-2}\delta^{-1})$.
\end{proof}

\subsection{Average-Case Computational Hardness of Universal Quantum Inductive Inference}
\label{sec:uqii_comp}

We introduce a notion of average-case universal quantum inductive inference.
\begin{definition}[Average-Case Universal Quantum Inductive Inference]
Let $\secp\in\N$ be a parameter and let $r:\N^4\to\N$ be a polynomial.
We say that a QPT algorithm $\cL$ solves average-case universal quantum inductive inference with round complexity $r$ if, for all polynomials $m,n,s,t,t',p,q:\N\to\N$ with $m(\secp)\ge r(\secp,s(\secp),p(\secp),q(\secp))$, all $t'(\secp)$-time algorithms $\cG$, the following holds:
\begin{itemize}
    \item $\cG$ takes $1^\secp$ as input and outputs a Turing machine $T$ such that $|T|\le s(\secp)$ and $T^{t(\secp)}$ generates a circuit description of $\rho_{\regR_1...\regR_m}^T\in\cC_{n(\secp),m}[s(\secp),t(\secp)]$, and a sequence $\cI=(\cI_1,...,\cI_m)$ of instruments, where each $\cI_i$ acts on $\regR_i$.
    \item For all sufficiently large $\secp$,
    \begin{align}
    \Pr_{i,T,\cI,x_{<i}}\left[
        \TD\left(
            \cL(\mathsf{param},h_{<i},\cI_i,\rho^T_{\regR_1...\regR_{i-1}|h_{<i}}),
            \tau^T_{\regR_1...\regR_i|h_{<i}}
        \right)\le \frac{1}{p(\secp)}
    \right]\ge1-\frac{1}{q(\secp)},
    \end{align}
    where $i\gets[m]$, $(T,\cI)\gets\cG(1^\secp)$, $x_{<i}\gets\cD_{<i}^{\cI,\rho}$, $\mathsf{param}\coloneqq(1^\secp,1^{n(\secp)},1^{s(\secp)},1^{t(\secp)},1^{t'(\secp)},1^{p(\secp)},1^{q(\secp)})$, and $h_{<i}\coloneqq(\cI_1,x_1,...,\cI_{i-1},x_{i-1})$.
\end{itemize}
\end{definition}

\begin{theorem}
\label{thm:owpuzz_to_hard}
    If infinitely-often OWPuzzs exist, then for any polynomial $r$, there exists no QPT algorithm that solves average-case universal quantum inductive inference with round complexity $r$.
\end{theorem}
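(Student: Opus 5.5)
We argue by contraposition. Suppose some QPT learner $\cL$ solves average-case universal quantum inductive inference with some polynomial round complexity $r$. We build an inverter that breaks every candidate infinitely-often distributional OWPuzz. By \cref{lem:distowpuzz_owpuzz} this rules out infinitely-often OWPuzzs.

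Fix a QPT sampler $\Samp$, say with $\Samp(1^\secp)$ outputting $(\puzz,\ans)$ of length at most $\ell(\secp)$ each. We must show that for every polynomial $p_0$ there is a QPT $\mathsf{Inv}$ with
\[
\SD\bigl((\puzz,\ans),(\puzz,\mathsf{Inv}(\puzz))\bigr)\le 1/p_0(\secp)
\]
for all sufficiently large $\secp$.

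\textbf{The hard instance.} Let $T_\secp$ be a fixed Turing machine: on its hardwired input $\secp$ it writes out a circuit that runs $k=m/2$ independent purified copies of $\Samp(1^\secp)$. For each copy it copies $\puzz$ into register $\regR_{2j-1}$ and $\ans$ into $\regR_{2j}$ in the computational basis, padded to $n=\ell(\secp)$ qubits. The description length is $|T_\secp|=O(\log\secp)+O(1)\le s(\secp)$, and the running time is polynomial. Let $\cG(1^\secp)$ output $T_\secp$ together with computational-basis instruments $\cI_j$ on every register; this is a $t'$-time algorithm. Choose $p(\secp)=2p_0(\secp)$, $q(\secp)=4p_0(\secp)$, and $m=\max\{r(\secp,s,p,q),\,8p_0\}$ rounded to an even number. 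Because the instruments are computational-basis measurements and the source is classical after measurement, the target $\tau_{\le i|h_{<i}}$ at an even round $i=2j$ is the past classical registers tensored with the conditional distribution of $\ans$ given $\puzz=x_{2j-1}$. This holds because the pairs are independent, so conditioning on earlier pairs has no effect.

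\textbf{The inverter.} On input $\puzz$, $\mathsf{Inv}$ samples $j\gets[k]$ and runs $\Samp$ honestly $j-1$ times to obtain the history $x_1,\dots,x_{2j-2}$. It sets $x_{2j-1}=\puzz$ and prepares the basis states for these outcomes, which is exactly $\rho_{<2j|h_{<2j}}$. It then runs $\cL$ and measures the $\regX_{2j}$ register of the output. The history fed to $\cL$ has precisely the distribution of $(i,x_{<i})$ conditioned on $i$ being even.

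\textbf{The error bound.} The learner's guarantee gives failure probability at most $1/q$ over uniform $i$. Conditioning on $i$ even at most doubles this, to $2/q=1/(2p_0)$. On success, the trace distance of $\cL$'s output to the target is at most $1/p=1/(2p_0)$. Measuring $\regX_{2j}$ is a channel, so the statistical distance between $\mathsf{Inv}(\puzz)$ and the true conditional distribution of $\ans$ given $\puzz$ is also at most $1/(2p_0)$. Averaging over $\puzz$ bounds the total statistical distance by $1/(2p_0)+1/(2p_0)=1/p_0$, for all sufficiently large $\secp$. This contradicts the distributional OWPuzz property, which requires a gap of at least $1/p_0$ for infinitely many $\secp$. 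Formally we apply the argument with $p_0$ replaced by $2p_0$ to make the inequality strict.

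\textbf{Main obstacle.} The delicate step is the bookkeeping that makes the instance legal for the definition. Every parameter must be a polynomial in $\secp$. $T$ must have description length at most $s(\secp)$, which is why we hardwire $\secp$ using $O(\log \secp)$ bits, or simply take $s$ polynomial. Any intermediate measurements inside $\Samp$ must be deferred so that $T$ outputs a unitary-plus-trace circuit. Finally, we must check that independence across pairs makes the conditional target exactly $\Pr[\ans\mid\puzz]$. Once these are verified, the reduction is routine.
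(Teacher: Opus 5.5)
Your proposal is correct and uses essentially the same reduction as the paper. The hard source is the same: independent $(\puzz,\ans)$ pairs in consecutive registers, measured in the computational basis. The adversary is also the same: it simulates the earlier pairs with fresh samples of $\Samp$, plants the challenge puzzle at an odd position, runs $\cL$, and measures $\regX_{2j}$.

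The only difference is the endgame. The paper fixes the constants $p=2$, $q=4$ and breaks search security directly, using correctness of $\Ver$ to get acceptance probability at least $1/8-\negl(\secp)$. You instead scale $p,q$ with $p_0$ to obtain an inverse-polynomially good distributional inverter, and then invoke the easy direction of \cref{lem:distowpuzz_owpuzz}.
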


\begin{proof}[Proof of \cref{thm:owpuzz_to_hard}]
Let $(\mathsf{Samp},\mathsf{Ver})$ be an infinitely-often OWPuzz.  
Without loss of generality, we can assume that $\Samp(1^\secp)$ generates $\puzz\in\bit^{n(\secp)}$ and $\ans\in\bit^{n(\secp)}$ for some polynomial $n:\N\to\N$.
Let 
\begin{align}
    \omega_\secp \coloneqq \sum_{\puzz,\ans} \Pr[(\puzz,\ans)\gets\Samp(1^\secp)] |\puzz\rangle\langle\puzz| \otimes |\ans\rangle\langle\ans|
\end{align}
be a $2n(\secp)$-qubit state.
Because $\Samp$ is a QPT algorithm, for any polynomial $m$, there exists a QPT algorithm $\cG$ and polynomials $t,t':\N\to\N$ such that 
\begin{itemize}
    \item For all $\secp$, $\cG(1^\secp)$ runs in time $t'(\secp)$ and outputs a Turing machine $T_\secp$ with $|T_\secp|\le O(\log\secp)$ and a sequence of instruments $(\cI_1,...,\cI_{m(\secp)})$, where each $\cI_i$ is a computational basis measurement and $T_\secp^{t(\secp)}$ generates the circuit description of $\rho^\secp_{\regR_1...\regR_m(\secp)}$ defined by
    \begin{align}
        \rho^\secp_{\regR_1...\regR_{m(\secp)}} \coloneqq 
    \begin{cases}
        \omega_\secp^{\otimes m(\secp)/2} & \text{ if $m(\secp)$ is even} \\
        \omega_\secp^{\otimes (m(\secp)-1)/2}\otimes \sum_{\puzz}\Pr[\puzz\gets\Samp(1^\secp)]|\puzz\rangle\langle\puzz| & \text{ if $m(\secp)$ is odd}.
    \end{cases}
    \end{align}
\end{itemize}

Our goal is to show the average-case hardness of universal quantum inductive inference.
For the sake of contradiction, assume that there exists a QPT algorithm $\cL$ that solves average-case universal inductive inference with round complexity 
$r(\secp,s(\secp),p(\secp),q(\secp))$ for some polynomial $r:\N^4\to\N$.
Then, for any $m(\secp)\ge r(\secp,\secp,2,4)$ and the QPT algorithm $\cG$ defined above, $\cL$ satisfies 
\begin{align}
    \Pr_{i,x_{<i}} \left[ \TD \left(\cL(\param_\secp,h_{<i},\cI_i,\rho^\secp_{\regR_1...\regR_{i-1}|h_{<i}}), \tau^\secp_{\regR_1...\regR_{i}|h_{<i}}) \right) \le  \frac{1}{2} \right] \ge \frac{3}{4},
\end{align}
for all sufficiently large $\secp$.
By the definition of $\rho^\secp_{\regR_1...\regR_{m(\secp)}}$ and $(\cI_1,...,\cI_{m(\secp)})$, for each $j\in[\lfloor m(\secp)/2 \rfloor ]$, $(x_{2j-1},x_{2j})$ is sampled with probability $\Pr[(x_{2j-1},x_{2j})\gets\Samp(1^\secp)]$.
Moreover, the post-measurement state $\rho^\secp_{\regR_1...\regR_{i-1}|h_{<i}}$ conditioned on $h_{<i}$ is $\rho^\secp_{\regR_1...\regR_{i-1}|h_{<i}}=|x_1,...,x_{i-1}\rangle\langle x_1,...,x_{i-1}|$ and the target state is 
\begin{align}
    \tau^\secp_{\regR_1...\regR_{i}\regX_i|h_{<i}} = 
    \begin{cases}
        |x_{<i}\rangle\langle x_{<i}| \otimes \sum_{x_i} \Pr[(x_{i-1},x_i)\gets\Samp(1^\secp)|x_{i-1}] |x_i\rangle\langle x_i|^{\otimes 2} & \text{ if $i$ is even} \\
        |x_{<i}\rangle\langle x_{<i}| \otimes \sum_{x_i,x_{i+1}} \Pr[(x_{i},x_{i+1})\gets\Samp(1^\secp)] |x_i\rangle\langle x_i|^{\otimes 2} & \text{ if $i$ is odd}.
    \end{cases}
\end{align}
Then, for $a\coloneqq\lfloor m(\secp)/2\rfloor$,
\begin{align}
    \frac{1}{a}\sum_{i\in[a]} \Pr_{x_{<2i}} \left[ \TD \left(\cL(\param_\secp,h_{<2i},\cI_{2i},\rho^\secp_{\regR_1...\regR_{2i-1}|h_{<2i}}), \tau^\secp_{\regR_1...\regR_{2i}|h_{<2i}}) \right) > \frac{1}{2} \right] 
    &\le \frac{m(\secp)}{4a} \\
    &\le \frac{3}{4}.
\end{align}
In the last inequality, we assumed $m(\secp)\ge 2$ without loss of generality and used $a=\lfloor m(\secp)/2\rfloor\ge m(\secp)/3$.
Moreover,
\begin{align}
    &\frac{1}{a} \sum_{i\in[a]} \mathbb{E}_{x_{<2i}} \left[ \TD \left(\cL(\param_\secp,h_{<2i},\cI_{2i},\rho^\secp_{\regR_1...\regR_{2i-1}|h_{<2i}}), \tau^\secp_{\regR_1...\regR_{2i}|h_{<2i}} \right) \right] \\
    &\le \frac{1}{2} + \frac{1}{2a} \sum_{i\in[a]} \Pr_{x_{<2i}} \left[ \TD \left(\cL(\param_\secp,h_{<2i},\cI_{2i},\rho^\secp_{\regR_1...\regR_{2i-1}|h_{<2i}}), \tau^\secp_{\regR_1...\regR_{2i}|h_{<2i}} \right) > \frac{1}{2} \right] \\
    &\le \frac{1}{2} + \frac{3}{8} = \frac{7}{8}
\end{align}
for all sufficiently large $\secp$.
By using such $\cL$, we construct a QPT adversary $\cA$ that breaks the security of infinitely-often OWPuzz $(\Samp,\Ver)$ as follows:
\begin{enumerate}
    \item Take $1^\secp$ and $\puzz$ as input.
    \item Sample $i\gets[a]$, where $a=\lfloor \frac{m(\secp)}{2}\rfloor$.
    \item For each $j\in[i-1]$, run $(\puzz_j,\ans_j)\gets\Samp(1^\secp)$ and let $x_{2j-1}=\puzz_j$ and $x_{2j}=\ans_j$.
    \item Let $x_{2i-1}=\puzz$ and run $\cL(\param_\secp,h_{<2i},\cI_{2i},\rho^\secp_{\regR_1...\regR_{2i-1}|h_{<2i}})$ to obtain a state $\sigma_{\regR_1...\regR_{2i}\regX_{2i}}$.
    \item Measure $\regX_{2i}$ in the computational basis and output the result.
\end{enumerate}
Then,
\begin{align}
    &\Pr\left[\top\gets\Ver(\puzz,\cA(1^\secp,\puzz)): (\puzz,\ans)\gets\Samp(1^\secp) \right] \\
    &\ge \Pr\left[\top\gets\Ver(\puzz,\ans) \right] - \SD(\{\puzz,\cA(1^\secp,\puzz)\},\{\puzz,\ans\}) \\
    &\ge 1-\negl(\secp) - \frac{1}{a} \sum_{i\in[a]} \mathbb{E}_{x_{<2i}} \left[ \TD \left(\cL(\param_\secp,h_{<2i},\cI_{2i},\rho^\secp_{\regR_1...\regR_{2i-1}|h_{<2i}}), \tau^\secp_{\regR_1...\regR_{2i}|h_{<2i}} \right) \right] \\
    &\ge \frac{1}{8} - \negl(\secp)
\end{align}
holds for all sufficiently large $\secp$ and $\cA$ breaks the security of infinitely-often OWPuzzs.
Therefore, we complete the proof.
\end{proof}

\section{POVM-Based Universal Quantum Inductive Inference}

In this section, we investigate a variant of universal quantum inductive inference.
Universal quantum inductive inference, as defined in the previous section, requires a learner to make accurate predictions with high probability for any sequence of quantum instruments. 
In particular, for a target quantum instrument, the learner must predict both the classical measurement outcome and the corresponding post-measurement quantum state. 
Therefore, when one is interested only in the classical distribution of measurement outcomes, universal quantum inductive inference imposes a stronger requirement than necessary.
Motivated by this observation, we define a variant of inductive inference based on POVMs, that we call \emph{POVM-based universal quantum inductive inference}. 
In this setting, given a sequence of POVMs, the learner predicts the classical distributions of their outcomes. 
Moreover, the measurement history available to the learner is entirely classical, in particular, the learner is not given access to any post-measurement quantum states.

We formalize our definition and give an information-theoretic upper bound in \cref{sec:def_POVM}.
Moreover, we prove the equivalence between the average-case hardness of POVM-based universal quantum inductive inference and infinitely-often OWPuzzs (\cref{sec:comp_povm}). 

\subsection{Definition of POVM-Based Universal Quantum Inductive Inference}\label{sec:def_POVM}
For each $i\in\N$, let $\regR_i$ be an $n$-qubit register.
Let $\rho_{\regR_1...\regR_m}$ be an $mn$-qubit state and let $M=(M_1,...,M_m)$ be a sequence of POVMs, where each $M_j=\{E_{x}^{(j)}\}_{x}$ acts on $\regR_j$.
Then, we define a distribution $\cD^{M,\rho}_{\le m}$ obtained by applying $M$ to $\rho_{\regR_1...\regR_m}$ as follows:
\begin{align}
    \cD_{\le m}^{M,\rho}(x_1,...,x_m) \coloneqq \Tr\left[ (E^{(1)}_{x_1}\otimes ... \otimes E^{(m)}_{x_m}) \rho_{\regR_1...\regR_m} \right].
\end{align}
Moreover, for each $i\in[m]$, let $\cD_{<i}^{M,\rho}$ be the marginal distribution of $\cD_{\le m}^{M,\rho}$ on the first $i-1$ coordinates, and let $\cD_i^{M,\rho}(\cdot|x_{<i})$ be the conditional distribution of the $i$th coordinate given that the first $i-1$ coordinates are $x_{<i}=(x_1,...,x_{i-1})$.

We define POVM-based universal quantum inductive inference as follows:
\begin{definition}[POVM-Based Universal Quantum Inductive Inference]
    Let $n,s,t,\epsilon^{-1},\delta^{-1}\in\N$ be parameters.
    Let $\cL$ be a quantum algorithm that behaves as follows:
    \begin{enumerate}
        \item Take the following as input.
        \begin{itemize}
            \item Parameters $\mathsf{param}\coloneq (1^n,1^s,1^t,1^{\epsilon^{-1}},1^{\delta^{-1}})$.
            \item A collection $h_{<i}\coloneq (M_1,x_1,...,M_{i-1},x_{i-1})$, where each $M_j$ is (a description of) a POVM acting on $\regR_j$ and each $x_j$ is a classical string.
        \item A description of a target POVM $M_i$ acting on $\regR_i$.
    \end{itemize}
    \item Output a classical string $x_i$.
\end{enumerate}
Let $r:\N^3\to\N$ be a function.
We say that $\cL$ solves POVM-based universal quantum inductive inference with round complexity $r$ if, for all $m\ge r(s,\epsilon^{-1},\delta^{-1})$, all sequences of POVMs $M=(M_1,...,M_m)$, and all $mn$-qubit states $\rho_{\regR_1...\regR_m}\in\cC_{n,m}[s,t]$, the following holds:
\begin{align}
    \Pr_{i,x_{<i}} \left[ \SD\left( \cL(\mathsf{param},h_{<i},M_i), \cD^{M,\rho}_i(\cdot|x_{<i}) \right)\le\epsilon \right] \ge 1-\delta,
\end{align}
where the probability is taken over $i\gets[m]$ and $x_{<i}=(x_1,...,x_{i-1})\gets\cD^{M,\rho}_{<i}$.
\end{definition}
\begin{remark}
Our definition includes a natural quantum extension of classical
inductive inference.
In classical inductive inference, a classical algorithm generates
a sequence of strings $(x_1,\ldots,x_m)$, and the learner uses
$x_{<i}$ to predict the conditional distribution of $x_i$.
The most direct quantum extension lets a quantum algorithm generate
the same kind of classical sequence.
This task is included in our definition: the output strings can be
stored in computational-basis registers, and each $M_i$ can measure
$\regR_i$ in that basis.
For a general quantum source, the distribution of observed strings
also depends on the measurements performed on its registers.
We use the POVM-based formulation to capture these prediction tasks
for any specified sequence of POVMs, including measurements of
sources whose registers are entangled.
\end{remark}

Note that the POVM-based definition is a special case of \cref{def:uqii}.
Indeed, if we choose instruments $(\cI_1,...,\cI_m)$ such that $\cI_i=\{\cE^{(i)}_{x}\}_x$, where $\cE^{(i)}_x(\rho)=\Tr[E^{(i)}_x\rho]|0^n\rangle\langle0^n|$ for some POVM $\{E^{(i)}_x\}_x$, \cref{def:uqii} reduces to the POVM-based definition.
Therefore, we obtain the following information-theoretic upper bound on the round complexity of POVM-based universal quantum inductive inference as a corollary of \cref{thm:info_alg}.
\label{sec:info_povm}
\begin{corollary}\label{thm:povm-information-theoretic}
    There exists an algorithm that solves POVM-based universal quantum inductive inference with round complexity $O(s\epsilon^{-2}\delta^{-1})$.
\end{corollary}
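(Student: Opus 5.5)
The plan is to reduce to \cref{thm:info_alg} via the embedding noted just before the statement. Given a POVM-based instance $(\rho_{\regR_1...\regR_m}, M=(M_1,\dots,M_m))$ with $M_j=\{E^{(j)}_x\}_x$, define instruments $\cI_j=\{\cE^{(j)}_x\}_x$ with $\cE^{(j)}_x(\sigma)=\Tr[E^{(j)}_x\sigma]\,|0^n\rangle\langle 0^n|$. Each is a valid instrument, since $\sum_x\cE^{(j)}_x$ is trace preserving, and a circuit description of $\cI_j$ is computable from that of $M_j$. Let $\cL'$ be the algorithm of \cref{thm:info_alg}. The POVM learner $\cL$, on input $(\mathsf{param},h_{<i},M_i)$, does the following:
\begin{enumerate}
\item Convert each $M_j$ ($j\le i$) into $\cI_j$, giving the history $h'_{<i}=(\cI_1,x_1,\dots,\cI_{i-1},x_{i-1})$.
\item Prepare the post-measurement state itself. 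For this family of instruments the outcome-conditioned state is $\rho_{<i\mid h'_{<i}}=|0^{n(i-1)}\rangle\langle 0^{n(i-1)}|$, independent of $\rho$ (whenever the history has positive probability), because each $\cE^{(j)}_x$ discards its input and writes $|0^n\rangle$. So $\cL$ needs no quantum access to the source.
\item Run $\cL'(\mathsf{param},h'_{<i},\cI_i,|0\cdots0\rangle\langle0\cdots0|)$, measure the output register $\regX_i$ in the computational basis, and output the result.
\end{enumerate}

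Next, identify the target. By definition, $\tau_{\le i\mid h'_{<i}}=|0\cdots0\rangle\langle0\cdots0|_{\regR_{<i}}\otimes\sum_x \cD^{M,\rho}_i(x\mid x_{<i})\,|x\rangle\langle x|_{\regX_i}\otimes|0^n\rangle\langle0^n|_{\regR_i}$. The outcome distribution over $x_{<i}$ under the instruments equals $\cD^{M,\rho}_{<i}$, since $\Tr[(\cE^{(1)}_{x_1}\otimes\cdots)(\rho)]=\Tr[(E^{(1)}_{x_1}\otimes\cdots)\rho]$.

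Tracing out $\regR_{\le i}$ and measuring $\regX_i$ in the computational basis is a channel. By monotonicity of trace distance, the statistical distance between $\cL$'s output distribution and $\cD^{M,\rho}_i(\cdot\mid x_{<i})$ is at most $\TD(\cL'(\cdots),\tau_{\le i\mid h'_{<i}})$. Hence the event in \cref{thm:info_alg}'s guarantee implies the POVM event with the same $i$ and $x_{<i}$, over the same distribution. Since $\rho\in\cC_{n,m}[s,t]$ is unchanged, the round complexity $O(s\epsilon^{-2}\delta^{-1})$ carries over.

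The only delicate step is checking that the learner may supply the conditioned state itself and that the conditioning matches: zero-probability histories have measure zero, and the parameter tuples are the same up to reordering. Neither requires real work.
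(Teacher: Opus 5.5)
Your proposal is correct and follows the paper's argument. The paper obtains the corollary by specializing \cref{thm:info_alg} to the instruments $\cE^{(i)}_x(\rho)=\Tr[E^{(i)}_x\rho]\,|0^n\rangle\langle 0^n|$ and asserting that \cref{def:uqii} then reduces to the POVM-based definition; you correctly fill in the details it leaves implicit (the learner self-preparing $|0^{n(i-1)}\rangle\langle 0^{n(i-1)}|$, matching the outcome distribution and the target $\tau_{\le i\mid h_{<i}}$ to $\cD^{M,\rho}$, and passing from trace distance to statistical distance by monotonicity under measuring $\regX_i$).
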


\subsection{Average-Case Computational Hardness}
\label{sec:comp_povm}
Next, we introduce the notion of average-case POVM-based universal quantum inductive inference.
\begin{definition}[Average-Case POVM-based Universal Quantum Inductive Inference]
\label{def:avg_povm_uqii}
Let $\secp\in\N$ be a parameter and let $r:\N^4\to\N$ be a polynomial.
We say that a QPT algorithm $\cL$ solves average-case POVM-based universal quantum inductive inference with round complexity $r$ if, for all polynomials $m,n,s,t,t',p,q:\N\to\N$, all $m(\secp)\ge r(\secp,s(\secp),p(\secp),q(\secp))$, all $t'(\secp)$-time algorithms $\cG$, the following holds:
\begin{itemize}
    \item $\cG$ takes $1^\secp$ as input and outputs a Turing machine $T$ such that $|T|\le s(\secp)$ and $T^{t(\secp)}$ generates a circuit description of $\rho_{\regR_1...\regR_{m(\secp)}}^T$ and a sequence $M=(M_1,...,M_{m(\secp)})$ of POVMs, where each $M_i$ acts on $\regR_i$ and is described as a quantum circuit implementing it.
    \item For all sufficiently large $\secp$,
    \begin{align}
        \Pr_{i,T,M,x_{<i}} \left[ \SD \left(\cL(\param_\secp,h_{<i},M_i), \cD^{M,\rho^T}_i(\cdot|x_{<i}) \right) \le \frac{1}{p(\secp)} \right] \ge 1-\frac{1}{q(\secp)},
    \end{align}
    where $\param_\secp\coloneqq (1^\secp,1^{n(\secp)},1^{s(\secp)},1^{t(\secp)},1^{t'(\secp)},1^{p(\secp)},1^{q(\secp)})$ and $h_{<i}\coloneqq (M_1,x_1,...,M_{i-1},x_{i-1})$.
    The probability is taken over $i\gets[m(\secp)]$, $(T,M)\gets\cG(1^\secp)$, and $x_{<i}\gets \cD^{M,\rho^T}_{<i}$.
\end{itemize}
\end{definition}

We show that the average-case hardness of POVM-based universal quantum inductive inference is equivalent to the existence of infinitely-often OWPuzzs.
\begin{theorem}
\label{thm:owpuzz_povmhard}
    Infinitely-often OWPuzzs exist if and only if there exists no QPT algorithm that solves average-case POVM-based universal quantum inductive inference with round complexity $O((s(\secp)+g(\secp))p(\secp)^2q(\secp))$ for any superconstant function $g(\secp)=\omega(1)$.
\end{theorem}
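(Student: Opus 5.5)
The proof splits into the two directions of the equivalence.

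\emph{Forward direction (OWPuzzs imply hardness).} I would reuse the argument of \cref{thm:owpuzz_to_hard} almost verbatim. Given an infinitely-often OWPuzz $(\Samp,\Ver)$, let $\cG(1^\secp)$ output a machine $T_\secp$ with $|T_\secp|=O(\log\secp)$ together with computational-basis POVMs. The machine $T_\secp$ prepares $\omega_\secp^{\otimes m/2}$ (with the odd case handled as before), so consecutive pairs $(x_{2j-1},x_{2j})$ are i.i.d.\ samples $(\puzz,\ans)$. Suppose a QPT learner $\cL$ succeeds with round complexity $O((s+g)p^2q)$ for some $g=\omega(1)$. Since $s,g$ may be taken polynomially bounded (e.g.\ $g=\log\secp$), $m$ is still polynomial. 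Fixing $p=2$ and $q=4$, the averaging argument gives an average statistical distance of at most $7/8$ on the even rounds. The adversary then plants $\puzz$ at a random odd position, simulates the earlier pairs with $\Samp$, and outputs $\cL$'s sample. As before, it succeeds with probability at least $1/8-\negl(\secp)$ for all large $\secp$, which contradicts infinitely-often security. The quantifier over $g$ is harmless: for this direction it suffices to defeat each fixed $g$.

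\emph{Reverse direction (no OWPuzzs imply a learner).} Assume there are no infinitely-often OWPuzzs. By \cref{lem:distowpuzz_owpuzz}, no infinitely-often distributional OWPuzz exists either, so every QPT sampler can be inverted on all sufficiently large $\secp$ up to any inverse polynomial. I would build a single universal sampler $\cQ(1^\secp)$ that works as follows.
\begin{itemize}
\item It samples a generator description $\cG$ of length $\ell$ with probability proportional to $2^{-2\log \ell}$-style self-delimiting weights. Concretely, it samples $\Pi_\cG$ with $\Pr=2^{-|\Pi_\cG|}$, truncated to length at most $g(\secp)$ so that the sampler is well defined.
\item It runs $\cG(1^\secp)$ for $t'(\secp)$ steps to get $(T,M)$.
\item It runs $T^{t}$, prepares $\rho^T$, measures with $M$ to get $x_1,\dots,x_m$, and samples $i\gets[m]$.
\item It outputs $\puzz=(\param,i,M_1,x_1,\dots,M_{i-1},x_{i-1},M_i)$ and $\ans=x_i$.
\end{itemize}

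Because the parameters (including $t',m$) appear in $\param$, a single sampler can handle all polynomial bounds. One simple choice is to index by $\secp$ and include the time bounds via padding. Non-existence of distributional OWPuzzs yields a QPT $\mathsf{Inv}$ with $\SD((\puzz,\ans),(\puzz,\mathsf{Inv}(\puzz)))\le 1/\poly$, and this $\mathsf{Inv}$ is the learner. The one subtlety is that $\mathsf{Inv}$ must work for all large $\secp$. The negation of infinitely-often security gives exactly that, which is why the ``infinitely-often'' version appears in the statement.

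\emph{Main obstacle: the error analysis.} Let $P$ be the true transcript distribution of $(M_{\le m},x_{\le m})$ under the fixed $\cG$. Let $Q$ be the corresponding distribution under $\cQ$. Domination gives $Q\ge 2^{-|\Pi_\cG|}P$. Since $\cG$ is fixed while $\secp$ grows, $|\Pi_\cG|=O(1)\le g(\secp)$ eventually, and this is exactly where the superconstant $g$ is used. Hence $D(P\|Q)\le O(1)+$ (any cost of encoding $T$), and in total this stays $O(s+g)$.

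By the chain rule, $\sum_i \mathbb E_{x_{<i}}[D(P_i(\cdot|x_{<i})\|Q_i(\cdot|x_{<i}))]\le D(P\|Q)$. I would set things up so that $Q$'s conditional at round $i$ is what $\cQ$'s $(\puzz,\ans)$ distribution encodes; since $i$ is part of $\puzz$, this holds. Pinsker's inequality and Markov's inequality over a uniform $i$ then bound the probability of $\SD>\epsilon/2$ by $O((s+g)/(m\epsilon^2))$. Taking $m\ge C(s+g)p^2q$ makes this at most $1/(2q)$.

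It remains to absorb the inverter's error. The inverter's statistical distance is measured under $\cQ$'s distribution on $\puzz$, not under $P$. So I transfer it through domination: an event of $Q$-probability $\eta$ has $P$-probability at most $2^{|\Pi_\cG|}\eta=O(\eta)$. Choosing the inverter's accuracy $\eta=1/(\poly\cdot p q)$ and using Markov once more, the inverter's per-instance error exceeds $1/(2p)$ with probability at most $1/(2q)$. Getting this transfer right, together with the handling of the prior over $T$ hidden inside $\cG$, is the step I expect to require the most care.
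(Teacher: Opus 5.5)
Your forward direction is the paper's, and your reverse direction follows the paper's architecture: a universal mixture sampler, a distributional inverter, domination by $2^{-|\Pi_\cG|}$, the KL chain rule, Pinsker and Markov, and transfer of the inverter's error through the domination factor. There is, however, a concrete error in your sampler.

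You draw $i\gets[m]$, where $m$ is the number of POVMs output by the sampled generator. You then justify that the $\cQ$-conditional of $\ans$ given $\puzz$ equals $Q_i(\cdot\mid h_{<i},M_i)$ by saying that $m$ appears in $\param$. It does not: $\param_\secp=(1^\secp,1^{n},1^{s},1^{t},1^{t'},1^{p},1^{q})$, and the learner, which has to form the puzzle, is never given $m$. With $m$ absent from $\puzz$, the conditional of $\ans$ given $\puzz$ is the predictive distribution of a reweighted mixture $Q'(M,x)\propto\sum_{\cG'}2^{-|\Pi_{\cG'}|}\Pr[M,x\mid\cG']/|M|$, not of $Q$. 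Relative to competitors that output fewer rounds, the true generator's weight shrinks by up to a factor $m$. Domination then only gives $Q'\ge 2^{-|\Pi_\cG|}\Pr[T]\,P_T/m$, so your KL budget picks up an additive $\log m$.

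At $m\approx C(s+g)p^2q$ this extra term is at least $\log(p^2q)$. That is not $O(s+g)$ when, say, $s=O(1)$, $g=\log\log\secp$, and $p^2q$ is a large polynomial, so the claimed round complexity does not follow for such $g$. The paper avoids this by padding every generator's POVM list with trivial POVMs to the common length $t'(\secp)$, which is in $\param$, and drawing $i\gets[t']$ independently of the generator. The conditional given $(h_{<i},M_i)$ is then exactly the transcript conditional. The only price is a factor $t'/m$ when passing from $i\gets[t']$ to $i\gets[m]$ in the inverter-error transfer.

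That factor, like $p$ and $q$, must be beaten by the inverter's accuracy, and this is a second point your sketch leaves open. $\mathsf{Inv}$ is one algorithm with one fixed polynomial accuracy in the sampler's index, yet the learner must work for every tuple of polynomials and every generator. So the index cannot be just $\secp$ plus time bounds. The paper runs the sampler on $1^N$ with $N$ encoding $(\secp,n,s,t,t',p,q)$. The fixed accuracy $1/N^4$ then satisfies $N^4\ge 2^{|\Pi_\cG|}p\,q\,t'$ for all large $\secp$, and ``all sufficiently large $N$'' yields ``all sufficiently large $\secp$'' along each fixed tuple. Your choice $\eta=1/(\poly\cdot pq)$ implicitly requires such an encoding.

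Finally, the target is $\cD^{M,\rho^T}_i(\cdot\mid x_{<i})$ for the specific $T$. The chain rule must therefore be applied to $P_T$, not to the $T$-averaged transcript distribution $P$ as you wrote it. Use $Q\ge 2^{-|\Pi_\cG|}\Pr[T\gets\cG(1^\secp)]\,P_T$ together with $\mathbb{E}_T[-\log\Pr[T\gets\cG(1^\secp)]]\le s+1$; this is where $s$ enters the bound.
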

\begin{proof}[Proof of \cref{thm:owpuzz_povmhard}]
The proof of the ``only if'' direction is essentially the same as that of \cref{thm:owpuzz_to_hard}.
Thus, it suffices to show the ``if'' direction.
We prove it by contraposition. 
Suppose that infinitely-often OWPuzzs do not exist. 
By the equivalence between infinitely-often OWPuzzs and infinitely-often distributional OWPuzzs (\cref{lem:distowpuzz_owpuzz}), infinitely-often distributional OWPuzzs do not exist.
Then, for any QPT algorithm $\Samp(1^N)\to(X,Y)$ that outputs classical strings $(X,Y)$, and any constant $c>0$, there exists a QPT algorithm $\mathsf{Inv}$ such that, for all sufficiently large $N$,
\begin{align}
    \SD\left( \{(X,Y)\}, \{(\mathsf{Inv}(1^N,Y),Y)\} \right) \le \frac{1}{N^c},
\end{align}
where $(X,Y)\gets\Samp(1^N)$.
Consider the following QPT algorithm $\cQ(1^N)$.
\begin{itemize}
    \item Take $1^N$ as input and let $b_{N}\in\bit^*$ be the binary representation of it.
    If $|b_{N}|\neq 7k+1$ for some $k\in\N$ or the first bit of $b_N$ is 0, then $\cQ$ outputs $\bot$.
    Otherwise, let $b_N=1\|b_1\|...\|b_{7}$, where $b_i\in\bit^k$.
    Let $\secp,n(\secp),s(\secp),t(\secp),t'(\secp),p(\secp),q(\secp)\in\N$ be natural numbers whose binary representations are $b_1,...,b_7$, respectively.
    If the input $N$ can not be parsed to natural numbers as above, $\cQ$ outputs $\bot$.
    \item Let $\Gamma\gets\bit^{N}$ and parse $\Gamma=1^{\ell}\|0\|c\|d$ for some $c\in\bit^\ell$ and $d\in\bit^{N-2\ell-1}$.
    Interpret $c$ as the description of a Turing machine $\cG_\Gamma$.
    If $\Gamma$ can not be parsed as above, $\cQ$ outputs $\bot$.
    \item Run $\cG_\Gamma(1^\secp)$ for $t'(\secp)$ steps to obtain a Turing machine $T$ and a sequence $(M'_1,...,M'_m)$ of POVMs.
    If $|T|>s(\secp)$ or $\cG_\Gamma(1^\secp)$ does not output valid descriptions, then $\cQ$ outputs $\bot$.
    Moreover, define $M=(M_1,...,M_{t'(\secp)})$ by
    \begin{align}
        M_{j} \coloneqq 
        \begin{cases}
            M'_j & \text{ if $j\le m$} \\
            I & \text{if $m <j\le t'(\secp)$}.
        \end{cases}
    \end{align}
    \item Run $T^{t(\secp)}$ to obtain a quantum circuit and run it to prepare $\rho^T_{\regR_1...\regR_{t'(\secp)}}$.
    \item Apply $M_1,...,M_{t'(\secp)}$ to registers $(\regR_1,...,\regR_{t'(\secp)})$, obtaining $(x_1,...,x_{t'(\secp)})$.
    \item Output $(M_1,x_1,...,M_{t'(\secp)},x_{t'(\secp)})$.
\end{itemize}
Moreover, let $\cQ_i(\cdot|M_1,x_1...,M_{i})$ be the conditional distribution of $x_i$ conditioned on $(M_1,x_1,...,M_i)$.
Let $\Samp$ be a QPT algorithm that on input $1^N$, samples $(M_1,x_1,...,M_{t'(\secp)},x_{t'(\secp)})\gets\cQ(1^N)$ and $i\gets[t'(\secp)]$, and outputs $X\coloneqq x_i$ and $Y\coloneqq (h_{<i},M_i)$, where $h_{<i}=(M_1,x_1,...,M_{i-1},x_{i-1})$.
Then, by the assumption that infinitely-often distributional OWPuzzs do not exist, there exists a QPT algorithm $\mathsf{Inv}$ such that, for all sufficiently large $N$,
\begin{align}
    \SD\left( \{(X,Y)\}_{(X,Y)\gets\Samp(1^N)}, \{(\mathsf{Inv}(1^N,Y),Y)\}_{(X,Y)\gets\Samp(1^N)} \right)
    \le \frac{1}{N^4}.
\end{align}
Now fix an arbitrary $t'(\secp)$-time algorithm $\cG$ and let $c_{\cG}$ denote the binary classical description of $\cG$.
Let $\widehat{\cG}\coloneqq 1^{|c_\cG|}\|0\|c_{\cG}$.
By the construction of $\Samp$, for all sufficiently large $N$, the probability that the randomly selected program $\Gamma$ in $\cQ$ begins with $\widehat{\cG}$ is $2^{-|\widehat{\cG}|}$.
Conditioned on this event, the distribution of $(T,M_1,x_1,...,M_{t'(\secp)},x_{t'(\secp)})$ corresponds to the sampling $(T,M_1,...,M_m)\gets\cG(1^\lambda)$, and $(x_1,...,x_{t'(\secp)})\gets \cD^{M,\rho^T}_{<t'(\secp)}$.
Therefore, for all sufficiently large $N$,
\begin{align}
    \frac{1}{N^4} 
    &\ge
    \SD\left( \{(X,Y)\}_{(X,Y)\gets\Samp(1^N)}, \{(\mathsf{Inv}(1^N,Y),Y)\}_{(X,Y)\gets\Samp(1^N)} \right) \\
    &\ge \frac{1}{2^{|\widehat{\cG}|}} \mathbb{E}_{i,T,M,x_{<i}} \left[ \SD\left( \cQ_i(\cdot|h_{<i},M_i), \mathsf{Inv}(1^N,h_{<i},M_i) \right) \right],
\end{align}
where in the last inequality, the expectation is taken over $i\gets[t'(\secp)]$, $(T,M'_1,...,M'_m)\gets\cG(1^\secp)$, and $x_{<i}\gets\cD^{M,\rho^T}_{<i}$.
When we sample $i\gets[m]$, 
\begin{align}
    &\mathbb {E}_{\substack{i\gets[m]\\T,M,x_{<i}}} \left[ \SD\left( \mathsf{Inv}(1^N,(h_{<i},M_i)), \cQ_i(\cdot|h_{<i},M_i) \right) \right] \\
    &= \frac{t'(\secp)}{t'(\secp)} \frac{1}{m} \sum_{i\in[m]} \mathbb {E}_{T,M,x_{<i}} \left[ \SD\left( \mathsf{Inv}(1^N,(h_{<i},M_i)), \cQ_i(\cdot|h_{<i},M_i) \right) \right]  \\
    &\le \frac{t'(\secp)}{m} \frac{1}{t'(\secp)} \sum_{i\in[t'(\secp)]} \mathbb {E}_{T,M,x_{<i}} \left[ \SD\left( \mathsf{Inv}(1^N,(h_{<i},M_i)), \cQ_i(\cdot|h_{<i},M_i) \right) \right]  \\
    &= \frac{t'(\secp)}{m} \mathbb {E}_{\substack{i\gets[t'(\secp)]\\T,M,x_{<i}}} \left[ \SD\left( \mathsf{Inv}(1^N,(h_{<i},M_i)), \cQ_i(\cdot|h_{<i},M_i) \right) \right] \\
    &\le \frac{2^{|\widehat{\cG}|}t'(\secp)}{N^4m}
\end{align}
holds for all sufficiently large $N$, where $(T,M'_1,...,M'_m)\gets \cG(1^\lambda)$, and $x_{<i}\gets \cD^{M,\rho^T}_{<i}$. 
By Markov's inequality,
\begin{align}
    \Pr_{i,T,M,x_{<i}}
    \left[ \SD\left( \mathsf{Inv}(1^N,(h_{<i},M_i)), \cQ_i(\cdot|h_{<i},M_i) \right) > \frac{1}{p(\secp)} \right] 
    \le \frac{2^{|\widehat{\cG}|}p(\secp)t'(\secp)}{N^4m}
\end{align}
holds for all sufficiently large $N$.
Because $N\ge \lambda+t'(\secp)+p(\secp)+q(\secp)$,
\begin{align}
    \frac{2^{|\widehat{\cG}|}p(\secp)t'(\secp)q(\secp)}{N^4} 
    \le \frac{2^{|\widehat{\cG}|}p(\secp)t'(\secp)q(\secp)}{(\secp+t'(\secp)+p(\secp)+q(\secp))^4} 
    \le 1
\end{align}
holds for all sufficiently large $\secp$.
Consequently, for all sufficiently large $\secp$,
\begin{align}
    \Pr_{i,T,M,x_{<i}}
    \left[ \SD\left( \mathsf{Inv}(1^N,(h_{<i},M_i)), \cQ_i(\cdot|h_{<i},M_i) \right) 
    > \frac{1}{p(\secp)} \right]
    \le \frac{1}{mq(\secp)}
    \le \frac{1}{q(\secp)}.
    \label{eq:universal_inv}
\end{align}
Later, we will show that 
\begin{align}
    \sum_{i\in[m]} \mathbb{E}_{i,T,M,x_{<i}} \left[ \mathrm{KL}(\cD^{M,\rho^T}_{i}(\cdot|x_{<i})\| \cQ_i(\cdot|h_{<i},M_i)) \right] \le s(\secp)+1+|\widehat{\cG}|.
    \label{eq:KL}
\end{align}
holds for all sufficiently large $\secp$.
We construct a QPT learning algorithm $\cL$ as follows:
\begin{enumerate}
    \item Take parameters $\mathsf{param}_\secp=(1^\secp,1^{n(\secp)},1^{s(\secp)},1^{t(\secp)},1^{t'(\secp)},1^{p(\secp)},1^{q(\secp)})$ and classical strings $h_{<i},M_i$ as input.
    \item Let $b'_1,...,b'_7$ be binary representations of $\secp,n(\secp),s(\secp),t(\secp),t'(\secp),2p(\secp),2q(\secp)$, where $k=\max\{|b'_i|\}$.
    Let $N\in\N$ be the natural number whose binary representation is $1\|b_1\|...\|b_7$, where $b_i=0^{k-|b'_i|}\|b'_i$.
    \item Run $\mathsf{Inv}(1^N,(h_{<i},M_i))$.
\end{enumerate}
Then, 
\begin{align}
    &\Pr_{i,T,M,x_{<i}}
    \left[ \SD\left( \cL(\param_\secp,h_{<i},M_i), \cD^{M,\rho^T}_i(\cdot|x_{<i}) \right) 
    > \frac{1}{p(\secp)} \right] \\
    &\le \Pr_{i,T,M,x_{<i}}\left[ \SD\left( \mathsf{Inv}(1^N,(h_{<i},M_i)), \cQ_i(\cdot|h_{<i},M_i) \right) 
    > \frac{1}{2p(\secp)} \right] \\
    &\qquad + \Pr_{i,T,M,x_{<i}} \left[ \SD\left( \cD^{M,\rho^T}_i(\cdot|x_{<i}), \cQ_i(\cdot|h_{<i},M_i) \right) > \frac{1}{2p(\secp)} \right] \\
    &\le \frac{1}{2q(\secp)}+ \frac{4p(\secp)^2 (s(\secp)+1+|\widehat{\cG}|)}{m}.
\end{align}
The last inequality follows from \cref{eq:universal_inv,eq:KL}.
Requiring that this probability is upper bounded by $1/q(\secp)$ implies $\frac{4p(\secp)^2 (s(\secp)+1+|\widehat{\cG}|)}{m}\le \frac{1}{2q(\secp)}$.
Therefore, we obtain 
\begin{align}
    m \ge 8 (s(\secp)+1+|\widehat{\cG}|)p(\secp)^2 q(\secp),
\end{align}
and the round complexity $O((s(\secp)+g(\secp))p(\secp)^2q(\secp))$ for any superconstant function $g(\secp)=\omega(1)$ is sufficient.
In the remaining part, we prove \cref{eq:KL}.
Note that for any $(M_1,x_1,...,M_m,x_m)$, 
\begin{align}
    &\Pr[(M_1,x_1,...,M_m,x_m)\gets\cQ(1^N)] \\
    &\ge 2^{-|\widehat{\cG}|} \Pr[(T,M_1,...,M_m)\gets\cG(1^\secp)\land(x_1,...,x_m)\gets\cD^{M,\rho^T}_{\le m}] \\
    &\ge 2^{-|\widehat{\cG}|} \Pr[T\gets\cG(1^\secp)] \Pr[(M_1,...,M_m)\gets\cG(1^\secp)\land(x_1,...,x_m)\gets\cD^{M,\rho^T}_{\le m}|T]. 
\end{align}
By the chain rule for KL divergence, we have
\begin{align}
    &\sum_{i\in[m]} \mathbb{E}_{T,M,x_{<i}} \left[ \mathrm{KL}\left(\cD^{M,\rho^T}_i(\cdot|x_{<i}) \| \cQ_i(\cdot|h_{<i},M_i) \right) \right] \\
    &\le \mathbb{E}_T \Bigl[ \sum_{M_1,x_1,...,M_m,x_m}\Pr[(M_1,...M_m)\gets\cG(1^\secp)\land (x_1,...,x_m)\gets\cD^{M,\rho^T}_{\le m}|T] \\
    & \qquad \times \log\frac{\Pr[(M_1,...M_m)\gets\cG(1^\secp)\land (x_1,...,x_m)\gets\cD^{M,\rho^T}_{\le m}|T]}{\Pr[(M_1,x_1,...,M_m,x_m)\gets\cQ(1^N)]} \Bigr] \\
    &\le \mathbb{E}_T (|\widehat{\cG}|-\log\Pr[T\gets \cG(1^\secp)]) \\
    &\le s(\secp)+1+|\widehat{\cG}|.
\end{align}
In the last inequality, we used the fact that $|T|\le s(\secp)$ and its entropy $\mathbb{E}_T[-\log\Pr[T\gets\cG(1^\secp)]]$ is at most $s(\secp)+1$.
Thus, we obtain \cref{eq:KL} and complete the proof.
\end{proof}

\section{Non-i.i.d. Tomography}

In the previous sections, we studied quantum inductive inference and its POVM-based variant.
We now consider state tomography in the non-i.i.d.\ setting.
The task differs from universal quantum inductive inference in \cref{sec:uqii} in the following two respects:
\begin{itemize}
    \item The learner outputs a classical description of the conditional state of the future register $\regR_i$.
    \item The learner chooses the measurements to perform on the preceding registers $(\regR_1,...,\regR_{i-1})$.
\end{itemize}
The target is the marginal state on $\regR_i$ conditioned on the learner's measurement record.
We first recall the source class and then define the learning task.
Let $n,m,s,t\in\N$ be parameters, and for each $i\in[m]$, let $\regR_i$ be an
$n$-qubit register.
We define the set
\begin{align}
    \cC_{n,m}[s,t]
    :=\left\{
        \rho_{\regR_1...\regR_m}:
        \begin{gathered}
            \text{There exists a classical Turing machine $T$ with $|T|\le s$} \\
            \text{ such that $T^t$ outputs a circuit preparing $\rho_{\regR_1...\regR_m}$}
        \end{gathered}
    \right\}
\end{align}
of $mn$-qubit states whose preparation circuit descriptions are generated
by Turing machines described by at most $s$ bits within $t$ steps.

\begin{definition}[Non-i.i.d.\ State Tomography]
Let $r:\mathbb N^4\to\mathbb N$ be a function.
Let $\mathcal L$ be a quantum algorithm that behaves as follows:
\begin{enumerate}
    \item Take the following as input:
    \begin{itemize}
        \item Parameters
        $\mathsf{param}:=(1^s,1^t,1^n,
        1^{\epsilon^{-1}},1^{\delta^{-1}})$.
        \item A single copy of an $n(i-1)$-qubit state on $(\regR_1,...,\regR_{i-1})$.
    \end{itemize}
    The round $i$ is determined by the number of input registers.

    \item Perform a measurement procedure on $(\regR_1,...,\regR_{i-1})$ and record its
    outcome $x$.
    For fixed $\mathsf{param}$ and $i$, let
    $M:=\{E_x\}_{x\in\mathcal X}$ be the POVM induced by the entire
    procedure, including any adaptive measurement choices.
    The outcome $x$ records all measurement outcomes and internal
    random choices, including those used to determine the output
    state description.

    \item Output $h:=(M,x)$ and a classical description of an
    $n$-qubit state $\sigma$.
    Here $M$ is represented by a classical program implementing the
    measurement procedure, and the description of $\sigma$ is
    determined by $\mathsf{param}$, $i$, and $x$.
\end{enumerate}

The target conditional state of $\regR_i$ is
\begin{align}
    \rho_{\regR_i\mid h}
    :=\frac{
        \operatorname{Tr}_{\regR_1...\regR_{i-1}}\left[
            (E_x\otimes I_{\regR_i})\rho_{\regR_1...\regR_i}
        \right]
    }{\operatorname{Tr}(E_x\rho_{\regR_1...\regR_{i-1}})}.
\end{align}

We say that $\mathcal L$ solves non-i.i.d.\ state tomography with
round complexity $r$ if, for all
$n,s,t,\epsilon^{-1},\delta^{-1}\in\mathbb N$,
all $m\ge r(n,s,\epsilon^{-1},\delta^{-1})$, and all
$\rho_{\regR_1... \regR_m}\in\mathcal C_{n,m}[s,t]$,
\begin{align}
    \Pr\left[
        \TD(\sigma,\rho_{\regR_i\mid h})\le\epsilon
        \;:\;
        \begin{array}{l}
            i\leftarrow[m],\\
            (h,\sigma)\leftarrow\mathcal L(\mathsf{param},\rho_{\regR_1...\regR_{i-1}})
        \end{array}
    \right]
    \ge 1-\delta.
\end{align}
\end{definition}

We now give an upper bound on the round complexity.

\begin{theorem}
\label{thm:state_tomography}
There exists an algorithm that solves non-i.i.d.\ state tomography
with round complexity
\begin{align}
    r(n,s,\epsilon^{-1},\delta^{-1})
    =O\!\left(\frac{s\min\{2^s,2^n\}}{\epsilon^2\delta}\right).
\end{align}
\end{theorem}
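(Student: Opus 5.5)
We follow the strategy sketched in the technical overview: the learner measures each preceding register $\regR_j$ with a fixed-in-advance (possibly randomized) POVM $M_j$, records outcomes $x_{<i}=(x_1,\dots,x_{i-1})$, and outputs the classical description of $\mu_{\regR_i\mid x_{<i}}$. Here $\mu=\mu^{U,p(s,t)}_{\regR_1\dots\regR_m}$ is the time-bounded universal density matrix from \cref{lem:domination}, conditioned on the same POVM outcomes, with $\Tr[(E_{x_{<i}}\otimes I)\mu_{\le i}]$ as normalization. This is a computable (if inefficient) classical description, which suffices for an information-theoretic statement. We run the argument twice, once with each measurement strategy below, and let the learner pick whichever gives the smaller factor $\min\{2^s,2^n\}$; since $n,s$ are known parameters, this choice is legitimate.

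\textbf{Step 1 (chain rule and domination).} Let $P$ and $Q$ be the joint distributions of the classical outcomes obtained by measuring $\rho_{\le m}$ and $\mu_{\le m}$ with $M_1\otimes\dots\otimes M_m$. Include the learner's internal randomness in the outcome: in the Helstrom strategy, the random machine $T_j$ chosen at step $j$ is recorded as part of $x_j$, with the same distribution under $P$ and $Q$. Data processing and $\mu\ge 2^{-cs}\rho$ give $\mathrm{KL}(P\|Q)\le D(\rho_{\le m}\|\mu_{\le m})\le cs$. The chain rule then yields
\[
\sum_{j}\mathbb E_{x_{<j}\sim P}\bigl[\mathrm{KL}(P_j(\cdot\mid x_{<j})\|Q_j(\cdot\mid x_{<j}))\bigr]\le cs .
\]
Each $P_j(\cdot\mid x_{<j})$ is the distribution of $M_j$ applied to $\rho_{\regR_j\mid x_{<j}}$, and likewise $Q_j$ with $\mu_{\regR_j\mid x_{<j}}$. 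Because the measurement on $\regR_j$ is a product POVM, the conditional-state notation in the definition of the task is consistent.

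\textbf{Step 2 (per-round distinguishability).} We need a lower bound of the form $\mathrm{KL}(P_j\|Q_j)\ge \TD(\rho_{\regR_j\mid x_{<j}},\mu_{\regR_j\mid x_{<j}})^2/K$.

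For the Helstrom strategy, choose $T\gets\{0,1\}^{\le s}$ uniformly. With probability about $2^{-s}$ it equals the true program, and then $H^T_j$ achieves total variation equal to $\TD$. Since KL of a mixture over a shared recorded label is the average of the KLs, which are nonnegative, Pinsker gives $K=O(2^s)$.

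For the design strategy, measure in a $\gamma$-approximate 4-design and apply \cref{lem:berger}. Pinsker and $\SD=\frac12\sum_k p_k d\,|\langle\phi_k|(\rho-\sigma)|\phi_k\rangle|$ (after renormalizing the design into a POVM $\{p_k d\,|\phi_k\rangle\langle\phi_k|\}$) give $K=O(2^n)$. This is the main obstacle. \cref{lem:berger} needs $\gamma\le c_1 d^{-2}\TD^4$, but $\TD$ is unknown and may be tiny. The fix is to pick $\gamma$ depending on $\epsilon$, say $\gamma=c_1 d^{-2}\epsilon^4$. Rounds with $\TD\le\epsilon$ cost nothing toward failure, and on rounds with $\TD>\epsilon$ the lemma applies. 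So we bound $\mathbb E[\min(\TD,\epsilon)^2]$-type quantities instead, which is all Markov needs.

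\textbf{Step 3 (conclusion).} Summing, $\sum_i\mathbb E_{x_{<i}}[\TD^2\cdot\mathbf 1\{\TD>\epsilon\}]\le cs\,K$. Markov over uniform $i$ gives failure probability at most $csK/(m\epsilon^2)$, which is at most $\delta$ once $m=O(s\min\{2^s,2^n\}\epsilon^{-2}\delta^{-1})$. We also check that the output description depends only on $\mathsf{param},i,x$, and that the target $\rho_{\regR_i\mid h}$ in the definition coincides with $\rho_{\regR_i\mid x_{<i}}$ used above.
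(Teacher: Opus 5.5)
Your proposal is correct and follows the paper's proof essentially step for step. The shared ingredients are the domination $\mu\ge 2^{-cs}\rho$ together with the KL chain rule, a uniformly random $T$ (recorded in $x_j$) followed by a Helstrom measurement when $s\le n$, an approximate $4$-design with $\gamma\propto\epsilon^4 2^{-2n}$ used only on histories with $\TD>\epsilon$ when $n<s$, and Markov's inequality over $i$. Two small wording fixes. First, the Helstrom POVM at step $j$ distinguishes the \emph{conditional} states $\rho^{T}_{\regR_j\mid x_{<j}}$ and $\mu_{\regR_j\mid x_{<j}}$, so it must be chosen adaptively from $x_{<j}$ rather than fixed in advance. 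Second, the quantity controlled is $\TD^2\mathbf{1}\{\TD>\epsilon\}$, not $\min(\TD,\epsilon)^2$. Neither point affects the chain-rule argument, which handles adaptive measurements exactly as in the paper.
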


\begin{proof}[Proof of \cref{thm:state_tomography}]
Fix $\rho_{\regR_1...\regR_m}\in\cC_{n,m}[s,t]$.
Then, there exists a Turing machine $T^*$ such that $|T^*|\le s$ and $T^{*t}$ generates the circuit description of $\rho_{\regR_1...\regR_m}$.
Then, \cref{lem:domination} implies that there exist a universal Turing machine $U$, a polynomial $p$, and a constant $c$ such that
\begin{align}
    \mu^{U,p(s,t)}_{\regR_1...\regR_m} \ge 2^{-cs} \rho_{\regR_1...\regR_m}.
    \label{eq:tomo_domination}
\end{align}
In the following, we fix such $U$, $p$ and $c$ and let $\mu_{\regR_1...\regR_m}=\mu^{U,p(s,t)}_{\regR_1...\regR_m}$.

We construct an algorithm $\cL$ that solves non-i.i.d. state tomography as follows:
\begin{enumerate}
    \item Take $\mathsf{param}=(1^s,1^t,1^n,1^{\epsilon^{-1}},1^{\delta^{-1}})$ and $\rho_{\regR_1...\regR_{i-1}}$ as input. 
    \item For each $j\in[i-1]$, choose
        $M_j=\{E_x^j\}_x$ from $\mathsf{param}$ and $h_{<j}$
        according to the strategy for $s\le n$ or $n<s$ specified
        below. Apply $M_j$ to $\regR_j$, obtain the outcome $x_j$, and set
        \begin{align}
            h_{<j+1}\coloneqq(M_1,x_1,\ldots,M_j,x_j).
        \end{align}
    \item Output $h_{<i}$ and the classical description of $\mu_{\regR_i|h_{<i}}$, where 
    \begin{align}
        E_{h_{<i}} \coloneqq E^{1}_{x_1} \otimes ... \otimes E^{i-1}_{x_{i-1}}
        \text{ and }
        \mu_{\regR_i|h_{<i}} \coloneqq \frac{\Tr_{\regR_1...\regR_{i-1}} \left[ (E_{h_{<i}}\otimes I_{\regR_i}) \mu_{\regR_1...\regR_i}\right]}{\Tr \left[ (E_{h_{<i}}\otimes I_{\regR_i}) \mu_{\regR_1...\regR_i}\right]}.
    \end{align}
\end{enumerate}
To describe the measurement strategy of $\cL$, we first introduce several notations.
Suppose that, after the first $j-1$ steps, the observed history is $h_{<j}=(M_1,x_1,...,M_{j-1},x_{j-1})$.
We then define
\begin{align}
    E_{h_{<j}} \coloneqq E^{1}_{x_1} \otimes ... \otimes E^{j-1}_{x_{j-1}}
\end{align}
to be the POVM element corresponding to the history $h_{<j}$.
We define the conditional state $\rho_{\regR_j|h_{<j}}$ on $\regR_j$ and the conditional universal density matrix $\mu_{\regR_j|h_{<j}}$ on $\regR_j$ as follows:
\begin{align}
\rho_{\regR_j|h_{<j}} \coloneqq \frac{\Tr_{\regR_1...\regR_{j-1}} \left[ (E_{h_{<j}}\otimes I_{\regR_j}) \rho_{\regR_1...\regR_j} \right]}{\Tr\left[ (E_{h_{<j}}\otimes I_{\regR_j}) \rho_{\regR_1...\regR_j} \right]}, 
    \text{ and }
    \mu_{\regR_j|h_{<j}} \coloneqq \frac{\Tr_{\regR_1...\regR_{j-1}} \left[ (E_{h_{<j}}\otimes I_{\regR_j}) \mu_{\regR_1...\regR_j}\right]}{\Tr \left[ (E_{h_{<j}}\otimes I_{\regR_j}) \mu_{\regR_1...\regR_j}\right]}.\label{eq:cond_state}
\end{align}
We now specify the measurement strategy in each of the two cases.
In each case, we bound the KL-divergence between the outcome
distributions of $M_j$ on $\rho_{j\mid h_{<j}}$ and
$\mu_{j\mid h_{<j}}$, for a history $h_{<j}$ of positive
probability under $\rho$.

\paragraph{If $s\le n$.}
For each $T\in\{0,1\}^{\le s}$, let $\rho^T_{\regR_1... \regR_m}$ be the state prepared by the circuit generated by $T^t$.
Define $\rho^T_{\regR_j\mid h_{<j}}$ by \cref{eq:cond_state} with $\rho_{\le j}$ replaced by the corresponding reduced states $\rho^T_{\le j}$.
If $\Tr[(E_{h_{<j}}\otimes I)\rho^T_{\regR_1...\regR_j}]=0$, set $\rho^T_{\regR_j|h_{<j}}=|0^n\rangle\langle0^n|$.
Let $N_s$ be the number of Turing machines with $\abs{T}\leq s$ and therefore $N_s=O(2^s)$.

At round $j\in[i-1]$, $\cL$ proceeds as follows:
\begin{enumerate}
    \item Sample $T_j\leftarrow\{0,1\}^{\le s}$ uniformly at random.

    \item Perform the Helstrom measurement for
    $\rho^{T_j}_{\regR_j\mid h_{<j}}$ and $\mu_{\regR_j\mid h_{<j}}$ on $\regR_j$.
    Let $b_j\in\{0,1\}$ be its outcome, and set $x_j=(T_j,b_j)$.
\end{enumerate}

We now analyze the error of $\mathcal L$.
Fix a round $j$ and a history $h_{<j}$ of positive probability
under $\rho_{\regR_1...\regR_m}$.
Let $\cD_j^\rho$ and $\cD_j^\mu$ denote the distributions of
$(T_j,b_j)$ obtained by running the above two-step procedure on
$\rho_{\regR_j\mid h_{<j}}$ and $\mu_{\regR_j\mid h_{<j}}$, respectively.
For each choice of $T_j$, both experiments use the same Helstrom
measurement for $\rho^{T_j}_{\regR_j\mid h_{<j}}$ and
$\mu_{\regR_j\mid h_{<j}}$.
For each $T\in\{0,1\}^{\le s}$, let
$\cD_{j\mid T}^\rho$ and $\cD_{j\mid T}^\mu$ denote the conditional
distributions of $b_j$ given $T_j=T$ in these two experiments,
respectively.

Since $\rho^{T^\ast}_{\regR_j\mid h_{<j}}=\rho_{\regR_j\mid h_{<j}}$,
the optimality of the Helstrom measurement gives
\begin{align}
    \SD(\cD^\rho_{j\mid T^\ast},\cD^\mu_{j\mid T^\ast})
    =
    \TD(\rho_{\regR_j\mid h_{<j}},\mu_{\regR_j\mid h_{<j}}).
    \label{eq:Helstrom_SD}
\end{align}
The choice of $T_j$ is uniform in both experiments.
Thus, we have
\begin{align}
    \mathrm{KL}(\cD^\rho_j,\cD^\mu_j)
    &= \frac{1}{N_s}
       \sum_{T\in\{0,1\}^{\le s}}
       \mathrm{KL}(\cD^\rho_{j\mid T},\cD^\mu_{j\mid T})
       \\
    &\ge \frac{1}{N_s}
       \mathrm{KL}(\cD^\rho_{j\mid T^\ast},
                   \cD^\mu_{j\mid T^\ast})
       \\
    &\ge \frac{2}{N_s\ln 2}
       \SD(\cD^\rho_{j\mid T^\ast},
           \cD^\mu_{j\mid T^\ast})^2 \quad \text{(By Pinsker's inequality.)}
       \\
    &= \frac{2}{N_s\ln 2}
       \TD(\rho_{\regR_j\mid h_{<j}},\mu_{\regR_j\mid h_{<j}})^2.
    \label{eq:Helstrom}
\end{align}
The last equality follows from \cref{eq:Helstrom_SD}.
Let $\cD_{\le m}^\rho$ and $\cD_{\le m}^\mu$ be the distribution of measurement outcomes obtained by applying $(M_1,...,M_m)$ to $\rho_{\regR_1...\regR_m}$ and $\mu_{\regR_1...\regR_m}$, respectively.
By \cref{eq:tomo_domination}, for any $(x_1,...,x_m)$,
\begin{align}
    \cD^\mu_{\le m}(x_1,...,x_m) \ge 2^{-cs} \cD^\rho_{\le m}(x_1,...,x_m).
\end{align}
Thus,
\begin{align}
    \mathrm{KL} (\cD^\rho_{\le m}\|\cD^\mu_{\le m}) 
    = \sum_{x_1,...,x_m} \cD^\rho_{\le m}(x_1,...,x_m) \log\frac{\cD^\rho_{\le m}(x_1,...,x_m)}{\cD^\mu_{\le m}(x_1,...,x_m)} 
    \le cs.
\end{align}
The chain rule for KL-divergence gives 
\begin{align}
    \sum_{i\in[m]}\mathbb{E}_{h_{<i}} \left[ \mathrm{KL}(\cD^\rho_i\|\cD^\mu_i) \right]
    = \mathrm{KL}  (\cD^\rho_{\le m}\|\cD^\mu_{\le m}) 
    \le cs,
\end{align}
where the expectation is taken over $h_{<i}\gets\cL(\param,\rho_{\regR_1...\regR_{i-1}})$ for $\cL$ described as above.
Thus, for a randomly chosen $i\gets[m]$, and $h_{<i}\gets\cL(\param,\rho_{\regR_1...\regR_{i-1}})$
\begin{align}
    \Pr_{i,h_{<i}} \left[ \TD\left( \rho_{\regR_i|h_{<i}},\mu_{\regR_i|h_{<i}} \right)\ge\epsilon \right]
    &\le \frac{1}{m\epsilon^2} \sum_{i\in[m]} \mathbb{E}_{h_{<i}} \left[ \TD(\rho_{\regR_i|h_{<i}},\mu_{\regR_i|h_{<i}})^2 \right] \\
    &\le \frac{\ln2 N_s}{2m\epsilon^2} \sum_{i\in[m]} \mathbb{E}_{h_{<i}} \left[ \mathrm{KL}(\cD^\rho_i\|\cD^\mu_i) \right] \\ 
    &\le \frac{c\ln 2 s N_s}{2m\epsilon^2}.
\end{align}
Requiring that the success probability is at least $1-\delta$ gives
\begin{align}
    m \ge \frac{c\ln 2}{2} \frac{sN_s}{\epsilon^2 \delta}.
\end{align}
Note that $N_s=O(2^s)$ and therefore the round complexity $O(s 2^s\epsilon^{-2}\delta^{-1})$ is sufficient in this case.

\paragraph{If $n<s$.}
Fix a $c_1\epsilon^4/(2\cdot 2^{2n})$-approximate $4$-design
$\{p_k,|\psi_k\rangle\}_{k\in[K]}$ in dimension $2^n$, where $c_1$ is a constant in \cref{lem:berger} and $K$ denotes the number of states that form a $4$-design.
At each round $j\in[i-1]$, $\mathcal L$ applies the POVM
\begin{align}
    M_j \coloneqq
    \left\{
        2^np_k|\psi_k\rangle\langle\psi_k|
    \right\}_{k\in[K]}
\end{align}
to $\regR_j$ and records the outcome $x_j\in[K]$.

We now analyze the error of $\mathcal L$.
Fix a round $j$ and a history $h_{<j}$ of positive probability
under $\rho_{\regR_1...\regR_m}$.
Let $\cD_j^\rho$ and $\cD_j^\mu$ denote the distributions of
measurement outcomes obtained by applying $M_j$ to
$\rho_{\regR_j\mid h_{<j}}$ and $\mu_{\regR_j\mid h_{<j}}$, respectively.
Explicitly, for each $k\in[K]$,
\begin{align*}
    \cD_j^\rho(k)
    =
    2^n p_k
    \langle\psi_k|\rho_{\regR_j\mid h_{<j}}|\psi_k\rangle,
    \text{ and }
    \cD_j^\mu(k)
    =
    2^n p_k
    \langle\psi_k|\mu_{\regR_j\mid h_{<j}}|\psi_k\rangle.
\end{align*}
We consider the case in which $\TD(\rho_{\regR_j|h_{<j}},\mu_{\regR_j|h_{<j}})\ge\epsilon$.
Then, 
\begin{align}
    \frac{c_1\epsilon^4}{2\cdot 2^{2n}} \le \frac{c_1}{2\cdot 2^{2n}} \TD(\rho_{\regR_j|h_{<j}},\mu_{\regR_j|h_{<j}})^4
\end{align}
and therefore, by \cref{lem:berger},
\begin{align}
    \SD(\cD_j^\rho,\cD_j^\mu)
    &=
    \frac{2^n}{2}
    \sum_{k\in[K]} p_k
    \left|
        \langle\psi_k|
        \bigl(\rho_{\regR_j\mid h_{<j}}-\mu_{\regR_j\mid h_{<j}}\bigr)
        |\psi_k\rangle
    \right|
    \\
    &\ge
    \frac{c_2}{\sqrt{2^n}}
    \TD(\rho_{\regR_j\mid h_{<j}},\mu_{\regR_j\mid h_{<j}}) \\
    &\ge \frac{c_2\epsilon}{\sqrt{2^n}}
    \label{eq:design_SD}
\end{align}
for some constant $c_2$.
Pinsker's inequality now gives
\begin{align}
    \mathrm{KL}(\cD_j^\rho\|\cD_j^\mu)
    &\ge
    \frac{2}{\ln 2}
    \SD(\cD_j^\rho,\cD_j^\mu)^2
    \\
    &\ge
    \frac{2c_2^2 \epsilon^2}{2^n\ln 2}.
    \label{eq:design}
\end{align}
The last inequality follows from \cref{eq:design_SD}.
Moreover,
\begin{align}
    \mathbb{E}_{h_{<j}} \left[ \mathrm{KL}(\cD^\rho_j\|\cD^\mu_j) \right]
    &= \sum_{h_{<j}}\Pr[h_{<j}\gets\cL(\param,\rho_{\regR_1...\regR_{j-1}})] \mathrm{KL}(\cD^\rho_j\|\cD^\mu_j) \\
    &\ge \sum_{h_{<j}: \TD(\rho_{\regR_j|h_{<j}},\mu_{\regR_j|h_{<j}})\ge\epsilon} \Pr[h_{<j}\gets\cL(\param,\rho_{\regR_1...\regR_{j-1}})] \mathrm{KL}(\cD^\rho_j\|\cD^\mu_j) \\
    &\ge \frac{2c_2^2\epsilon^2}{2^n\ln2} \sum_{h_{<j}: \TD(\rho_{\regR_j|h_{<j}},\mu_{\regR_j|h_{<j}})\ge\epsilon} \Pr[h_{<j}\gets\cL(\param,\rho_{\regR_1...\regR_{j-1}})] \\
    &= \frac{2c_2^2\epsilon^2}{2^n\ln2} \Pr_{h_{<j}} \left[ \TD(\rho_{\regR_j\mid h_{<j}},\mu_{\regR_j\mid h_{<j}})\ge \epsilon \right],
    \label{eq:badevent}
\end{align}
Let $\cD_{\le m}^\rho$ and $\cD_{\le m}^\mu$ be the distribution of measurement outcomes obtained by applying $(M_1,...,M_m)$ to $\rho_{\regR_1...\regR_m}$ and $\mu_{\regR_1...\regR_m}$, respectively.
By \cref{eq:tomo_domination}, for any $(x_1,...,x_m)$,
\begin{align}
    \cD^\mu_{\le m}(x_1,...,x_m) \ge 2^{-cs} \cD^\rho_{\le m}(x_1,...,x_m).
\end{align}
Thus,
\begin{align}
    \mathrm{KL} (\cD^\rho_{\le m}\|\cD^\mu_{\le m}) 
    = \sum_{x_1,...,x_m} \cD^\rho_{\le m}(x_1,...,x_m) \log\frac{\cD^\rho_{\le m}(x_1,...,x_m)}{\cD^\mu_{\le m}(x_1,...,x_m)} 
    \le cs.
\end{align}
The chain rule for KL-divergence gives 
\begin{align}
    \sum_{i\in[m]}\mathbb{E}_{h_{<i}} \left[ \mathrm{KL}(\cD^\rho_i\|\cD^\mu_i) \right]
    = \mathrm{KL}  (\cD^\rho_{\le m}\|\cD^\mu_{\le m}) 
    \le cs.
    \label{eq:KL-chain}
\end{align}
By combining \cref{eq:badevent,eq:KL-chain}, we obtain
\begin{align}
    \Pr_{i,h_{<i}} \left[ \TD\left( \rho_{\regR_i|h_{<i}},\mu_{\regR_i|h_{<i}} \right)\ge\epsilon \right] 
    \le \frac{cs 2^n \ln2}{2c_2^2 m \epsilon^2}.
\end{align}
Requiring that the success probability is at least $1-\delta$ gives
\begin{align}
    m \ge \frac{cs 2^n\ln 2}{2c_2^2 \epsilon^2\delta},
\end{align}
and therefore the round complexity $O(s 2^n\epsilon^{-2}\delta^{-1})$ is sufficient in this case.

By combining both cases, we finally obtain the upper bound of the round complexity  $O(s\min\{2^s,2^n\}\epsilon^{-2}\delta^{-1})$ and we complete the proof.

\end{proof}

\subsection*{AI Disclosure.}
We used generative AI tools, OpenAI GPT-5.6 Sol and GPT-6 Astra, to assist with improving the presentation of the manuscript and refining proof arguments. All proofs were independently verified by the authors, who take full responsibility for the content of the manuscript.

\ifnum\anonymous=1
\else
\subsection*{Acknowledgements.}
YS is supported by JSPS KAKENHI, Grant Number JP26KJ1421.
\fi

\ifnum\submission=0
\bibliographystyle{alpha} 
\else
\bibliographystyle{splncs04}
\fi
\bibliography{abbrev0,crypto,reference}

@string{ieee =                  {IEEE}}

@string{springer =              "Springer"}

@string{dagstuhl =              "Schloss Dagstuhl - Leibniz-Zentrum fuer Informatik"}

@string{acm =                   "Association for Computing Machinery"}

@misc{QZ26,
      title={Impersonating Quantum Secrets over Classical Channels}, 
      author={Luowen Qian and Mark Zhandry},
      year={2026},
      eprint={2601.01058},
      archivePrefix={arXiv},
      primaryClass={quant-ph},
      url={https://arxiv.org/abs/2601.01058}, 
}

@INPROCEEDINGS{CCC:AE07,
  author={Ambainis, Andris and Emerson, Joseph},
  booktitle={Twenty-Second Annual IEEE Conference on Computational Complexity (CCC'07)}, 
  title={Quantum t-designs: t-wise Independence in the Quantum World}, 
  year={2007},
  volume={},
  number={},
  pages={129-140},
  doi={10.1109/CCC.2007.26}}

@inproceedings{CCCGHJL26,
  author    = {Bruno Cavalar and Boyang Chen and Andrea Coladangelo
               and Matthew Gray and Zihan Hu and Zhengfeng Ji
               and Xingjian Li},
  title     = {A Meta-complexity Characterization of Minimal Quantum Cryptography},
  booktitle = {Proceedings of the 58th Annual ACM Symposium on Theory of Computing (STOC)},
  pages     = {675--686},
  publisher = {ACM},
  year      = {2026},
  doi       = {10.1145/3798129.3800783}
}

@inproceedings{Perrier26,
  author    = {Elija Perrier},
  title     = {Quantum {AIXI}: Universal Intelligence via Quantum Information},
  booktitle = {Artificial General Intelligence},
  series    = {Lecture Notes in Computer Science},
  volume    = {16058},
  pages     = {58--70},
  publisher = {Springer},
  year      = {2026},
  doi       = {10.1007/978-3-032-00800-8_6}
}

@techreport{Sol60,
  author      = {Ray J. Solomonoff},
  title       = {A Preliminary Report on a General Theory of Inductive Inference},
  institution = {Zator Company},
  number      = {ZTB-138},
  address     = {Cambridge, MA},
  year        = {1960},
  month       = nov,
  url         = {https://www.raysolomonoff.com/publications/z138.pdf}
}

@article{Sol64II,
  author  = {Ray J. Solomonoff},
  title   = {A Formal Theory of Inductive Inference. {Part II}},
  journal = {Information and Control},
  volume  = {7},
  number  = {2},
  pages   = {224--254},
  year    = {1964},
  doi     = {10.1016/S0019-9958(64)90131-7}
}

@inproceedings{IL90,
  author    = {Russell Impagliazzo and Leonid A. Levin},
  title     = {No Better Ways to Generate Hard {NP} Instances
               than Picking Uniformly at Random},
  booktitle = {Proceedings of the 31st Annual Symposium on
               Foundations of Computer Science ({FOCS})},
  pages     = {812--821},
  publisher = {IEEE Computer Society},
  year      = {1990},
  doi       = {10.1109/FSCS.1990.89604},
  url       = {https://doi.org/10.1109/FSCS.1990.89604}
}

@inproceedings{BFKL93,
  author    = {Avrim Blum and Merrick L. Furst and
               Michael J. Kearns and Richard J. Lipton},
  title     = {Cryptographic Primitives Based on Hard Learning Problems},
  editor    = {Douglas R. Stinson},
  booktitle = {Advances in Cryptology---{CRYPTO} '93},
  series    = {Lecture Notes in Computer Science},
  volume    = {773},
  pages     = {278--291},
  publisher = {Springer},
  year      = {1994},
  doi       = {10.1007/3-540-48329-2_24},
  url       = {https://doi.org/10.1007/3-540-48329-2_24}
}

@article{Wil15,
    author = {Wilde, Mark M.},
    title = {Recoverability in quantum information theory},
    journal = {Proceedings of the Royal Society A: Mathematical, Physical and Engineering Sciences},
    volume = {471},
    number = {2182},
    pages = {20150338},
    year = {2015},
    month = {10},
    issn = {1364-5021},
    doi = {10.1098/rspa.2015.0338},
    url = {https://doi.org/10.1098/rspa.2015.0338},
    eprint = {https://royalsocietypublishing.org/rspa/article-pdf/doi/10.1098/rspa.2015.0338/367681/rspa.2015.0338.pdf},
}

@article{VNHUS11,
  author  = {Joel Veness and Kee Siong Ng and Marcus Hutter and William Uther and David Silver},
  title   = {A {Monte-Carlo} {AIXI} Approximation},
  journal = {Journal of Artificial Intelligence Research},
  year    = {2011},
  doi     = {10.1613/jair.3125}
}

@article{Kol65,
  author  = {Kolmogorov, Andrei N.},
  title   = {Three Approaches to the Quantitative Definition of Information},
  journal = {Problems of Information Transmission},
  volume  = {1},
  number  = {1},
  pages   = {1--7},
  year    = {1965}
}

@article{Cha66,
  author  = {Chaitin, Gregory J.},
  title   = {On the Length of Programs for Computing Finite Binary Sequences},
  journal = {Journal of the ACM},
  volume  = {13},
  number  = {4},
  pages   = {547--569},
  year    = {1966},
  doi     = {10.1145/321356.321363}
}

@article{Svo96,
  author  = {Svozil, Karl},
  title   = {Quantum Algorithmic Information Theory},
  journal = {Journal of Universal Computer Science},
  volume  = {2},
  number  = {5},
  pages   = {311--346},
  year    = {1996},
  doi     = {10.3217/jucs-002-05-0311}
}

@article{Vit01,
  author  = {Vit{\'a}nyi, Paul M. B.},
  title   = {Quantum {Kolmogorov} Complexity Based on Classical Descriptions},
  journal = {IEEE Transactions on Information Theory},
  volume  = {47},
  number  = {6},
  pages   = {2464--2479},
  year    = {2001},
  doi     = {10.1109/18.945258}
}

@article{MB05,
  author  = {Mora, Caterina E. and Briegel, Hans J.},
  title   = {Algorithmic Complexity and Entanglement of Quantum States},
  journal = {Physical Review Letters},
  volume  = {95},
  number  = {20},
  pages   = {200503},
  year    = {2005},
  doi     = {10.1103/PhysRevLett.95.200503}
}

@article{BDL01,
  author  = {Berthiaume, Andr{\'e} and van Dam, Wim and Laplante, Sophie},
  title   = {Quantum {Kolmogorov} Complexity},
  journal = {Journal of Computer and System Sciences},
  volume  = {63},
  number  = {2},
  pages   = {201--221},
  year    = {2001},
  doi     = {10.1006/jcss.2001.1765}
}

@article{Gac01,
  author  = {G{\'a}cs, Peter},
  title   = {Quantum Algorithmic Entropy},
  journal = {Journal of Physics A: Mathematical and General},
  volume  = {34},
  number  = {35},
  pages   = {6859--6880},
  year    = {2001},
  doi     = {10.1088/0305-4470/34/35/312}
}

@article{Val84,
  author  = {Leslie G. Valiant},
  title   = {A Theory of the Learnable},
  journal = {Communications of the ACM},
  volume  = {27},
  number  = {11},
  pages   = {1134--1142},
  year    = {1984},
  doi     = {10.1145/1968.1972}
}

@inproceedings{NR06,
  author    = {Moni Naor and Guy N. Rothblum},
  title     = {Learning to Impersonate},
  booktitle = {Proceedings of the 23rd International Conference
               on Machine Learning ({ICML} 2006)},
  pages     = {649--656},
  publisher = {ACM},
  year      = {2006},
  doi       = {10.1145/1143844.1143926}
}

@inproceedings{HN23,
  author    = {Shuichi Hirahara and Mikito Nanashima},
  title     = {Learning in {Pessiland} via Inductive Inference},
  booktitle = {2023 IEEE 64th Annual Symposium on Foundations
               of Computer Science ({FOCS})},
  pages     = {447--457},
  publisher = {IEEE},
  year      = {2023},
  doi       = {10.1109/FOCS57990.2023.00033}
}

@inproceedings{HN26,
  author    = {Shuichi Hirahara and Mikito Nanashima},
  title     = {Complexity-Theoretic Universal Inductive Inference},
  booktitle = {Proceedings of the 58th Annual ACM Symposium
               on Theory of Computing ({STOC} 2026)},
  pages     = {377--385},
  publisher = {ACM},
  year      = {2026},
  doi       = {10.1145/3798129.3800757}
}

@inproceedings{HHM25,
  author    = {Taiga Hiroka and Min-Hsiu Hsieh and Tomoyuki Morimae},
  title     = {Hardness of Quantum Distribution Learning
               and Quantum Cryptography},
  booktitle = {Advances in Cryptology -- {ASIACRYPT} 2026},
  year      = {2026},
  note      = {To appear},
  eprint    = {2507.01292},
  archivePrefix = {arXiv},
  primaryClass  = {quant-ph},
  url       = {https://arxiv.org/abs/2507.01292}
}

@misc{HH24,
  author    = {Taiga Hiroka and Min-Hsiu Hsieh},
  title     = {Computational Complexity of Learning
               Efficiently Generatable Pure States},
  year      = {2024},
  eprint    = {2410.04373},
  archivePrefix = {arXiv},
  primaryClass  = {quant-ph},
  url       = {https://arxiv.org/abs/2410.04373}
}

@inproceedings{FGSY25,
  author    = {Bill Fefferman and Soumik Ghosh and Makrand Sinha
               and Henry Yuen},
  title     = {The Hardness of Learning Quantum Circuits
               and Its Cryptographic Applications},
  booktitle = {17th Innovations in Theoretical Computer Science
               Conference ({ITCS} 2026)},
  series    = {Leibniz International Proceedings in Informatics (LIPIcs)},
  volume    = {362},
  pages     = {56:1--56:21},
  publisher = {Schloss Dagstuhl -- Leibniz-Zentrum f{\"u}r Informatik},
  year      = {2026},
  doi       = {10.4230/LIPIcs.ITCS.2026.56}
}

@misc{CL26,
  author    = {Alexandru Cojocaru and Laura Lewis},
  title     = {Equivalence Between Average-Case Hardness of Learning
               and Cryptography for Mixed Quantum States},
  year      = {2026},
  eprint    = {2608.14331},
  archivePrefix = {arXiv},
  primaryClass  = {quant-ph},
  url       = {https://arxiv.org/abs/2608.14331}
}

@inproceedings{BHHP24,
  author    = {John Bostanci and Jonas Haferkamp and Dominik Hangleiter
               and Alexander Poremba},
  title     = {Efficient Quantum Pseudorandomness
               from {Hamiltonian} Phase States},
  booktitle = {20th Conference on the Theory of Quantum Computation,
               Communication and Cryptography ({TQC} 2025)},
  series    = {Leibniz International Proceedings in Informatics (LIPIcs)},
  volume    = {350},
  pages     = {9:1--9:18},
  publisher = {Schloss Dagstuhl -- Leibniz-Zentrum f{\"u}r Informatik},
  year      = {2025},
  doi       = {10.4230/LIPIcs.TQC.2025.9}
}

@inproceedings{PQS25,
  author    = {Alexander Poremba and Yihui Quek and Peter Shor},
  title     = {The Learning Stabilizers with Noise Problem},
  booktitle = {17th Innovations in Theoretical Computer Science
               Conference ({ITCS} 2026)},
  series    = {Leibniz International Proceedings in Informatics (LIPIcs)},
  volume    = {362},
  pages     = {108:1--108:19},
  publisher = {Schloss Dagstuhl -- Leibniz-Zentrum f{\"u}r Informatik},
  year      = {2026},
  doi       = {10.4230/LIPIcs.ITCS.2026.108}
}

@inproceedings{CGGH25,
  author    = {Bruno P. Cavalar and Eli Goldin and Matthew Gray
               and Peter Hall},
  title     = {A Meta-complexity Characterization of Quantum Cryptography},
  booktitle = {Advances in Cryptology -- {EUROCRYPT} 2025, Part VII},
  series    = {Lecture Notes in Computer Science},
  volume    = {15607},
  pages     = {82--107},
  publisher = {Springer},
  year      = {2025},
  doi       = {10.1007/978-3-031-91098-2_4}
}

@inproceedings{HM25,
  author    = {Taiga Hiroka and Tomoyuki Morimae},
  title     = {Quantum Cryptography and Meta-Complexity},
  booktitle = {Advances in Cryptology -- {CRYPTO} 2025},
  series    = {Lecture Notes in Computer Science},
  volume    = {16001},
  pages     = {545--574},
  publisher = {Springer},
  year      = {2025},
  doi       = {10.1007/978-3-032-01878-6_18}
}

@inproceedings{QRZ25,
  author    = {Luowen Qian and Justin Raizes and Mark Zhandry},
  title     = {Hard Quantum Extrapolations in Quantum Cryptography},
  booktitle = {Advances in Cryptology -- {EUROCRYPT} 2025, Part VII},
  series    = {Lecture Notes in Computer Science},
  volume    = {15607},
  pages     = {53--81},
  publisher = {Springer},
  year      = {2025},
  doi       = {10.1007/978-3-031-91098-2_3}
}

@article{HM15,
  author        = {Hayashi, Masahito and Morimae, Tomoyuki},
  title         = {Verifiable Measurement-Only Blind Quantum Computing
                   with Stabilizer Testing},
  journal       = {Physical Review Letters},
  volume        = {115},
  pages         = {220502},
  year          = {2015},
  doi           = {10.1103/PhysRevLett.115.220502},
  eprint        = {1505.07535},
  archivePrefix = {arXiv},
  primaryClass  = {quant-ph},
  url           = {https://arxiv.org/abs/1505.07535}
}

@article{MTH17,
  author        = {Morimae, Tomoyuki and Takeuchi, Yuki
                   and Hayashi, Masahito},
  title         = {Verification of Hypergraph States},
  journal       = {Physical Review A},
  volume        = {96},
  pages         = {062321},
  year          = {2017},
  doi           = {10.1103/PhysRevA.96.062321},
  eprint        = {1701.05688},
  archivePrefix = {arXiv},
  primaryClass  = {quant-ph},
  url           = {https://arxiv.org/abs/1701.05688}
}

@article{TM18,
  author        = {Takeuchi, Yuki and Morimae, Tomoyuki},
  title         = {Verification of Many-Qubit States},
  journal       = {Physical Review X},
  volume        = {8},
  pages         = {021060},
  year          = {2018},
  doi           = {10.1103/PhysRevX.8.021060},
  eprint        = {1709.07575},
  archivePrefix = {arXiv},
  primaryClass  = {quant-ph},
  url           = {https://arxiv.org/abs/1709.07575}
}

@article{TMMMF19,
  author        = {Takeuchi, Yuki and Mantri, Atul
                   and Morimae, Tomoyuki and Mizutani, Akihiro
                   and Fitzsimons, Joseph F.},
  title         = {Resource-Efficient Verification of Quantum Computing
                   Using {Serfling}'s Bound},
  journal       = {npj Quantum Information},
  volume        = {5},
  pages         = {27},
  year          = {2019},
  doi           = {10.1038/s41534-019-0142-2},
  eprint        = {1806.09138},
  archivePrefix = {arXiv},
  primaryClass  = {quant-ph},
  url           = {https://arxiv.org/abs/1806.09138}
}

@article{ZH19a,
  author        = {Zhu, Huangjun and Hayashi, Masahito},
  title         = {Efficient Verification of Hypergraph States},
  journal       = {Physical Review Applied},
  volume        = {12},
  pages         = {054047},
  year          = {2019},
  doi           = {10.1103/PhysRevApplied.12.054047},
  eprint        = {1806.05565},
  archivePrefix = {arXiv},
  primaryClass  = {quant-ph},
  url           = {https://arxiv.org/abs/1806.05565}
}

@article{ZH19b,
  author        = {Zhu, Huangjun and Hayashi, Masahito},
  title         = {Efficient Verification of Pure Quantum States
                   in the Adversarial Scenario},
  journal       = {Physical Review Letters},
  volume        = {123},
  pages         = {260504},
  year          = {2019},
  doi           = {10.1103/PhysRevLett.123.260504},
  eprint        = {1909.01900},
  archivePrefix = {arXiv},
  primaryClass  = {quant-ph},
  url           = {https://arxiv.org/abs/1909.01900}
}

@article{ZH19c,
  author        = {Zhu, Huangjun and Hayashi, Masahito},
  title         = {General Framework for Verifying Pure Quantum States
                   in the Adversarial Scenario},
  journal       = {Physical Review A},
  volume        = {100},
  pages         = {062335},
  year          = {2019},
  doi           = {10.1103/PhysRevA.100.062335},
  eprint        = {1909.01943},
  archivePrefix = {arXiv},
  primaryClass  = {quant-ph},
  url           = {https://arxiv.org/abs/1909.01943}
}

@article{MK20,
  author        = {Markham, Damian and Krause, Alexandra},
  title         = {A Simple Protocol for Certifying Graph States
                   and Applications in Quantum Networks},
  journal       = {Cryptography},
  volume        = {4},
  number        = {1},
  pages         = {3},
  year          = {2020},
  doi           = {10.3390/cryptography4010003},
  eprint        = {1801.05057},
  archivePrefix = {arXiv},
  primaryClass  = {quant-ph},
  url           = {https://arxiv.org/abs/1801.05057}
}

@article{LZH23,
  author        = {Li, Zihao and Zhu, Huangjun and Hayashi, Masahito},
  title         = {Robust and Efficient Verification of Graph States
                   in Blind Measurement-Based Quantum Computation},
  journal       = {npj Quantum Information},
  volume        = {9},
  pages         = {115},
  year          = {2023},
  doi           = {10.1038/s41534-023-00783-9},
  eprint        = {2305.10742},
  archivePrefix = {arXiv},
  primaryClass  = {quant-ph},
  url           = {https://arxiv.org/abs/2305.10742}
}

@misc{AGMOC25,
  author        = {Abdul Sater, Sami and Garnier, Maxime
                   and Martinez, Thierry and Ollivier, Harold
                   and Chabaud, Ulysse},
  title         = {Efficient Certification of Intractable Quantum States
                   with Few {Pauli} Measurements},
  year          = {2025},
  eprint        = {2511.07300},
  archivePrefix = {arXiv},
  primaryClass  = {quant-ph},
  note          = {arXiv:2511.07300},
  url           = {https://arxiv.org/abs/2511.07300}
}

@misc{PFMO25,
  author        = {De Palma, Giacomo and Fanizza, Marco
                   and Mowry, Connor and O'Donnell, Ryan},
  title         = {Non-iid Hypothesis Testing: From Classical to Quantum},
  year          = {2025},
  eprint        = {2510.06147},
  archivePrefix = {arXiv},
  primaryClass  = {quant-ph},
  note          = {arXiv:2510.06147},
  url           = {https://arxiv.org/abs/2510.06147}
}

@misc{CGGHHM25,
  author        = {Cavalar, Bruno and Goldin, Eli and Gray, Matthew
                   and Hiroka, Taiga and Hsieh, Min-Hsiu
                   and Morimae, Tomoyuki},
  title         = {Cryptographic Conditions for Efficient Testing
                   of Distributions and Quantum States},
  year          = {2025},
  eprint        = {2510.05028},
  archivePrefix = {arXiv},
  primaryClass  = {quant-ph},
  note          = {arXiv:2510.05028},
  url           = {https://arxiv.org/abs/2510.05028}
}

@article{BH17,
  author  = {Brand{\~a}o, Fernando G. S. L. and Harrow, Aram W.},
  title   = {Quantum {de Finetti} Theorems Under Local Measurements
             with Applications},
  journal = {Communications in Mathematical Physics},
  volume  = {353},
  number  = {2},
  pages   = {469--506},
  year    = {2017},
  doi     = {10.1007/s00220-017-2880-3},
  url     = {https://arxiv.org/abs/1210.6367}
}

@article{FQR24,
  author  = {Fanizza, Marco and Quek, Yihui and Rosati, Matteo},
  title   = {Learning Quantum Processes Without Input Control},
  journal = {PRX Quantum},
  volume  = {5},
  pages   = {020367},
  year    = {2024},
  doi     = {10.1103/PRXQuantum.5.020367},
  url     = {https://arxiv.org/abs/2211.05005}
}

@misc{Zam26a,
  author        = {Zambrano, Leonardo},
  title         = {Quantum tomography for non-iid sources},
  year          = {2026},
  eprint        = {2602.22057},
  archivePrefix = {arXiv},
  primaryClass  = {quant-ph},
  note          = {arXiv:2602.22057},
  url           = {https://arxiv.org/abs/2602.22057}
}

@misc{Zam26b,
  author        = {Zambrano, Leonardo},
  title         = {Classical shadows for non-iid quantum sources},
  year          = {2026},
  eprint        = {2603.05137},
  archivePrefix = {arXiv},
  primaryClass  = {quant-ph},
  note          = {arXiv:2603.05137},
  url           = {https://arxiv.org/abs/2603.05137}
}

@misc{ZSK26,
  author        = {Yanbao Zhang and Akshay Seshadri and Emanuel Knill},
  title         = {An efficient method for spot-checking quantum
                   properties with sequential trials},
  year          = {2026},
  eprint        = {2602.08114},
  archivePrefix = {arXiv},
  primaryClass  = {quant-ph},
  url           = {https://arxiv.org/abs/2602.08114}
}

@misc{NZ26,
  author        = {Mariana Navarro and Leonardo Zambrano},
  title         = {Certifying quantum states without independence
                   assumptions},
  year          = {2026},
  eprint        = {2606.31913},
  archivePrefix = {arXiv},
  primaryClass  = {quant-ph},
  url           = {https://arxiv.org/abs/2606.31913}
}

@article{Sol78,
  author  = {Ray J. Solomonoff},
  title   = {Complexity-Based Induction Systems:
             Comparisons and Convergence Theorems},
  journal = {IEEE Transactions on Information Theory},
  volume  = {24},
  number  = {4},
  pages   = {422--432},
  year    = {1978},
  doi     = {10.1109/TIT.1978.1055913},
  url     = {https://raysolomonoff.com/publications/solo1.pdf}
}

@article{FKMO24,
  author        = {Fawzi, Omar and Kueng, Richard and Markham, Damian and Oufkir, Aadil},
  title         = {Learning Properties of Quantum States without the {IID} Assumption},
  journal       = {Nature Communications},
  volume        = {15},
  number        = {1},
  pages         = {9677},
  year          = {2024},
  doi           = {10.1038/s41467-024-53765-6},
  eprint        = {2401.16922},
  archivePrefix = {arXiv},
  primaryClass  = {quant-ph},
  url           = {https://arxiv.org/abs/2401.16922}
}

@article{CR12,
  author        = {Christandl, Matthias and Renner, Renato},
  title         = {Reliable Quantum State Tomography},
  journal       = {Physical Review Letters},
  volume        = {109},
  number        = {12},
  pages         = {120403},
  year          = {2012},
  doi           = {10.1103/PhysRevLett.109.120403},
  eprint        = {1108.5329},
  archivePrefix = {arXiv},
  primaryClass  = {quant-ph},
  url           = {https://arxiv.org/abs/1108.5329}
}

@article{Sol64,
  author        = {Solomonoff, Ray J.},
  title         = {A Formal Theory of Inductive Inference. {Part I}},
  journal       = {Information and Control},
  volume        = {7},
  number        = {1},
  pages         = {1--22},
  year          = {1964},
  doi           = {10.1016/S0019-9958(64)90223-2},
  url           = {https://doi.org/10.1016/S0019-9958(64)90223-2}
}

@ARTICLE{JRSWW18,
  title     = "Universal recovery maps and approximate sufficiency of quantum
               relative entropy",
  author    = "Junge, Marius and Renner, Renato and Sutter, David and Wilde,
               Mark M and Winter, Andreas",
  journal   = "Ann. Henri Poincaré",
  publisher = "Springer Science and Business Media LLC",
  volume    =  19,
  number    =  10,
  pages     = "2955--2978",
  month     =  oct,
  year      =  2018,
  language  = "en"
}

@Article{Petz86,
author={Petz, D{\'e}nes},
title={Sufficient subalgebras and the relative entropy of states of a von {N}eumann algebra},
journal={Communications in Mathematical Physics},
year={1986},
month={Mar},
day={01},
volume={105},
number={1},
pages={123-131},
issn={1432-0916},
doi={10.1007/BF01212345},
url={https://doi.org/10.1007/BF01212345}
}

@article{Petz88,
    author = {Petz, D{\'e}nes},
    title = {SUFFICIENCY OF CHANNELS OVER VON {N}EUMANN ALGEBRAS},
    journal = {The Quarterly Journal of Mathematics},
    volume = {39},
    number = {1},
    pages = {97-108},
    year = {1988},
    month = {03},
    issn = {0033-5606},
    doi = {10.1093/qmath/39.1.97},
    url = {https://doi.org/10.1093/qmath/39.1.97},
    eprint = {https://academic.oup.com/qjmath/article-pdf/39/1/97/4559225/39-1-97.pdf},
}

\appendix

\section{Lower bound on round complexity of Inductive Inference}\label{sec:lower_bound}

We show that the round complexity of inductive inference given in \cref{thm:info_alg} is optimal up to constant factors. This improves the lower bound of $\Omega(s/\epsilon^2+s/\delta)$ shown in~\cite[Proposition~B.1]{HN26}.

\begin{theorem}[Lower bound on round complexity]
    \label{thm:qii_lower_bound}
    Universal quantum inductive inference requires
    $\Omega(s/(\epsilon^2\delta))$ rounds.
    More specifically, there exists a constant $C>0$ such that
    this lower bound holds for all
    $\epsilon^{-1},\delta^{-1}\in\mathbb{N}$ satisfying
    $0<\epsilon\le 1/64$, $0<\delta\le 1/8$, and
    $s\ge C\log_2(2/(\epsilon\delta))$.
    It holds even for computationally unbounded learners,
    product states with $n=1$, and computational-basis instruments.
\end{theorem}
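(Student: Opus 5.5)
We construct a hard family of classical product sources and use a counting/information argument: the learner's predictions at a $\delta$-fraction of rounds must remain $\epsilon$-inaccurate unless $m=\Omega(s/(\epsilon^2\delta))$. Everything is classical. With $n=1$ and computational-basis instruments, the task reduces to sampling the next bit from a distribution $\epsilon$-close in statistical distance to the true conditional law. For product sources this conditional law is just the marginal of bit $i$, since the post-measurement state of the previous registers is a fixed basis state determined by the history.

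\textbf{Construction.} Set $k:=\lfloor c\, s/\log(2/(\epsilon\delta))\rfloor$ for a small constant $c$; the hypothesis $s\ge C\log_2(2/(\epsilon\delta))$ ensures $k\ge 1$. Partition the $m$ rounds into $k$ consecutive blocks. In each block, a ``change point'' occurs at a position chosen among $\Theta(1/\delta)$-spaced candidate locations, which costs $O(\log(1/\delta))$ bits. A block is further split into a quiet part of length about $\delta m/k$ and a long part. Within a block, i.i.d.\ bits are emitted with bias $1/2+2\epsilon\sigma_b$, where $\sigma_b\in\{\pm1\}$ is a hidden sign.

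The intended mechanism is the following. The learner needs $\Omega(1/\epsilon^2)$ samples of a biased coin to identify $\sigma_b$ with constant confidence (Le Cam/Pinsker on the two product distributions $\mathrm{Ber}(1/2\pm2\epsilon)^{\otimes \ell}$). Whenever the sign is unknown, the prediction error exceeds $\epsilon$ with constant probability. So each segment in which the sign is fresh wastes its first $\Theta(1/\epsilon^2)$ rounds.

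To force a $\delta$-fraction of bad rounds, I would use $k'=\Theta(s)$ sign changes. Each segment has length $L=\Theta(1/(\epsilon^2\delta))$, but only its first $\Theta(1/\epsilon^2)$ rounds, a $\delta$-fraction, are bad. That alone is insufficient, since the learner fails on only a $\Theta(\delta)$ fraction; we need that fraction to exceed $\delta$ by a constant factor. I would instead make the total error count exceed $\delta m$ by choosing $m$ just below $c' s/(\epsilon^2\delta)$ with $\Theta(s)$ segments of length $\Theta(1/(\epsilon^2\delta))$. Each segment yields $\Theta(1/\epsilon^2)$ forced errors, so the total forced error is $\Theta(s/\epsilon^2)$. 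This is at least $2\delta m$ precisely when $m\le c'' s/(\epsilon^2\delta)$. The description cost of $\Theta(s)$ sign bits plus the lengths (encoded via the parameters, since $m$ is fixed and the segment length can be computed from $\epsilon,\delta$ supplied as a program constant costing $O(\log(1/(\epsilon\delta)))$ bits) is $O(s)$. The hypothesis on $s$ absorbs this overhead, and the time bound $t$ is chosen large enough.

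\textbf{Lower-bound argument.} Put the uniform prior on sign vectors $\sigma\in\{\pm1\}^{k'}$. Because segments are independent, the learner's behavior within segment $b$ depends on the past only through data independent of $\sigma_b$, apart from the prefix of segment $b$ itself. For rounds at offset $j\le c_0/\epsilon^2$ inside a segment, the posterior on $\sigma_b$ stays bounded away from $0$ and $1$ with constant probability, by the total variation bound $\TD(\mathrm{Ber}(1/2+2\epsilon)^{\otimes j},\mathrm{Ber}(1/2-2\epsilon)^{\otimes j})\le 1/4$. Any single output distribution $q$ of the learner is then $\epsilon$-far from at least one of $\mathrm{Ber}(1/2\pm2\epsilon)$, which lie at distance $4\epsilon$ from each other. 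Hence the error exceeds $\epsilon$ with probability at least a constant $\gamma$ (averaged over prior and data) at each such round. Averaging over uniform $i$ gives $\Pr[\text{error}>\epsilon]\ge \gamma c_0 k'/(\epsilon^2 m)$. By an averaging argument, some fixed $\sigma$ (hence some source in $\cC_{1,m}[s,t]$) attains this bound, and it exceeds $\delta$ whenever $m<\gamma c_0 k'/(\epsilon^2\delta)=\Omega(s/(\epsilon^2\delta))$. The constants $\epsilon\le1/64$ and $\delta\le1/8$ keep biases valid and constants clean.

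\textbf{Main obstacle.} The delicate step is the accounting that lets $k'=\Theta(s)$ independent signs, together with a segment length adapted to $m$, fit into a Turing machine of size at most $s$. The machine must be uniform over all $m\ge$ the bound, so it should compute segment boundaries as $\lceil m/k'\rceil$ from $m$ read off the circuit size or a hardwired $O(\log(1/(\epsilon\delta)))$-bit constant. This is where $s\ge C\log(2/(\epsilon\delta))$ is used. I would also double-check that product sources make the learner's quantum input useless, since it is a known basis state, so the classical Le Cam argument applies verbatim even to unbounded quantum learners.
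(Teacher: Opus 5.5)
Your final argument is correct and is essentially the paper's proof: $\Theta(s)$ hardwired signs each fix a bias $1/2\pm\Theta(\epsilon)$ for an independent block, a Le Cam/Pinsker two-point bound ($\TD\le 1/4$, threshold test at $1/2$) forces constant failure probability on the first $\Theta(\epsilon^{-2})$ rounds of each block, and averaging over the uniform round and then fixing a worst-case sign vector finishes it. The paper differs only cosmetically, placing $k=\lfloor as\rfloor$ short blocks of length $\lfloor 1/(2048\epsilon^2)\rfloor$ first, padding with zeros up to $m=\lfloor k\ell/(4\delta)\rfloor$, and taking the biases from intervals so a finite gate set can realize them; your change-point detour with $k=\lfloor cs/\log(2/(\epsilon\delta))\rfloor$ and your worry about uniformity over all $m$ are unnecessary, since a failing source at a single length $m$ (computed from hardwired parameters) already gives $r>m$.
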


\begin{proof}[Proof of \cref{thm:qii_lower_bound}]
We prove a lower bound for classical inductive inference.
The learner is given a prefix $x_{<i}=(x_1,\ldots,x_{i-1})$
generated by an unknown randomized Turing machine and must
predict the conditional distribution of the next bit.

The idea is to hardwire $k$ bits into the Turing machine and
use each bit to determine the bias of a separate block of
outputs. The bias is small enough that observing a block
does not reliably reveal the corresponding hardwired bit.
However, accurately predicting the next output in that block
would reveal this bit.

Fix an arbitrary learner $\mathcal{L}$ and parameters
$s,\epsilon,\delta$ satisfying the conditions in the theorem.
Let $k=\lfloor as\rfloor$, where $a>0$ is a sufficiently small
constant, and set
\begin{align}
    \ell:=\left\lfloor\frac{1}{2048\epsilon^2}\right\rfloor,
    \qquad
    m:=\left\lfloor\frac{k\ell}{4\delta}\right\rfloor.
    \label{eq:qii_lb_parameters}
\end{align}
Thus, $\ell=\Theta(\epsilon^{-2})$ and $m\ge k\ell$.

Choose two rational numbers $p_0,p_1$ which satisfy
\begin{align}
    \frac12-3\epsilon\le p_0\le\frac12-2\epsilon,
    \qquad
    \frac12+2\epsilon\le p_1\le\frac12+3\epsilon.
    \label{eq:qii_lb_bias}
\end{align}

For each $\alpha=(\alpha_1,\ldots,\alpha_k)\in\{0,1\}^k$,
define a randomized Turing machine $T_\alpha$ with output
$(X_1,\ldots,X_m)$. Its first $k\ell$ output bits form $k$
consecutive blocks of length $\ell$. The first block consists
of $X_1,\ldots,X_\ell$, the second of
$X_{\ell+1},\ldots,X_{2\ell}$, and so on.
The $j$th block uses the hardwired bit $\alpha_j$:
each of its $\ell$ bits is $1$ with probability $p_{\alpha_j}$,
using fresh randomness for every bit. In other words,
\begin{align}
    \Pr_{(X_1,\ldots,X_m)\leftarrow T_\alpha}
    \left[X_{(j-1)\ell+u}=1\right]
    =p_{\alpha_j}
    \qquad (j\in[k],\ u\in[\ell]).
    \label{eq:qii_lb_process}
\end{align}
Here, $j$ is the block number, $u$ is the position within
that block, and $i=(j-1)\ell+u$ is the position in the entire
output. After these $k$ blocks, $T_\alpha$ appends
$m-k\ell$ zeros. For fixed $\alpha$, all output bits are
independent.
The string $\alpha$ is fixed in the description of $T_\alpha$;
it is not sampled during the execution of $T_\alpha$.

The description of $T_\alpha$ consists of this fixed procedure
and the values $\alpha,k,p_0,p_1,\epsilon^{-1},\delta^{-1}$.
The machine computes $\ell$ and $m$ from these values. Hence,
\begin{align}
    |T_\alpha|
    =O\!\left(k+\log k+\log\epsilon^{-1}
                    +\log\delta^{-1}\right)
    \le s,
\end{align}
where the last inequality holds by choosing $a$ sufficiently
small and $C$ sufficiently large. All these machines have a
common time bound $t$ determined by $k,\epsilon,\delta$.
In particular, the parameters given to the learner do not
depend on $\alpha$.

We now analyze the learner's prediction.
For a fixed prefix $x_{<i}$, the probability
$\Pr[1\leftarrow\mathcal{L}(\mathsf{param},x_{<i})]$
is taken over the learner's internal randomness.
At round $i=(j-1)\ell+u$ in the $j$th block, independence gives
\begin{align}
    \Pr_{(X_1,\ldots,X_m)\leftarrow T_\alpha}
    [X_i=1\mid X_{<i}=x_{<i}]
    =p_{\alpha_j}.
\end{align}
For distributions on one bit, statistical distance is the
absolute difference between their probabilities of $1$.
The prediction is therefore $\epsilon$-accurate exactly when
\begin{align}
    \left|
        \Pr[1\leftarrow\mathcal{L}(\mathsf{param},x_{<i})]
        -p_{\alpha_j}
    \right|\le\epsilon.
    \label{eq:qii_lb_success}
\end{align}

Our goal is to prove
\begin{align}
    \Pr_{\substack{
        \alpha\leftarrow\{0,1\}^k,\ i\leftarrow[m]\\
        (x_1,\ldots,x_m)\leftarrow T_\alpha
    }}\!\left[
        \mathcal{L}(\mathsf{param},x_{<i})
        \text{ is not $\epsilon$-accurate}
    \right]
    \ge\frac{3k\ell}{8m}
    \ge\frac32\delta>\delta.
    \label{eq:qii_lb_average_failure}
\end{align}
Here, $i$ is independent of $\alpha$ and the machine's
randomness, and accuracy is measured against the conditional
distribution under the fixed machine $T_\alpha$.
This inequality implies that there exists a fixed
$\alpha\in\{0,1\}^k$ for which the learner's failure probability
exceeds $\delta$. Since $|T_\alpha|\le s$ and all the machines
have the common time bound $t$, the required guarantee then
fails at length $m$. Consequently, any valid round complexity
must satisfy
\begin{align}
    r(s,\epsilon^{-1},\delta^{-1})
    >m=\Omega\!\left(\frac{s}{\epsilon^2\delta}\right).
\end{align}

To prove~\cref{eq:qii_lb_average_failure}, it suffices to show
that, for every $j\in[k]$ and $u\in[\ell]$, writing
$i=(j-1)\ell+u$, we have
\begin{align}
    \Pr_{\substack{
        \alpha\leftarrow\{0,1\}^k\\
        (x_1,\ldots,x_m)\leftarrow T_\alpha
    }}\!\left[
        \left|
            \Pr[1\leftarrow\mathcal{L}(\mathsf{param},x_{<i})]
            -p_{\alpha_j}
        \right|>\epsilon
    \right]\ge\frac38.
    \label{eq:qii_lb_each_round}
\end{align}
Indeed, a uniformly chosen round lies in the first $k\ell$
rounds with probability $k\ell/m$. Applying this estimate
to each of those rounds gives
\cref{eq:qii_lb_average_failure}.

We now prove~\cref{eq:qii_lb_each_round}.
Fix $j\in[k]$, $u\in[\ell]$, and $i=(j-1)\ell+u$.
In the following comparison, probabilities are taken in
the experiment
\begin{align*}
    \alpha\leftarrow\{0,1\}^k,
    \qquad
    (X_1,\ldots,X_m)\leftarrow T_\alpha.
\end{align*}
We compare the distribution of the observed prefix
conditioned on $\alpha_j=0$ with the one conditioned on
$\alpha_j=1$. We will show that
\begin{align}
    &\frac12\sum_{x_{<i}\in\{0,1\}^{i-1}}
        \left|
            \Pr[X_{<i}=x_{<i}\mid\alpha_j=0]
            -\Pr[X_{<i}=x_{<i}\mid\alpha_j=1]
        \right|
    \le\frac14.
    \label{eq:qii_lb_history_distance}
\end{align}

First, we explain why~\cref{eq:qii_lb_history_distance}
implies~\cref{eq:qii_lb_each_round}.
Consider the rule that outputs $1$ if
$\Pr[1\leftarrow\mathcal{L}(\mathsf{param},x_{<i})]>1/2$
and $0$ otherwise.
By~\cref{eq:qii_lb_bias}, this rule correctly determines
$\alpha_j$ whenever the prediction is accurate.
Its error probability satisfies
\begin{align}
    &\frac12
    \Pr_{\substack{
        \alpha\leftarrow\{0,1\}^k\\
        (x_1,\ldots,x_m)\leftarrow T_\alpha
    }}\!\left[
        \Pr[1\leftarrow\mathcal{L}(\mathsf{param},x_{<i})]
        >\frac12
        \,\middle|\,\alpha_j=0
    \right]
        \nonumber\\
    &\quad+\frac12
    \Pr_{\substack{
        \alpha\leftarrow\{0,1\}^k\\
        (x_1,\ldots,x_m)\leftarrow T_\alpha
    }}\!\left[
        \Pr[1\leftarrow\mathcal{L}(\mathsf{param},x_{<i})]
        \le\frac12
        \,\middle|\,\alpha_j=1
    \right]
        \nonumber\\
    &\quad=\frac12-\frac12\Biggl(
        \Pr_{\substack{
            \alpha\leftarrow\{0,1\}^k\\
            (x_1,\ldots,x_m)\leftarrow T_\alpha
        }}\!\left[
            \Pr[1\leftarrow\mathcal{L}(\mathsf{param},x_{<i})]
            >\frac12
            \,\middle|\,\alpha_j=1
        \right]
        \nonumber\\
    &\hspace{95pt}
        -\Pr_{\substack{
            \alpha\leftarrow\{0,1\}^k\\
            (x_1,\ldots,x_m)\leftarrow T_\alpha
        }}\!\left[
            \Pr[1\leftarrow\mathcal{L}(\mathsf{param},x_{<i})]
            >\frac12
            \,\middle|\,\alpha_j=0
        \right]
    \Biggr)
        \nonumber\\
    &\quad\ge\frac12-\frac12\cdot\frac14
        =\frac38.
    \label{eq:qii_lb_decision_error}
\end{align}
The inequality follows from~\cref{eq:qii_lb_history_distance}.
An incorrect decision implies failure of~\cref{eq:qii_lb_success},
which proves~\cref{eq:qii_lb_each_round}.
This argument does not require computing the learner's
output probability.

To prove~\cref{eq:qii_lb_history_distance}, we bound the
relative entropy between the distributions of $X_{<i}$
conditioned on $\alpha_j=0$ and $\alpha_j=1$:
\begin{align}
    &\sum_{x_{<i}\in\{0,1\}^{i-1}}
        \Pr[X_{<i}=x_{<i}\mid\alpha_j=0]
        \ln\frac{
            \Pr[X_{<i}=x_{<i}\mid\alpha_j=0]
        }{
            \Pr[X_{<i}=x_{<i}\mid\alpha_j=1]
        }
        \nonumber\\
    &\quad=(u-1)\left(
        p_0\ln\frac{p_0}{p_1}
        +(1-p_0)\ln\frac{1-p_0}{1-p_1}
    \right)
        \nonumber\\
    &\quad\le (u-1)\frac{(p_0-p_1)^2}{p_1(1-p_1)}
        \nonumber\\
    &\quad\le256(u-1)\epsilon^2
        \le\frac18.
    \label{eq:qii_lb_relative_entropy}
\end{align}
For the equality, the earlier blocks have the same distribution
under both conditions and are independent of the current block.
Thus, only the $u-1$ observed bits of the current block contribute.
The first inequality follows from $\ln z\le z-1$.
For the second, we used $p_0,p_1\in[1/4,3/4]$ and
$|p_0-p_1|\le6\epsilon$.
The final inequality follows from $u-1<\ell$ and the choice
of $\ell$ in~\cref{eq:qii_lb_parameters}.
Pinsker's inequality gives~\cref{eq:qii_lb_history_distance}.
This proves~\cref{eq:qii_lb_each_round} and hence
\cref{eq:qii_lb_average_failure}.

Finally, the same lower bound applies to universal quantum
inductive inference. A classical Turing machine can output
a circuit generating these independent bits on one-qubit
registers. With computational-basis instruments, the
post-measurement prefix is completely determined by
$x_{<i}$ and provides no additional information.
An $\epsilon$-accurate joint prediction must also predict
the next outcome with error at most $\epsilon$, by
monotonicity of trace distance under discarding registers.
For the fixed finite gate set, it suffices to prepare the
registers independently so that the two outcome probabilities
lie in the intervals in~\cref{eq:qii_lb_bias}.
The classical argument above uses only these intervals and
independence, so it applies without requiring exact
implementation of the chosen rational probabilities.
A fixed compilation procedure preserves the description-length
estimate, and we can again choose a common time bound.
This completes the proof.
\end{proof}

\end{document}